\keywords{inductive theorem proving, arithmetical theories, proof theory}
\begin{document}

\title{Some observations on the logical foundations of inductive theorem proving}

\author[S.~Hetzl]{Stefan Hetzl}
\address{
 Institute of Discrete Mathematics and Geometry\\
 Vienna University of Technology\\
 Wiedner Hauptstra\ss e~8--10\\
 AT-1040 Vienna\\
 Austria
}
\email{stefan.hetzl@tuwien.ac.at}
\thanks{Supported by the Vienna Science and Technology Fund (WWTF) project VRG12-004}

\author[T.~L.~Wong]{Tin Lok Wong}
\address{
 Kurt G\"odel Research Center for Mathematical Logic\\
 University of Vienna\\
 W\"ahringer Stra\ss e~25\\
 AT-1090 Vienna\\
 Austria
}
\email{tin.lok.wong@univie.ac.at}
\thanks{Supported by Austrian Science Fund (FWF) project P24654-N25}

\begin{abstract}
In this paper we study the logical foundations of automated inductive theorem proving. To that aim we
first develop a theoretical model that is centered around the difficulty
of finding induction axioms which are sufficient for proving a goal.

Based on this model, we then analyze the following aspects: the choice of a proof shape,
the choice of an induction rule and the language of the induction formula. In particular, using model-theoretic
techniques, we clarify the relationship between notions of inductiveness that have been considered in the
literature on automated inductive theorem proving.

This is a corrected version of the paper \url{https://arxiv.org/abs/1704.01930v5} published originally on Nov.~16, 2017.
\end{abstract}

\maketitle

\section{Introduction}
Theories of (natural number) arithmetic have been of great interest
 to mathematical logicians since the beginning of the 20th century.
At that time, work was mainly devoted to questions of consistency.
Deep connections that link
 the definability-theoretic aspects of arithmetic and
 theories of computation
  were later discovered and fruitfully developed.
In the 1970s and 80s, relevant model-theoretic techniques
 became mature enough for
  establishing mathematically interesting unprovability results.
While the model theory of arithmetic evolved into a subject of its own,
 the connections with theories of computation found their way
  down to the complexity-theoretic level.
Nowadays, theories of arithmetic
 have penetrated almost every branch of mathematical logic,
  including mathematical philosophy.

Rather independently of this work, the subject of inductive theorem proving
developed in computer science. In this tradition, the central aim is
to develop algorithms that find proofs by induction and to implement these algorithms
efficiently.
This subject is characterized by a great variety of different methods (and systems
implementing these methods), for example, rippling~\cite{Bundy05Rippling}, theory
exploration~\cite{Claessen13Automating}, integration
into a superposition prover~\cite{Kersani13Combining,Cruanes15Extending,WandXXAutomatic},
recursion analysis~\cite{Boyer79Computational,Stevens88Rational,Bundy89Rational}, proof by consistency~\cite{Comon01Inductionless},
and cyclic proofs~\cite{Brotherston11Sequent,Brotherston12Generic}.
Recently, a benchmark suite for inductive theorem proving
has been presented~\cite{Claessen15TIP}.

The aim of our work is to apply methods and results from the former tradition in mathematical logic
to the tradition in computer science.
The main advantage of this combination is that it is possible to obtain {\em unprovability} results
(by model-theoretic means) where previously in the literature on inductive theorem proving
only empirical observations could be made based on the failure of {\em a specific algorithm} to find a proof.

A first obstacle in realizing such an application is that the above mentioned approaches to inductive
theorem proving are quite different. This makes it difficult to provide a common theoretical basis.
However, the final result
is typically, in one way or another, explicitly or implicitly, a proof of
the goal from instances of an induction scheme and basic axioms from a background theory. We take this
observation as a guiding principle for the development of a theoretical model of inductive theorem
proving in Section~\ref{sec:model}. This is the main conceptual contribution of this paper.

In terms of technical contributions we analyze the following aspects of methods for inductive theorem
proving:
(a) the choice of a proof shape,
(b) the choice of the induction rule, and
(c) the language of the induction formula.
Despite the differences between the existing methods for inductive theorem proving, these aspects play a role
in most of them. Mathematically, the technical contributions of this paper are:
we show that the equivalence proof shape for inductive proofs (see Section~\ref{sec:proof_shapes}) is complete
but that the uniform proof shape is not. We show that Walther's method for comparing induction
axioms~\cite{inproc:computindax} is not complete (Section~\ref{sec:walther}). We make the (possibly surprising)
observation that the weakest induction axiom for proving an induction axiom may not be that induction axiom itself (Proposition~\ref{prop:weaker-indn}).
We establish the strictness of the implications between several frequently used notions of
inductiveness (Section~\ref{sec:comparing_notions}). We also include the result, due to Kaye, that $\PAminus$
proves the least number principle in the language of rings (Theorem~\ref{thm:PAminus-IOpen_r}).

This paper is structured as follows: after developing our model of inductive theorem proving in Section~\ref{sec:model},
we study the completeness of proof shapes in Section~\ref{sec:proof_shapes}. In Section~\ref{sec:forms_of_induction}
we describe different formulations of induction and study their equivalence in both a general and a quantifier-free context.
In Section~\ref{sec:non-closure_of_cuts} we establish non-closure properties of $\PAminus$-cuts that will
be used in the rest of the paper. In Section~\ref{sec:comparing_solutions} we investigate
two ways of comparing different induction formulas that prove the same theorem. In Section~\ref{sec:comparing_notions}, which is central to
this paper, we compare different notions of inductiveness. (A formula is called inductive
in the sense of a particular induction rule if it satisfies the hypotheses of this rule provably in the base theory.) In Section~\ref{sec:order} we study
the effect of the choice of the language for the induction formula using an example involving the $<$-relation.

This is a corrected version of the paper \url{https://arxiv.org/abs/1704.01930v5} published originally on Nov.~16, 2017.
As Emil Je\v{r}\'{a}bek pointed out, two questions posed in the original version already had answers in the literature,
 and Shepherdson almost proved the main result of Section~\ref{sec:order}.
The answers to these questions are incorporated in this paper as Theorem~\ref{thm.johannsen} and Theorem~\ref{thm.IAtomic=IOpen},
 and we added attribution to Shepherdson's work in Section~\ref{sec:order}.

\section{A theoretical model of inductive theorem proving}\label{sec:model}

This section is devoted to developing a theoretical model of inductive theorem proving. This is
necessary in order to provide a conceptually and formally clear basis for relating mathematical results
to algorithms in automated deduction.

\subsection{First-order theories of the natural numbers}

In this paper we restrict our attention to first-order theories of the natural numbers (as opposed
to, e.g., general inductive data types which is a more common choice in inductive theorem proving). This is
a pragmatic choice which is motivated by the following reasons: (1)~the mathematical study of these theories
is very well developed, and (2)~the central problems of inductive theorem proving also surface in
this restricted setting. To the mathematical logician this may not seem a restriction since coding
allows us to describe an arbitrary inductive data type as a set of natural numbers. However, the increase of the syntactic
complexity of formulas and proofs based on coding renders this approach unfit for practical applications,
 where bounded quantifiers are usually as costly as unbounded quantifiers.

Throughout this paper, we work over the base theory~$\PAminus$.
It is finitely axiomatized, induction-free, and a fragment of Peano arithmetic ($\PA$).
Robinson's~$Q$ is another commonly used base theory for arithmetic.
There are two main reasons for choosing $\PAminus$ instead of~$Q$
 as the base theory in this paper.
First, if one adopts~$Q$ as the base theory,
 then even a slight change in the definitions
  can make a significant difference in the results.
This is due to the fact that many simple arithmetic properties,
 for example,
 \begin{equation*}
  \fa{x,y}{(x<y+1\nsc x\leq y)},
 \end{equation*} are not provable in~$Q$;
 cf.~Lemma~\ref{lem:PA-}.
We do not want such details to distract us.
Second, we can extract useful information about notions of inductiveness
 in Section~\ref{sec:comparing_notions} over~$\PAminus$.
On the contrary, Robinson's~$Q$ is too weak
 to prove any non-trivial implication between our notions of inductiveness.
More specifically,
 Proposition~\ref{prop:indrel}\partref{prop:indrel:n+1_implies_<}
                          and~\partref{part:indrel/ind'=wind}
  become false if one changes the base theory to~$Q$.
Nevertheless, many other results in this paper, for example,
  Theorem~\ref{thm:Th(PA)=indn}, Proposition~\ref{prop:unif-ncompl}
  and Lemma~\ref{lem:+nx},
 remain true over~$Q$.
\begin{defi}
Denote the language $\{0,1,{+},{\times},{<}\}$ for ordered rings
 by \defm{$\Lor$}.
Abusing notation, if $n\in\IN$, then we denote the closed $\Lor$~term
 \begin{equation*}
  (\cdots((0+\underbrace{1)+1)+\dots+1)}_{\text{$n$-many $1$'s}}
 \end{equation*} by~$n$.
Let \defm{$\PAminus$} denote
 the theory of the non-negative parts of discretely ordered rings.
We axiomatize~$\PAminus$ by the following.
\begin{enumerate}[leftmargin=9mm]
  \renewcommand{\theenumi}{P\arabic{enumi}}
\item $\fa x{\fa y{\fa z{\bigl((x+y)+z=x+(y+z)\bigr)}}}$.
\item $\fa x{\fa y{(x+y=y+x)}}$.
\item $\fa x{\fa y{\fa z{\bigl((x\times y)\times z=x\times (y\times z)\bigr)}}}$.
\item $\fa x{\fa y{(x\times y=y\times x)}}$.
\item $\fa x{\fa y{\fa z{(x\times(y+z)=(x\times y)+(x\times z))}}}$.\label{ax:PA-/distrib}
\item $\fa x{(x+0=x)}$.\label{ax:PA-/+0}
\item $\fa x{(x\times0=0)}$.
\item $\fa x{(x\times1=x)}$.\label{ax:PA-/x1}
\item $\fa x{\fa y{\fa z{(x<y\wedge y<z\then x<z)}}}$.\label{ax:PA-/<trans}
\item $\fa x{\neg x<x}$.\label{ax:PA-/<irref}
\item $\fa x{\fa y{(x<y\vee x=y\vee x>y)}}$.\label{ax:PA-/<linear}
\item $\fa x{\fa y{\fa z{(x<y\then x+z<y+z)}}}$.\label{ax:PA-/<+}
\item $\fa x{\fa y{\fa z{(z\not=0\wedge x<y\then x\times z<y\times z)}}}$.
\item $\fa x{\fa y{(x<y\nsc\ex z{((x+z)+1=y)})}}$.\label{ax:PA-/<}
\item $0<1\wedge\fa x{(x>0\then x\geq1)}$.\label{ax:PA-/discrete}
\item $\fa x{(x\geq0)}$.\label{ax:PA-/nneg}
\end{enumerate}
Here \defm{$x\leq y$} is an abbreviation for $x<y\vee x=y$.
\end{defi}

\begin{defi}
If $\theta(x,\bar z)$ is an $\Lor$~formula,
 then the \defm{induction axiom for~$\theta$} with respect to the variable~$x$,
   denoted by \defm{$\ind_x\theta$} or simply \defm{$\ind\theta$},
  is the sentence
   \begin{equation*}
    \fa{\bar z}{\bigl(
     \theta(0,\bar z)
     \wedge\fa x{\bigl(\theta(x,\bar z)\then\theta(x+1,\bar z)\bigr)}
     \then\fa x{\theta(x,\bar z)}
    \bigr)}.
   \end{equation*}
Define
\begin{align*}
\IOpen & =\PAminus+\{\ind_x\theta:
 \text{$\theta(x,\bar z)$ is a quantifier-free $\Lor$~formula}
\},\\
\ISigma_k & = \PAminus+\{\ind_x\theta:
  \text{$\theta(x,\bar z)$ is an $\Lor$~formula of complexity $\Sigma_k$}
\},\ \text{and}\\
\PA & = \PAminus+\{\ind_x\theta:
  \text{$\theta(x,\bar z)$ is an $\Lor$~formula}
\}.
\end{align*}
\end{defi}

\subsection{The necessity of non-analyticity}\label{sec.nec_non_an}

It has often been observed in the literature on inductive theorem proving
that ``within first-order theories that include induction, the cut rule
cannot be eliminated''~\cite[p.~55]{Bundy05Rippling}. This observation
will provide a crucial foundation for the justification of our model
of inductive theorem proving. Therefore we would like to discuss it here
and, in the process, make it more precise and show how to prove it.

Since we want to speak about cut-elimination we need to speak about the sequent calculus. Which variant of the sequent calculus we use is not of importance for the points discussed here; for the sake of precision let us
fix it to be the calculus {\LK} of~\cite{Buss98Introduction}. A sequent is denoted as $\Gamma \seq \Delta$.
For a theory $T$ and a formula $\varphi$, $T\proves \varphi$ if and only if there is a finite
set $T_0\subseteq T$ and an {\LK}-proof of the sequent $T_0 \seq \varphi$.
Gentzen's cut-elimination theorem states that:

\begin{thm}
If there is an {\LK}-proof of a sequent $\Gamma \seq \Delta$, then there is a cut-free {\LK}-proof of $\Gamma \seq \Delta$.
\end{thm}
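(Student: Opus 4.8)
The plan is to prove cut-elimination by the standard Gentzen-style syntactic induction on the structure of {\LK}-derivations. The first observation is that it suffices to eliminate a single \emph{topmost} cut: if we can always remove a cut whose two immediate subderivations are already cut-free, then by repeatedly applying this to an uppermost cut we eventually obtain a fully cut-free proof. So I would isolate the core task as follows: given a derivation whose last inference is a cut on a formula $A$, with premises $\Gamma \seq \Delta, A$ and $A, \Gamma' \seq \Delta'$ both derived without cuts, construct a cut-free derivation of $\Gamma, \Gamma' \seq \Delta, \Delta'$.

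The two induction parameters are the \emph{degree} of the cut, namely the number of connectives and quantifiers in $A$, and the combined height of the two premise derivations. I would run an outer induction on the degree together with an inner induction on the height, and split the argument according to whether $A$ is principal in the last inference of both premises. The axiom cases, where one premise is an initial sequent $A \seq A$, are immediate, since the cut is then simply absorbed. In the \emph{commutative} cases, where $A$ is not principal in at least one premise (for instance that premise ends in a logical or structural rule acting on some other formula), I would permute the cut upward past that last inference: this yields cuts on the same formula $A$ but over strictly shorter subderivations, so the inner induction hypothesis applies.

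The \emph{principal} case, where $A$ is introduced as the principal formula on the right of the left premise and on the left of the right premise, is where the degree strictly drops. Here I would replace the cut on $A$ by cuts on its immediate subformulas: a cut on $B$ (or $C$) when $A = B \wedge C$, a cut on an instance $B(t)$ when $A = \fa{x}{B(x)}$, and dually for the remaining connectives and for the existential and disjunctive cases. Each resulting cut formula has strictly smaller degree, so the outer induction hypothesis disposes of it.

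I expect the main obstacle to be the \emph{contraction} case. If a premise ends in a contraction of two copies of $A$, then naively permuting the cut upward leaves two occurrences of $A$ still to be eliminated, and neither the degree nor the height has decreased, so the induction stalls. This is exactly the difficulty that led Gentzen to the \emph{mix} rule, a single inference that cuts away all copies of $A$ on both sides simultaneously and thereby absorbs the problematic contractions. I would therefore recast the target as mix-elimination, prove that version by the double induction above, and recover ordinary cut-elimination as an immediate corollary; alternatively one adjusts the induction measure to also track the number of occurrences of the cut formula, after which the contraction case reduces to premises already handled by the inner hypothesis.
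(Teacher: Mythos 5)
The paper itself gives no proof of this theorem: it is invoked as Gentzen's classical cut-elimination theorem for the calculus {\LK} of Buss's handbook chapter, so there is no internal argument to compare yours against. Your sketch is, in substance, the standard Gentzen argument from that literature, and it is sound: the reduction to a single topmost cut with cut-free premises, the double induction with cut degree as the outer and derivation height as the inner parameter, the split into axiom, commutative, and principal cases, and---most importantly---the correct identification of contraction as the point where the naive induction stalls, together with the standard remedies (Gentzen's mix rule, or a measure tracking the number of occurrences of the cut formula). That last point is the genuine content of the proof, and sketches that omit it are not repairable by routine work. Two points you leave implicit would need to be spelled out in a complete write-up: in the principal quantifier case the passage from the premise with eigenvariable $a$ to a premise about the instance $B(t)$ requires a substitution lemma, and the eigenvariable condition is precisely what makes that substitution legitimate; and the case where the cut formula is introduced by weakening deserves separate listing, since there the cut vanishes in favour of weakenings rather than being replaced by smaller cuts. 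Neither affects the correctness of the outline, which matches the proof the paper's citation points to.
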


An important feature of cut-free proofs is that they have the subformula property. In the context
of first-order logic this means that every formula that occurs in a cut-free proof of the sequent
$\Gamma\seq\Delta$ is an \emph{instance} of a subformula of a formula that occurs in $\Gamma\seq\Delta$. A
proof that has the subformula property is also called {\em analytic}.

Since the cut-elimination theorem considers arbitrary first-order sequents, it can also be applied to
theories containing induction axioms:

\begin{cor}
If $\PA\proves \varphi$ then there is a finite $A_0\subseteq \PA$ and a cut-free {\LK}-proof of the sequent
$A_0\seq \varphi$.
\end{cor}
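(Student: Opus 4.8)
The plan is to derive this directly from the definition of provability from a theory together with Gentzen's cut-elimination theorem (the \LK{}-version stated immediately above), without inspecting the internal structure of any proof. No induction on derivations or analysis of the axioms of $\PA$ is needed; the argument is purely a chaining of two facts.

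First I would unfold the definition of $\PA\proves\varphi$ recorded just before the cut-elimination theorem: since $\PA\proves\varphi$ holds, there is by definition a finite set $A_0\subseteq\PA$ and an \LK{}-proof of the sequent $A_0\seq\varphi$. This already produces the finite axiom set $A_0$ demanded by the conclusion, so the only remaining task is to make the witnessing proof cut-free while keeping its end-sequent intact.

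Next I would apply Gentzen's cut-elimination theorem to this proof, instantiating the generic sequent $\Gamma\seq\Delta$ by taking $\Gamma$ to be $A_0$ and $\Delta$ to be the singleton succedent $\varphi$. The theorem guarantees a cut-free \LK{}-proof of the \emph{same} sequent $A_0\seq\varphi$, which is exactly the desired conclusion. The point worth making explicit, and the only place where one must be slightly attentive, is that cut-elimination preserves the end-sequent verbatim: it neither alters the succedent nor enlarges the antecedent, so the very same finite $A_0\subseteq\PA$ continues to suffice and no further axioms of $\PA$ are introduced by the transformation. There is no genuine obstacle here; the result is an immediate corollary precisely because the cut-elimination theorem is stated for arbitrary first-order sequents and hence applies unchanged to sequents whose antecedent happens to consist of induction axioms.
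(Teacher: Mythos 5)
Your proof is correct and is exactly the argument the paper intends: unfold the definition of $\PA\proves\varphi$ (given just before the cut-elimination theorem) to obtain a finite $A_0\subseteq\PA$ with an \LK-proof of $A_0\seq\varphi$, then apply Gentzen's theorem to that sequent. The paper treats this as immediate (noting only that cut-elimination applies to arbitrary first-order sequents, including those containing induction axioms), so your slightly more explicit write-up matches its approach.
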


So we see that {\em in the sense of the above corollary}, inductive theories do allow cut-elimination.
{\em However}, $A_0$ may contain induction axioms on induction formulas which are not instances of
subformulas of $\varphi$, i.e., non-analytic induction formulas.

The {\em necessity} of non-analytic induction formulas follows, for example, from G\"odel's second
incompleteness theorem: recall that,
by arithmetizing the syntax of formulas and proofs, one can formulate the consistency of a recursive arithmetical
theory as an $\Lor$ sentence. More specifically, for all $k\in\IN$ there is a $\Pi_1$~sentence $\Con(\ISigma_k)$ expressing the consistency of $\ISigma_k$,
see for example~\cite{Buss98First}. We then have:

\begin{thm}\label{thm.on.ISigmak}
For all $k\in\IN$: $\PA\proves \Con(\ISigma_k)$ but $\ISigma_k \nproves \Con(\ISigma_k)$.
\end{thm}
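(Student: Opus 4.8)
The two halves of the statement have rather different characters, so I would treat them separately.

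For the unprovability half, $\ISigma_k \nproves \Con(\ISigma_k)$, the plan is to invoke G\"odel's second incompleteness theorem. The theory $\ISigma_k$ is recursively axiomatized and consistent, since the standard model $\IN$ satisfies $\PAminus$ together with every induction instance. For $k\geq1$ it extends $\ISigma_1$, which already suffices to verify the Hilbert--Bernays--L\"ob derivability conditions for the canonical provability predicate underlying the $\Pi_1$~sentence $\Con(\ISigma_k)$; see~\cite{Buss98First}. The second incompleteness theorem then yields $\ISigma_k\nproves\Con(\ISigma_k)$ at once. The boundary case $k=0$ would be handled by the same theorem, using the formalization of $\Con$ indicated in~\cite{Buss98First}.

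For the provability half, $\PA\proves\Con(\ISigma_k)$, I would argue via a partial truth (satisfaction) predicate and a reflection principle. The key steps are: (i)~working inside $\PA$, define a satisfaction predicate $\mathrm{Sat}(\varphi,a)$ for $\Lor$-formulas $\varphi$ of bounded arithmetical complexity under assignments $a$, and prove in $\PA$ that it obeys the compositional Tarski conditions on this class; (ii)~prove in $\PA$ the uniform reflection statement ``$\ISigma_k\proves\varphi\then\mathrm{Sat}(\ulcorner\varphi\urcorner)$'' for $\varphi$ in the relevant class, by (formalized) induction on the length of the $\ISigma_k$-derivation, checking that the logical axioms are satisfied, that the inference rules of $\LK$ preserve satisfaction, and that each $\ISigma_k$-axiom is satisfied; (iii)~conclude that, since $\mathrm{Sat}(\ulcorner 0=1\urcorner)$ is refutable in $\PA$, the sentence $0=1$ is not $\ISigma_k$-derivable, which is precisely $\Con(\ISigma_k)$. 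The verification in step~(ii) that an induction instance $\ind_x\theta$ with $\theta$ of complexity $\Sigma_k$ is satisfied is where the extra strength of $\PA$ over $\ISigma_k$ is genuinely used: granting $\mathrm{Sat}(\ulcorner\theta(0)\urcorner)$ and $\fa x(\mathrm{Sat}(\ulcorner\theta(\dot x)\urcorner)\then\mathrm{Sat}(\ulcorner\theta(\dot x+1)\urcorner))$, one obtains $\fa x\,\mathrm{Sat}(\ulcorner\theta(\dot x)\urcorner)$ by applying induction to the formula $\mathrm{Sat}(\ulcorner\theta(\dot x)\urcorner)$ of~$x$, which is available in $\PA$ because $\PA$ has unrestricted induction. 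Since $\PA$ proves $\Con(\ISigma_k)$ this way (indeed $\ISigma_{k+1}$ already suffices), the claim follows.

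The main obstacle is the construction in step~(i) together with the complexity bookkeeping in step~(ii). An $\ISigma_k$-derivation of a contradiction may contain cuts on formulas of unbounded quantifier complexity, so a single partial satisfaction predicate cannot literally interpret every formula occurring in the derivation. The standard remedy is to control this complexity: one first passes (inside $\PA$) to a free-cut-free derivation, so that all formulas are instances of subformulas of the fixed finite axiom set $A_0\subseteq\ISigma_k$ actually used, whose complexity is bounded (around $\Pi_{k+1}$), and only then applies the correspondingly bounded partial truth predicate. Carrying out this partial cut-reduction and the soundness verification within $\PA$, while keeping the arithmetical complexity of $\mathrm{Sat}$ under control, is the genuinely technical heart of the argument; the truth-predicate method is preferable here precisely because it replaces an appeal to full (formalized) cut-elimination by a direct rule-by-rule soundness check.
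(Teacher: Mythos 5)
Your proof is correct and takes essentially the same route as the paper: the paper gives no proof of its own but states the theorem as a classical result with a pointer to~\cite{Buss98First}, where precisely your combination of ingredients is carried out --- G\"odel's second incompleteness theorem for $\ISigma_k\nproves\Con(\ISigma_k)$, and a partial satisfaction (truth) predicate together with free-cut elimination for $\PA\proves\Con(\ISigma_k)$ (with the refinement, which you also note, that $\ISigma_{k+1}$ already suffices). In particular, your diagnosis of the cut-complexity obstacle and its resolution via free-cut-free derivations, whose formulas are instances of subformulas of the bounded-complexity axioms, is exactly the standard treatment that the paper implicitly invokes by citation.
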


Note that this result embodies a very strong non-analyticity requirement: given any $k\geq 1$,
in order to prove $\Con(\ISigma_k)$ not only do we need a non-analytic induction formula,
but we need one with more than $k$ quantifier alternations even though $\Con(\ISigma_k)$ is only a $\Pi_1$~sentence.

This theorem entails the necessity of cut in the following sense. First, formulate induction as the inference rule
\[
\infer[\mathrm{Ind}]{\Gamma\seq\Delta, \forall x\, \psi(x)}{
  \Gamma\seq\Delta,\psi(0)
  &
  \Gamma, \psi(x) \seq \Delta, \psi(s(x))
}
\]
with the usual side condition and $\psi$ being an arbitrary formula. Observe that $\PA\proves \varphi$ if and only if
there is an $\LK+\Ind$-proof of $\PAminus\seq \varphi$.
Now, in contrast to $\LK$, the calculus $\LK+\Ind$ does not have cut-elimination:

\begin{cor}
There is a formula $\varphi$ such that $\PAminus\seq \varphi$ has an $\LK +\Ind$-proof but no cut-free $\LK+\Ind$-proof.
\end{cor}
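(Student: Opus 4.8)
The plan is to take $\varphi$ to be a consistency statement $\Con(\ISigma_k)$ for a suitably chosen $k$, and to derive a contradiction from the assumption that $\PAminus\seq\Con(\ISigma_k)$ has a cut-free $\LK+\Ind$-proof. The existence of \emph{some} $\LK+\Ind$-proof is immediate: by the first half of Theorem~\ref{thm.on.ISigmak} we have $\PA\proves\Con(\ISigma_k)$ for every $k$, so by the observation that $\PA\proves\varphi$ iff $\PAminus\seq\varphi$ has an $\LK+\Ind$-proof, there is an $\LK+\Ind$-proof of $\PAminus\seq\Con(\ISigma_k)$. All the content therefore lies in ruling out a cut-free such proof.

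The heart of the argument is an analysis of the shape of a hypothetical cut-free $\LK+\Ind$-proof of $\PAminus\seq\Con(\ISigma_k)$. The point is that the subformula property survives the addition of the $\Ind$ rule, once one notes that the induction formula $\psi(x)$ occurring in an instance of $\Ind$ is a subformula of the principal formula $\forall x\,\psi(x)$ in the conclusion of that inference. Hence in a cut-free proof every induction formula is an instance of a subformula of a formula in the end-sequent, i.e.\ of a $\PAminus$ axiom or of $\Con(\ISigma_k)$. Since substituting terms for free variables does not increase quantifier complexity, and since $\PAminus$ is a fixed finite theory while $\Con(\ISigma_k)$ is a $\Pi_1$ sentence \emph{uniformly} in $k$, all such induction formulas have complexity bounded by a fixed $m$ that does not depend on $k$; this uniformity is what dissolves the apparent circularity between the index $k$ and the bound.

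I would then convert the cut-free proof into an ordinary derivation in a fragment of $\PA$. Replacing each $\Ind$ inference on a formula $\psi$ by an appeal to the corresponding induction axiom $\ind\psi$ turns the cut-free proof into a derivation of $\Con(\ISigma_k)$ from $\PAminus$ together with finitely many axioms $\ind\psi$, each on a $\psi$ of complexity at most the fixed $m$ of the previous step. All these axioms belong to $\ISigma_j$ for a fixed $j$ depending only on $m$, so $\ISigma_j\proves\Con(\ISigma_k)$. Having chosen $k\geq j$ at the outset, we have $\ISigma_j\subseteq\ISigma_k$, whence $\ISigma_k\proves\Con(\ISigma_k)$, contradicting the second half of Theorem~\ref{thm.on.ISigmak}. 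This contradiction shows that $\PAminus\seq\Con(\ISigma_k)$ has no cut-free $\LK+\Ind$-proof, so $\varphi=\Con(\ISigma_k)$ is the required formula.

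The step I expect to be the main obstacle is the second one: formulating and justifying precisely that a cut-free $\LK+\Ind$-proof obeys a subformula property strong enough to bound the complexity of its induction formulas, i.e.\ that the principal-formula analysis of the $\Ind$ rule really confines $\psi$ to subformula instances of the end-sequent. The complexity bookkeeping (uniform $\Pi_1$-ness of $\Con(\ISigma_k)$ and the fixed contribution of $\PAminus$) and the final passage from a complexity-bounded cut-free proof back to $\ISigma_j$-provability are routine once this property is secured.
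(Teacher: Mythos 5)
Your proposal is correct and takes essentially the same route as the paper: choose $\varphi = \Con(\ISigma_k)$, use the subformula property of cut-free $\LK+\Ind$-proofs to bound the quantifier complexity of all induction formulas independently of $k$, and contradict Theorem~\ref{thm.on.ISigmak} by picking $k$ at least that bound. The paper merely makes your abstract bound concrete, observing that all formulas in such a proof would be $\Sigma_3$ and hence taking any $k\geq 3$.
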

\begin{proof}
Let $\varphi = \Con(\ISigma_k)$ for {\em any} $k\geq 3$. Then, by Theorem~\ref{thm.on.ISigmak},
$\PA\proves \Con(\ISigma_k)$ and consequently there is an
$\LK+\Ind$-proof of $\PAminus \seq \Con(\ISigma_k)$. On the other hand, suppose there would be a cut-free
$\LK+\Ind$-proof of $\PAminus \seq \Con(\ISigma_k)$.
Then, due to the subformula property, all formulas, and in particular: all induction formulas, in this proof
would be $\Sigma_3$ thus contradicting Theorem~\ref{thm.on.ISigmak}.
\end{proof}

These considerations show that the observation formulated at the beginning of
this section can be stated more precisely, and without mentioning the cut rule,
as: {\em inductive theorem proving requires the use of non-analytic induction axioms}.

Remember that the notion of analyticity is to be understood in the sense of first-order logic here.
If we would move to second- or even higher-order logic, for
 example by formulating the theorem-proving problem in terms of the second-order induction axiom,
 then, in the presence of second-order quantifiers, the meaning of the subformula property, and with it that
 of analyticity, changes.
Indeed, the non-analytic induction axioms of a $\PA$\nobreakdash-proof translate to instances (in the sense of second-order logic)
 of the second-order induction axiom.
Thus a $\PA$-proof translates into a proof which is analytic in the sense of second-order logic.
But this is merely a change in terminology, not in substance and therefore we will not consider this option.

\subsection{A computational observation}\label{sec:compass}

In this section we take a closer look at the practice of inductive theorem proving and, in particular,
 at the aspect of non-analytic induction axioms as described above.

The set of sentences which are realistic as input to an inductive theorem prover in practice is naturally
 fuzzy and we cannot claim to make it precise here.
What we have in mind are goals such as those of the
 TIP library~\cite{Claessen15TIP}.
They typically consist of a universally quantified atomic formula to be proved from some background
 axioms consisting also of universally quantified atomic formulas.
They have a size between a few dozen to several hundred symbols and proofs with a symbolic complexity
 of one or two orders of magnitude that of the goal.
In comparison to arbitrary arithmetical sentences, this is a quite restricted class.
Its consideration is justified on the one hand by the practical interest in goals of this form and, on the other hand,
 by the difficulty of the inductive theorem proving problem in general.
We also restrict our attention to input sentences $\sigma$ such that $\PA\proves \sigma$.
 The recognition of non-theorems as such, while clearly of high practical value, is a different topic
 which we do not consider in this paper.

Now, given a realistic input sentence $\sigma$, there are, in theory, the following options
 for generating (non-analytic) induction axioms which  suffice to prove $\sigma$.
(1)~One can use coding, for example as in the proof of the finite axiomatizability of $\ISigma_k$
 in~\cite[Theorem~I.2.52]{book:hajek+pudlak}.
Thus one can formulate the theorem-proving problem in $\ISigma_k$ (for a $k$ sufficiently high for practical
 purposes) as a theorem-proving problem in pure first-order logic, thereby
 eliminating the need of generating further induction axioms.
However, coding introduces an overhead which, although constant, is so high that it dominates the complexity of
 all practically relevant instances to such an extent that this approach is not useful in practice.
(2)~Similar to but even simpler than~(1), one can just fix a $k$, sufficiently high for practical purposes, so that
 we are interested only in finding proofs whose induction axioms contain at most $k$ symbols.
Since there are only a finite number of such induction axioms, the theorem-proving problem, again, becomes a pure
 first-order problem.
Just as in~(1), although this avoids the need of generating further induction axioms, it introduces
 a syntactic overhead which renders this approach useless in practice (even though coding does
 not play a role here).

So, in practice, these brute force methods are not an option.
Instead, one typically tries to find {\em simple} induction axioms, tailored to the goal, and a proof
 of the goal based on them.
Again, we cannot claim to make this notion precise here but we refer to typical solutions of the
 TIP problems as listed, e.g., in~\cite{Ireland96Productive}.
We can now make the following computational observation:
\begin{compobs}
For sentences $\sigma$ which are input to an inductive theorem prover: if it is feasible to find
 simple formulas $\theta_1,\ldots,\theta_n$ such that
 $\PAminus + \{ \ind\theta_1,\ldots,\ind\theta_n \} \proves \sigma$, then it is feasible
 to find a proof in pure first-order logic of $\sigma$ from $\PAminus + \{ \ind\theta_1,\ldots,\ind\theta_n \}$.
\end{compobs}
It is important to note here that the input sentence $\sigma$ and the induction formulas $\theta_i$ are
 restricted as discussed above.
The word ``feasible'' is to be understood in the sense of the possibility of an implementation which solves
 the task successfully on contemporary hardware in a reasonable amount of time.

The observation is based on the following grounds. First,
automated theorem proving in pure first-order logic is a subject
that has undergone continuous progress for decades and has reached a quite mature state. The regular
CASC-competition~\cite{Sutcliffe16CASC} is a testament to that as is the widespread use of
first-order theorem provers in external tools, e.g., Sledgehammer~\cite{Paulson12Three}. Inductive theorem proving,
and in particular the generation of simple non-analytic induction invariants, does not enjoy a comparable
level of stability and maturity (yet?). Secondly, and in terms of concrete evidence, we have considered
53 proofs of problems from the TIP-library. These proofs have been manually entered in the
GAPT-system~\cite{Ebner16System} by a student of the first author (for another purpose). Of these
53 proofs, GAPT's built-in first-order prover Escargot, which is a quite simple superposition prover,
could re-prove 48 based on the induction axioms alone within a timeout of $1$ minute per proof using, on
average, $3.3$ seconds per proof on standard PC hardware.

Another way to put this computational observation is the following: as described in
 Section~\ref{sec.nec_non_an}, the search space in inductive theorem proving has two dimensions:
 (1)~the search for instances of the induction scheme which prove the goal,
 and (2)~the search for a proof in pure first-order logic of the goal from these instances.
The above computational observation then states that, for input sentences and induction formulas considered
 in practice, the search space extends so much more in the first dimension than in the second that we
 can afford to disregard the second.
This observation forms an important basis for our theoretical model of inductive theorem proving.

\subsection{One induction axiom is enough}

As a final step towards our model of inductive theorem proving, we will see in this section
that we can restrict our attention to the use of a single induction axiom.
\begin{defi}
An $\Lor$ formula $\phi(x)$ is called \defm{inductive} if $\PAminus\proves \phi(0)$
and $\PAminus\proves \fa x{(\phi(x)\then\phi(x+1))}$.
\end{defi}
If $\phi(x)$ is an inductive formula,
 then $\fa x{\phi(x)}$ is trivially equivalent over~$\PAminus$ to
  the induction axiom
  \begin{equation*}
   \phi(0)\wedge\fa x{\bigl(\phi(x)\then\phi(x+1)\bigr)}
   \then\fa x{\phi(x)}.
  \end{equation*}
Therefore,
 sentences of the form $\fa x{\phi(x)}$, where $\phi(x)$ is inductive,
  can be viewed as particular instances of parameter-free induction axioms.
Conversely, as Lemma~\ref{lem:ax=fma} below shows,
 every induction axiom is equivalent over~$\PAminus$
  to an induction axiom of this form.
As a result, every induction axiom corresponds to an inductive formula, and
 the full induction scheme is equivalent to its parameter-free counterpart.
The argument is presumably well known,
 cf.~Kaye~\cite[Exercise~8.3]{book:modelPA}.

\begin{lem}\label{lem:ax=fma}
Let $\theta(x,\bar z)$ be an $\Lor$~formula.
Define $\phi(x)$ to be
 \begin{equation*}
  \fa{\bar z}{\bigl(
   \theta(0,\bar z)
   \wedge\fa y{\bigl(\theta(y,\bar z)\then\theta(y+1,\bar z)\bigr)}
   \then\theta(x,\bar z)
  \bigr)}.
 \end{equation*}
Then $\phi(x)$ is inductive and
 \begin{math}
  \PAminus\proves\ind_x\theta\nsc\fa x{\phi(x)}
 \end{math}.
\end{lem}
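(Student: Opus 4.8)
The plan is to notice that the whole statement is a theorem of pure first-order logic, so that strictly speaking none of the arithmetic axioms of~$\PAminus$ is used; the base theory enters only because a logical validity is provable in every theory. Accordingly I would establish the two claims — that $\phi(x)$ is inductive, and that $\ind_x\theta$ and $\fa x{\phi(x)}$ are equivalent — purely by manipulating quantifiers and applying modus ponens, writing $A(\bar z)$ as an abbreviation for the antecedent $\theta(0,\bar z)\wedge\fa y{(\theta(y,\bar z)\then\theta(y+1,\bar z))}$ throughout.

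For inductiveness I would first observe that $\phi(0)$ is immediate: its matrix is the implication $A(\bar z)\then\theta(0,\bar z)$, whose consequent is literally the first conjunct of its antecedent, so $\phi(0)$ is a logical validity needing no axioms. For the step $\fa x{(\phi(x)\then\phi(x+1))}$ I would argue inside the proof: fixing $x$, assuming $\phi(x)$, then fixing $\bar z$ and assuming $A(\bar z)$, instantiating $\phi(x)$ at~$\bar z$ gives $\theta(x,\bar z)$, while instantiating the second conjunct of $A(\bar z)$ at $y=x$ gives $\theta(x,\bar z)\then\theta(x+1,\bar z)$; one modus ponens yields $\theta(x+1,\bar z)$, which is exactly what $\phi(x+1)$ demands. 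This too is pure logic.

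For the equivalence the key point is that $x$ does not occur free in $A(\bar z)$, the bound variable there having been renamed to~$y$ precisely to license what follows. Written out, $\fa x{\phi(x)}$ is $\fa x{\fa{\bar z}{(A(\bar z)\then\theta(x,\bar z))}}$; commuting the two universal quantifiers and then pulling $\fa x{}$ inside across the implication — valid because $x$ is not free in $A(\bar z)$ — produces $\fa{\bar z}{(A(\bar z)\then\fa x{\theta(x,\bar z)})}$, which is $\ind_x\theta$ verbatim. Both directions of the biconditional thus reduce to standard logical equivalences. I expect no genuine obstacle here: the only things demanding care are exactly this variable-capture bookkeeping and keeping track of which occurrence of~$\theta$ is instantiated at which term. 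Since only pure logic is used, the lemma holds even over the empty theory, and in particular over $\PAminus$.
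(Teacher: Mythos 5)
Your proposal is correct and takes essentially the same route as the paper: $\phi(0)$ holds because its consequent is literally a conjunct of its antecedent, the induction step is a single instantiation plus modus ponens, and the equivalence with $\ind_x\theta$ is exactly the quantifier manipulation (commuting $\fa x{}$ with $\fa{\bar z}{}$ and moving $\fa x{}$ across an implication whose antecedent does not contain $x$ free) that the paper performs by explicit element-chasing in both directions, with no arithmetic axiom ever invoked despite the phrase ``work over~$\PAminus$''. Your added observation that the lemma is a theorem of pure first-order logic, hence valid over the empty theory, is a correct sharpening that the paper leaves implicit.
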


\begin{proof}
Let us first verify that $\phi(x)$~is inductive.
Work over~$\PAminus$.
We have $\phi(0)$ trivially.
Suppose $x_0$~is such that $\phi(x_0)$~holds, and
 take $\bar z$~such that the hypothesis in~$\phi(x_0+1)$ holds,
  i.e.,
  \begin{equation*}
   \theta(0,\bar z)
   \wedge\fa y{\bigl(\theta(y,\bar z)\then\theta(y+1,\bar z)\bigr)}.
  \end{equation*}
Then $\theta(x_0,\bar z)$ must be true since~$\phi(x_0)$, and
 thus $\theta(x_0+1,\bar z)$ is also true by the second conjunct displayed above.
This shows $\phi(x_0+1)$.

Next, we verify that
 \begin{math}
  \PAminus\proves\ind_x\theta\nsc\fa x{\phi(x)}
 \end{math}.
Work over~$\PAminus$ again.
Suppose $\fa x{\phi(x)}$.
Take $\bar z$ such that
 \begin{displaymath}
  \theta(0,\bar z)
  \wedge\fa y{\bigl(\theta(y,\bar z)\then\theta(y+1,\bar z)\bigr)}.
 \end{displaymath}
We want $\fa x{\theta(x,\bar z)}$.
So pick any~$x_0$.
We know $\phi(x_0)$~holds by hypothesis.
Thus $\theta(x_0,\bar z)$ by the definition of~$\phi(x)$, as required.

Conversely, assume $\ind_x\theta$ holds, i.e.,
 \begin{equation*}
  \fa{\bar z}{\bigl(
   \theta(0,\bar z)
   \wedge\fa y{\bigl(\theta(y,\bar z)\then\theta(y+1,\bar z)\bigr)}
   \then\fa x{\theta(x,\bar z)}
  \bigr)}.
 \end{equation*}
Let $x_0$ be arbitrary.
We want~$\phi(x_0)$.
So take any $\bar z$ such that
 \begin{math}
  \theta(0,\bar z)
  \wedge\fa y{\bigl(\theta(y,\bar z)\then\theta(y+1,\bar z)\bigr)}
 \end{math}.
Then our assumption implies~$\theta(x_0,\bar z)$,
 which is what we want.
\end{proof}

The key idea behind the lemma above is that
 inductive formulas are, in a sense, closed under definable conjunction.
In particular, any two induction axioms are implied by a third over~$\PAminus$.

\begin{prop}[Gentzen~\cite{art:combine-indn}]\label{prop:PA-merge}
For all $\Lor$~formulas $\theta_0,\theta_1$,
 there is an inductive formula $\phi(x)$ such that
  \begin{math}
   \PAminus\proves\fa x{\phi(x)}\then\ind\theta_0\wedge\ind\theta_1
  \end{math}.
\end{prop}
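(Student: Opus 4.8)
The plan is to reduce the statement to the fact that inductive formulas are closed under conjunction, using Lemma~\ref{lem:ax=fma} as the essential tool. First I would apply that lemma to each of $\theta_0$ and $\theta_1$ to obtain inductive formulas $\phi_0(x)$ and $\phi_1(x)$ with $\PAminus\proves\ind\theta_i\nsc\fa x{\phi_i(x)}$ for $i\in\{0,1\}$. This turns the two given induction axioms into assertions of the shape $\fa x{\phi_i(x)}$ with each $\phi_i$ inductive, so that it suffices to exhibit a single inductive $\phi$ whose universal closure implies both $\fa x{\phi_0(x)}$ and $\fa x{\phi_1(x)}$.

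The natural candidate is the conjunction $\phi(x):=\phi_0(x)\wedge\phi_1(x)$, and I would verify that it is inductive by checking the two defining conditions directly over $\PAminus$. The base case is immediate: from $\PAminus\proves\phi_0(0)$ and $\PAminus\proves\phi_1(0)$ we obtain $\PAminus\proves\phi_0(0)\wedge\phi_1(0)$, that is, $\PAminus\proves\phi(0)$. For the induction step, working in $\PAminus$ and assuming $\phi(x)$, both conjuncts $\phi_0(x)$ and $\phi_1(x)$ hold; applying the two step hypotheses $\fa x{(\phi_i(x)\then\phi_i(x+1))}$ separately yields $\phi_0(x+1)$ and $\phi_1(x+1)$, hence $\phi(x+1)$. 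This is precisely where the closure-under-conjunction phenomenon is used: the two induction steps proceed independently, so no interaction between $\phi_0$ and $\phi_1$ is required.

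Finally I would conclude by pure logic together with one further appeal to Lemma~\ref{lem:ax=fma}. Over $\PAminus$ the sentence $\fa x{\phi(x)}$ plainly entails $\fa x{\phi_0(x)}\wedge\fa x{\phi_1(x)}$, and by the equivalences established in the first step this conjunction is exactly $\ind\theta_0\wedge\ind\theta_1$. Hence $\PAminus\proves\fa x{\phi(x)}\then\ind\theta_0\wedge\ind\theta_1$, as desired.

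I do not expect any genuine obstacle here: the entire mathematical content is already packaged in Lemma~\ref{lem:ax=fma}, and the only thing to observe afterwards is that the class of inductive formulas is closed under conjunction. The sole point deserving a moment's care is that the lemma applies to \emph{arbitrary} $\Lor$ formulas, so that both $\phi_0$ and $\phi_1$ are legitimately inductive before the conjunction argument is carried out.
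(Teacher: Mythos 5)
Your proof is correct and follows exactly the paper's own argument: apply Lemma~\ref{lem:ax=fma} to each $\theta_i$, take the conjunction of the resulting inductive formulas, observe that inductiveness is preserved under conjunction, and conclude over $\PAminus$. The only cosmetic difference is that you spell out the base and step cases of the conjunction's inductiveness, which the paper leaves as ``easy to see.''
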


\begin{proof}
Apply Lemma~\ref{lem:ax=fma} to find
 inductive formulas $\psi_0(x)$ and $\psi_1(x)$
  such that
  \begin{math}
   \PAminus\proves\ind\theta_i\nsc\fa x{\psi_i(x)}
  \end{math} for each $i<2$.
Define $\phi(x)=\psi_0(x)\wedge\psi_1(x)$.
As $\psi_0(x)$ and $\psi_1(x)$ are both inductive,
 it is easy to see that $\phi(x)$ is inductive too.
Moreover, the sentence $\fa x{\phi(x)}$
 implies $\fa x{\psi_0(x)}\wedge\fa x{\psi_1(x)}$
 and thus also $\ind\theta_0\wedge\ind\theta_1$ over~$\PAminus$.
\end{proof}
The above proof straightforwardly generalizes to an arbitrary number of $\Lor$~formulas $\theta_1,\ldots,\theta_n$ and so we obtain:
\begin{cor}\label{cor:ITP}
Let $\sigma$ be an $\Lor$~sentence.
Then $\PA\proves\sigma$ if and only if there is an inductive formula~$\phi(x)$
 such that $\PAminus\proves\fa x{\phi(x)}\then\sigma$. \qed
\end{cor}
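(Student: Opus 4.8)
The plan is to prove the two directions separately, observing that essentially all of the work is already contained in Lemma~\ref{lem:ax=fma} and Proposition~\ref{prop:PA-merge}; what remains is to promote the two-formula merge to an $n$-ary one and to invoke the finitary character of derivations.

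For the forward direction I would start from $\PA\proves\sigma$. Since any derivation uses only finitely many axioms and $\PA=\PAminus+\{\ind_x\theta:\text{$\theta$ an $\Lor$~formula}\}$, there are finitely many $\Lor$~formulas $\theta_1,\ldots,\theta_n$ with $\PAminus\proves\ind\theta_1\wedge\dots\wedge\ind\theta_n\then\sigma$. I would then carry out the $n$-ary version of Proposition~\ref{prop:PA-merge}: apply Lemma~\ref{lem:ax=fma} to each $\theta_i$ to obtain inductive formulas $\psi_i(x)$ with $\PAminus\proves\ind\theta_i\nsc\fa x{\psi_i(x)}$, and set $\phi(x)=\psi_1(x)\wedge\dots\wedge\psi_n(x)$. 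A finite conjunction of inductive formulas is again inductive---one checks $\phi(0)$ and the step $\phi(x)\then\phi(x+1)$ conjunct by conjunct---so $\phi$ is inductive, and $\fa x{\phi(x)}$ implies each $\fa x{\psi_i(x)}$, hence each $\ind\theta_i$, over~$\PAminus$. Chaining this with the implication displayed above yields $\PAminus\proves\fa x{\phi(x)}\then\sigma$, as desired.

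For the reverse direction, which I expect to be immediate, suppose $\phi(x)$ is inductive and $\PAminus\proves\fa x{\phi(x)}\then\sigma$. By inductiveness $\PAminus\proves\phi(0)$ and $\PAminus\proves\fa x{(\phi(x)\then\phi(x+1))}$, and since $\phi$ is an $\Lor$~formula we have $\ind_x\phi\in\PA$; hence $\PA\proves\fa x{\phi(x)}$ and therefore $\PA\proves\sigma$. The only point that will require any care is the passage from $\PA\proves\sigma$ to a fixed finite list $\theta_1,\ldots,\theta_n$, which is exactly the finiteness of derivations applied to the axiomatization of $\PA$; this is what makes the iterated two-formula merge of Proposition~\ref{prop:PA-merge} sufficient. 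Everything else reduces to the routine observation that the class of inductive formulas is closed under finite conjunction.
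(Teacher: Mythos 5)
Your proof is correct and follows exactly the route the paper intends: the paper derives this corollary by noting that the merge of Proposition~\ref{prop:PA-merge} ``straightforwardly generalizes'' to finitely many formulas $\theta_1,\ldots,\theta_n$, which, combined with the finiteness of derivations, is precisely your forward direction, and the reverse direction is the same immediate observation. Nothing is missing; your spelling out of the $n$-ary merge via Lemma~\ref{lem:ax=fma} and closure of inductive formulas under finite conjunction is just the detail the paper leaves to the reader.
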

Since we only use simple syntactic operations (conjunction in Corollary~\ref{cor:ITP} and
the definition in the statement of Lemma~\ref{lem:ax=fma}) to combine many induction axioms
into one, the computational observation is preserved even when restricted to a single induction axiom.
Our theoretical model for inductive theorem proving is now the following computational problem.
\cprob{\ITP}
{A sentence $\sigma$ provable in $\PA$}
{An inductive formula $\varphi(x)$ s.t.\ $\PAminus \proves \fa x{\phi(x)} \then \sigma$}
Formally this just defines a binary relation (between $\sigma$ and $\varphi(x)$).
But of course, implicitly, we take the perspective of wanting to compute such a $\varphi(x)$ from
 a given $\sigma$.
We claim that this computational problem is a suitable theoretical model of the practice of
 inductive theorem proving.
This claim rests on the computational observation made in Section~\ref{sec:compass} and sharpened in this section.
In the rest of this paper we will study this problem, in particular
 by relating it to several of its variants.

\section{Variations of the proof shape}\label{sec:proof_shapes}

The {\ITP} problem as defined above induces a natural proof shape: the combination
of (i) a $\PAminus$-proof of the induction base, (ii) a $\PAminus$-proof of the induction
step, and (iii) a $\PAminus$-proof of $\fa x{\phi(x)}\then\sigma$. As long as we use ordinary
successor induction, there is no freedom in the first two proof obligations, there is however in
the third. In this section we will
consider two variants of {\ITP} which are obtained by modifying (iii).

The sharp-eyed reader may have noticed that
 the inductive formula $\phi(x)$ in our proof of Proposition~\ref{prop:PA-merge}
  actually makes
   \begin{math}
    \PAminus\proves\fa x{\phi(x)}\nsc\ind\theta_0\wedge\ind\theta_1
   \end{math}.
We can use this to obtain
\begin{thm}\label{thm:Th(PA)=indn}
Let $\sigma$ be an $\Lor$~sentence.
Then $\PA\proves\sigma$ if and only if there is an inductive formula~$\phi(x)$
 such that $\PAminus\proves\fa x{\phi(x)}\nsc\sigma$.
\end{thm}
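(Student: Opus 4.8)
The plan is to read off the desired equivalence from the implication already delivered by Corollary~\ref{cor:ITP}, upgrading it to a biconditional by a small disjunctive modification of the witnessing inductive formula. The backward direction is free: if some inductive $\phi(x)$ satisfies $\PAminus\proves\fa x{\phi(x)}\nsc\sigma$, then since $\nsc$ entails $\then$ this is exactly the right-hand hypothesis of Corollary~\ref{cor:ITP}, so $\PA\proves\sigma$. Concretely, inductiveness of $\phi$ says $\PAminus$ proves both the base $\phi(0)$ and the step $\fa x{(\phi(x)\then\phi(x+1))}$, and the induction axiom $\ind\phi$ lies in $\PA$, so $\PA\proves\fa x{\phi(x)}$ and hence $\PA\proves\sigma$. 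No new work is needed here.

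The forward direction carries the content. Assuming $\PA\proves\sigma$, Corollary~\ref{cor:ITP} supplies an inductive formula $\psi(x)$ with $\PAminus\proves\fa x{\psi(x)}\then\sigma$. The obstacle is that this is merely an implication: over $\PAminus$ the sentence $\sigma$ may be strictly weaker than $\fa x{\psi(x)}$ (this is precisely the phenomenon exploited elsewhere in the paper via $\PAminus$-cuts), so $\psi$ on its own cannot be expected to witness a biconditional. I would repair this by setting $\phi(x):=\psi(x)\vee\sigma$.

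I then claim $\phi$ is still inductive. For the base, $\PAminus\proves\psi(0)$ yields $\PAminus\proves\phi(0)$ outright. For the step, argue by cases on the disjunction $\psi(x)\vee\sigma$: if $\sigma$ holds then $\phi(x+1)$ holds through its right disjunct, and if $\psi(x)$ holds then $\psi(x+1)$ follows from the inductiveness of $\psi$; hence $\PAminus\proves\fa x{(\phi(x)\then\phi(x+1))}$.

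It remains to check $\PAminus\proves\fa x{\phi(x)}\nsc\sigma$. Since $\sigma$ is a sentence, the logical validity $\fa x{(\psi(x)\vee\sigma)}\nsc(\fa x{\psi(x)}\vee\sigma)$ lets me pull $\sigma$ outside the quantifier, and the resulting disjunction is $\PAminus$-equivalent to $\sigma$: the direction $(\fa x{\psi(x)}\vee\sigma)\then\sigma$ uses $\fa x{\psi(x)}\then\sigma$ to absorb the first disjunct, while $\sigma\then(\fa x{\psi(x)}\vee\sigma)$ is trivial. I expect the only point demanding attention to be the routine bookkeeping that $\phi(x)$ is a genuine $\Lor$-formula with $x$ its sole free variable after this modification; the rest is elementary propositional and quantifier reasoning.
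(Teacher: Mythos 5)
Your proposal is correct and follows essentially the same route as the paper: the paper also obtains an inductive $\psi(x)$ with $\PAminus\proves\fa x{\psi(x)}\then\sigma$ (via Proposition~\ref{prop:PA-merge}) and then passes to $\phi(x)=\neg\sigma\then\psi(x)$, which is logically the same formula as your $\psi(x)\vee\sigma$. The only differences are cosmetic — you verify the equivalence by the quantifier-shift law $\fa x{(\psi(x)\vee\sigma)}\nsc(\fa x{\psi(x)}\vee\sigma)$ and argue inductiveness by a propositional case split, while the paper argues both points by a case distinction on whether $\sigma$ holds.
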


\begin{proof}
The ``if'' direction is clear from the definition of inductive formulas.
For the ``only if'' direction,
 apply Proposition~\ref{prop:PA-merge} to find an inductive formula~$\psi(x)$
  such that $\PAminus\proves\fa x{\psi(x)}\then\sigma$.
We verify that
 \begin{equation*}
  \phi(x)\qequals\neg\sigma\then\psi(x)
 \end{equation*} has the properties we want.
First, it is clear that $\PAminus\proves\sigma\then\fa x{\phi(x)}$.
Second, work over~$\PAminus$, and suppose $\neg\sigma$.
Since $\PAminus\proves\fa x{\psi(x)}\then\sigma$,
 this implies $\ex x{\neg\psi(x)}$.
If $x_0$ is such that $\neg\psi(x_0)$, then $\neg\sigma\wedge\neg\psi(x_0)$
 and so $\neg\phi(x_0)$.
We can thus conclude $\PAminus\proves\neg\sigma\then\neg\fa x{\phi(x)}$.
Finally, the formula $\phi(x)$ is inductive
 because it is equivalent to either $x=x$ or $\psi(x)$
   depending on whether $\sigma$ holds or not, and
  both $x=x$ and~$\psi(x)$ are inductive.
\end{proof}

This is a particular case of a more general phenomenon.
As a normal form theorem in a very broad sense,
 it is perhaps reminiscent of
  the Friedman--Goldfarb--Harrington Theorem
    and its generalizations~\cite{art:FaithFalsity,inproc:Tjump=prov},
   which assert that over a sufficiently strong base theory,
    every $\Lor$~sentence is equivalent to a consistency statement.
This result motivates the consideration of the equivalence version of \ITP:
\cprob{\ITPEq}
{A sentence $\sigma$ provable in $\PA$}
{An inductive formula $\varphi(x)$ s.t.\ $\PAminus \proves \fa x{\phi(x)} \nsc \sigma$}
Theorem~\ref{thm:Th(PA)=indn} ensures that, just as {\ITP}, also {\ITPEq} is a total relation in the sense that for every input
there is an output as required. However, for a fixed $\sigma$ the $\phi(x)$'s permitted in {\ITPEq}
are a strongly restricted subset of those permitted in {\ITP}. This leads to a significant
reduction of the search space. Typically, restrictions of the search space play a crucial role
for automated theorem proving in practice. We are not aware of a technique that would exploit
this reduction of {\ITP} to {\ITPEq}. In how far this restriction to equivalent formulas
is useful in practice therefore remains unclear for the time being.

Another modification of the proof shape of {\ITP} consists of considering a universally quantified
$\sigma$, i.e., $\sigma = \fa x{\psi(x)}$, and seeking an inductive formula~$\phi(x)$ such that
\begin{math}
\PAminus\proves\fa x{\bigl(\phi(x)\then\psi(x)\bigr)}
\end{math}.
This is of relevance to computer science since it corresponds
 to the treatment of loops in correctness proofs for imperative programs
 by loop invariants as in the Hoare calculus~\cite{Apt10Verification,Bradley07Calculus}.
As one may expect, this method does not work for all $\PA$-provable formulas.

\begin{prop}\label{prop:unif-ncompl}
There is an $\Lor$~formula~$\psi(x)$ such that
 $\PA\proves\fa x{\psi(x)}$
 but no inductive formula $\phi(x)$
  makes $\PAminus\proves\fa x{\bigl(\phi(x)\then\psi(x)\bigr)}$.
\end{prop}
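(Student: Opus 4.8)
The plan is to reduce the statement to a single model-theoretic separation and then build the separating model. Concretely, it suffices to produce an $\Lor$~formula $\psi(x)$ with $\PA\proves\fa x{\psi(x)}$ together with one model $M\models\PAminus$ for which
\begin{enumerate}[leftmargin=9mm]
\item[(b)] $\psi$ defines the standard cut in $M$, i.e.\ $M\models\psi(a)$ if and only if $a\in\IN$; and
\item[(c)] every inductive formula $\phi$ \emph{overspills} in $M$, i.e.\ $M\models\phi(a)$ for some nonstandard $a$.
\end{enumerate}
Indeed, suppose some inductive $\phi$ satisfied $\PAminus\proves\fa x{(\phi(x)\then\psi(x))}$. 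Writing $\phi^M=\{a\in M:M\models\phi(a)\}$, inductiveness gives $M\models\phi(0)$ and closure of $\phi^M$ under successor, so $\IN\subseteq\phi^M$; the covering hypothesis gives $\phi^M\subseteq\psi^M$; and (b)~gives $\psi^M=\IN$. Hence $\phi^M=\IN$, so $\phi$ defines the standard cut in $M$, contradicting~(c). Thus no inductive $\phi$ covers $\psi$, which is exactly the Proposition.

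The first thing to get right is the choice of $\psi$, because the most obvious candidates are disqualified. Any formula expressing a \emph{locally recursive} predicate --- e.g.\ ``$x$ is even or odd'', ``$x$ has an integer square root below it'', or a bounded Euclidean-division statement --- is itself inductive over~$\PAminus$, since knowing the predicate at $x$ lets one $\PAminus$-provably recompute it at $x+1$; for such $\psi$ one may simply take $\phi=\psi$ and the uniform shape succeeds. At the other extreme, a purely existential statement such as ``$x$ is a sum of four squares'' cannot satisfy~(b) in \emph{any} nonstandard $M$, because the square of a nonstandard element is already a nonstandard witness, so its defining set always overspills. Condition~(b) therefore forces $\psi$ to be a genuinely global $\PA$-theorem that nonetheless collapses to the standard cut in the chosen model, and condition~(c) forbids it from being inductive. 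So I would look for a $\PA$-provable ``standardness-type'' $\psi(x)$ of the form $\fa{y\leq x}{P(y)}$ where $P$ is a $\PA$-theorem whose step $P(y)\then P(y+1)$ is \emph{not} $\PAminus$-provable, and whose failures are coinitial in the nonstandard part of $M$.

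The hard part --- the main obstacle --- is constructing $M$ so that (b) and (c) hold \emph{simultaneously}, since they pull in opposite directions: (b) demands that $M$ be weak enough for a $\PA$-theorem to fail at every nonstandard point (so that $\IN$ becomes definable by $\psi$), while (c) demands that $M$ be rich enough that no definable successor-closed set shrinks down to $\IN$. Recursive saturation is exactly the wrong tool here: it would secure (c) for free (in a recursively saturated model $\IN$ is undefinable, so every definable set containing it overspills), but it would thereby also make $\psi^M\neq\IN$, destroying (b). Instead I would build $M$ as a bespoke initial segment or subring of a nonstandard model, exploiting the non-closure properties of $\PAminus$-cuts established in Section~\ref{sec:non-closure_of_cuts}: one wants a cut that fails to be closed under the operation witnessing $P$, so that $\psi$ pins down $\IN$, yet whose internal arithmetic still forces every $\PAminus$-provably successor-closed definable set past $\IN$. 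Verifying that this construction can pin down the standard cut by $\psi$ while leaving \emph{every} inductive formula overspilling is where the real work lies, and where I expect the cut-theoretic lemmas of Section~\ref{sec:non-closure_of_cuts} to be decisive.
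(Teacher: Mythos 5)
Your reduction is logically sound as far as it goes, but it reduces the Proposition to an impossible task: conditions (b) and (c) can never hold simultaneously, for \emph{any} $\psi$ and \emph{any} $M\models\PAminus$. The reason is precisely the trick of Lemma~\ref{lem:ax=fma}. Suppose $\psi^M=\IN$ as in~(b), and consider
\begin{equation*}
 \phi_0(x)\qequals
 \Bigl(\psi(0)\wedge\fa y{\bigl(\psi(y)\then\psi(y+1)\bigr)}\Bigr)\then\psi(x).
\end{equation*}
This formula is inductive over $\PAminus$ outright, with no assumption on $\psi$: the antecedent makes the base case a tautology and carries the induction step. Moreover, since $\psi^M=\IN$ genuinely contains $0$ and is genuinely closed under successor in $M$, the model $M$ satisfies the antecedent of $\phi_0$, hence $\phi_0^M=\psi^M=\IN$. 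So some inductive formula does pin down the standard cut of $M$, and (c) fails. The tension you flag between (b) and (c) as ``the hard part'' is therefore not merely hard but an outright contradiction, and no deployment of the cut-theoretic lemmas of Section~\ref{sec:non-closure_of_cuts} can produce the model you plan to build in your final paragraph. (Note that $\phi_0$ need not satisfy $\PAminus\proves\fa x{\bigl(\phi_0(x)\then\psi(x)\bigr)}$, so this does not contradict the Proposition itself; it only kills your reduction.)

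What your setup misses is that the uniform proof shape is already destroyed at $x=0$, with no mention of the standard cut at all. Any inductive $\phi$ has $\PAminus\proves\phi(0)$, so $\PAminus\proves\fa x{\bigl(\phi(x)\then\psi(x)\bigr)}$ would force $\PAminus\proves\psi(0)$. Hence it suffices to choose $\psi$ with $\PA\proves\fa x{\psi(x)}$ but $\PAminus\nproves\psi(0)$, and this is exactly what the paper does: take any $\Lor$~sentence $\sigma$ with $\PA\proves\sigma$ and $\PAminus\nproves\sigma$, and set $\psi(x)=\sigma\vee x\not=0$. Then $\PA\proves\fa x{\psi(x)}$ since $\PA\proves\sigma$, while any $M\models\PAminus+\neg\sigma$ satisfies $\neg\psi(0)$, so no formula that is $\PAminus$-provably true at $0$ --- in particular no inductive formula --- can uniformly imply $\psi$ over $\PAminus$. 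No definability of the standard cut, no overspill, and no bespoke model construction is needed; the only model-theoretic input is the existence of a model of $\PAminus+\neg\sigma$, i.e., the completeness theorem.
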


\begin{proof}
Pick an $\Lor$~sentence $\sigma$ that
 is provable in~$\PA$ but not in~$\PAminus$.
Consider the $\Lor$~formula $\psi(x)$, which is defined to be
 \begin{equation*}
  \sigma\lor x\not=0.
 \end{equation*}
Then $\PA\proves\fa x{\psi(x)}$
 because $\PA\proves\sigma$.
Let $M\models\PAminus+\neg\sigma$,
 which exists since $\PAminus\nproves\sigma$.
Then $M\models\neg\psi(0)$ by the definition of~$\psi(x)$.
Therefore, for \emph{no} formula~$\phi(x)$ can
 \begin{equation*}
  \PAminus\proves\fa x{(\phi(x)\then\psi(x))}\wedge\phi(0). 
  \tag*{\qEd}
 \end{equation*}
 \def\popQED{}
\end{proof}

As a result, the following uniform version of \ITP
\cprob{\ITPU}
{A sentence $\fa x{\psi(x)}$ provable in $\PA$}
{An inductive formula $\phi(x)$ s.t.\ $\PAminus \proves \fa x{(\phi(x) \then \psi(x))}$}
is not a total relation.
Nevertheless, some weak form of completeness is still possible
 if we restrict ourselves to simple enough $\Lor$~sentences provable
  in a sufficiently weak fragment of~$\PA$,
 as the following theorem shows.

\begin{defi}
An $\Lor$~formula is \defm{bounded} if
 all the quantifiers it contains are of the form
  $\falt xt{}$ or $\exlt xt{}$,
   where $t$~is a term in~$\Lor$ that does not involve the variable~$x$.
Bounded formulas are also called \defm{$\Delta_0$~formulas}.
The theory \defm{$\ind\Delta_0$} is
\[
  \PAminus
  +\{\ind_x\theta:\text{$\theta(x,\bar z)$ is a bounded $\Lor$~formula}\}.
  \]
Fix a bounded formula $y=2^x$ such that
 \begin{equation*}
  \ind\Delta_0\proves\begin{aligned}[t]
   &\fa{x,y,y'}{(y=2^x\wedge y'=2^x\then y=y')}\\
   &\wedge2^0=1\wedge\fa{x,y}{(y=2^x\nsc2y=2^{x+1})}.
  \end{aligned}
 \end{equation*}
Let \defm{$\exp$} be the axiom $\fa x{\ex y{(y=2^x)}}$.
\end{defi}
See Section~V.3(c) in H\'ajek--Pudl\'ak~\cite{book:hajek+pudlak},
  for example,
 for a construction of the formula $y=2^x$.

\begin{thm}[Wilkie--Paris]\label{thm:D0-n-uni}
The following are equivalent for a bounded formula~$\psi(x)$.
\begin{enumerate}\tfaeenum
\item $\ind\Delta_0+\exp\proves\fa x{\psi(x)}$.
\item There is an inductive formula~$\phi(x)$
  such that $\PAminus\proves\fa x{(\phi(x)\then\psi(x))}$.
\end{enumerate}
\end{thm}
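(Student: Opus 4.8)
The plan is to treat the inductive formula $\phi(x)$ as a \emph{cut} and to prove the two implications by passing between $\ind\Delta_0+\exp$-provability and the geometry of such cuts in models of $\PAminus$. First I would normalise $\phi$. If $\phi(x)$ is inductive then so is $\phi^\sharp(x):=\fale yx{\phi(y)}$: the base case is immediate, and the step $\phi^\sharp(x)\then\phi^\sharp(x+1)$ uses only that $\PAminus$ proves $y<x+1\nsc y\leq x$ (from \axref{ax:PA-/<} and \axref{ax:PA-/discrete}) together with the inductiveness of $\phi$. Moreover $\phi^\sharp$ provably defines a downward-closed initial segment and $\PAminus\proves\fa x{(\phi^\sharp(x)\then\phi(x))}$. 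Since the formulas $\fa y{(\phi(y)\then\phi(y+x))}$ and $\fa y{(\phi(y)\then\phi(y\times x))}$ are inductive as well, Proposition~\ref{prop:PA-merge} lets me assume throughout that $\phi$ provably defines a cut closed under $+$ and $\times$; passing to a shorter such cut preserves any hypothesis $\PAminus\proves\fa x{(\phi(x)\then\psi(x))}$.

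For (ii)$\Then$(i) I would argue by contradiction inside a model. Let $M\models\ind\Delta_0+\exp$ and suppose $M\models\neg\psi(a)$. By the normalisation, $\phi^M=\{b\in M:M\models\phi(b)\}$ is a cut of $M$ closed under $+$ and $\times$; since $M\models\ind\Delta_0$ and bounded formulas are absolute between $M$ and such a cut, a least-counterexample argument gives $\phi^M\models\ind\Delta_0$. If $a\in\phi^M$ we are done, because $\PAminus\proves\fa x{(\phi(x)\then\psi(x))}$ then forces $M\models\psi(a)$. The whole difficulty is thus to make the cut reach $a$, which requires closing it under $x\mapsto2^x$, and here lies the obstruction: exponential closure of a $\PAminus$-definable cut is \emph{not} provable in $\PAminus$, since the recurrence $2^{x+1}=2\times2^x$ is only guaranteed by $\ind\Delta_0$ (this is exactly why $\exp$ cannot be dispensed with). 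Overcoming it is the Wilkie--Paris cut-shortening step: one uses the totality of $\exp$ in $M$ to pass to a sub-cut that is in addition closed under exponentiation while still modelling $\ind\Delta_0$, and then transfers the bounded fact $\neg\psi(a)$ into a $\PAminus$-model in which $a$ does lie inside the cut, contradicting $\PAminus\proves\fa x{(\phi(x)\then\psi(x))}$.

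For (i)$\Then$(ii) I would reverse the construction. From a proof witnessing $\ind\Delta_0+\exp\proves\fa x{\psi(x)}$ I extract, by compactness, finitely many bounded formulas $\theta_1,\dots,\theta_m$ and a fixed finite depth of nested exponentials that the proof uses. I then define $\phi(x)$ to be a $\PAminus$-definable cut internalising exactly this data: $\phi(x)$ asserts that the needed exponential values below $x$ exist and remain in the cut, and that the relativisations of $\ind\theta_1,\dots,\ind\theta_m$ hold below $x$. The key point is that each step $\phi(x)\then\phi(x+1)$ is then a \emph{bounded} deduction, which $\PAminus$ can carry out once the relevant $\Delta_0$ recurrences are folded into the cut as data; hence $\phi$ is inductive over $\PAminus$. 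Relativising the given proof to this cut yields $\PAminus\proves\fa x{(\phi(x)\then\psi(x))}$.

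I expect the main obstacle to be the exponential-closure step in (ii)$\Then$(i): one must carefully separate what a $\PAminus$-definable cut can \emph{provably} be closed under ($+$ and $\times$, but not exponentiation) from what the ambient model of $\ind\Delta_0+\exp$ actually supplies. This asymmetry is the genuine content of the theorem, and the delicate bookkeeping of carrying the $\Delta_0$ exponential recurrences as bounded data inside the cut in (i)$\Then$(ii) is its constructive counterpart.
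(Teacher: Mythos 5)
Your proposal gets the supporting cast right (downward closure, Solovay-style closure under $+$ and $\times$, $\Delta_0$-absoluteness between a model and its cuts), but note first that the paper does not prove this theorem at all: it cites Corollary~8.7 of Wilkie--Paris~\cite{art:indn_bddarith} (equivalently Theorem~V.5.26 of H\'ajek--Pudl\'ak~\cite{book:hajek+pudlak}), and your sketch does not reconstruct that proof. The fatal gap is in your direction (ii)$\Then$(i). Given $M\models\ind\Delta_0+\exp+\neg\psi(a)$, the obstruction is \emph{not} that $\phi^M$ fails to be closed under exponentiation; it is that $\phi^M$ may be a proper cut of $M$ lying entirely below $a$. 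This really happens: take a $\Sigma_1$ induction axiom $\ind\theta$ that fails in $M$ (possible, since $\ind\Delta_0+\exp$ does not prove $\ISigma_1$) and let $\phi$ be the inductive formula supplied by Lemma~\ref{lem:ax=fma}; then $\phi^M$ is, after downward closure, a proper cut of $M$, and nothing prevents the counterexample $a$ from lying above it. Your remedy --- ``pass to a sub-cut that is in addition closed under exponentiation'' --- goes in the wrong direction: any sub-cut of $\phi^M$ is \emph{contained} in $\phi^M$, so if $a\notin\phi^M$ then $a$ lies in no sub-cut either; shortening can never make a cut reach higher. Since the statement to be contradicted is the provability assertion $\PAminus\proves\fa x{(\phi(x)\then\psi(x))}$, the only way to finish is to manufacture a \emph{different} model $K\models\PAminus$ and an element $b$ with $K\models\phi(b)\wedge\neg\psi(b)$. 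Your phrase about ``transferring the bounded fact $\neg\psi(a)$ into a $\PAminus$-model in which $a$ does lie inside the cut'' names exactly this goal but supplies no mechanism for it; that mechanism is the entire content of the theorem, and in the cited sources it rests on heavy machinery (restricted/Herbrand-style consistency statements provable in $\ind\Delta_0+\exp$ combined with an interpretation-existence construction, respectively end-extension techniques), none of which appears in your sketch. Note also that naive formalized soundness cannot substitute for it: chaining the inductive step $a$ times inside $M$ produces a proof of $\phi$ at the $a$-th numeral, but evaluating the truth of its lines requires induction of complexity comparable to that of $\phi$, which $M$ does not have.

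Your direction (i)$\Then$(ii) has a more localized but still genuine hole. Compactness yields finitely many $\Delta_0$-induction axioms used in the proof, but $\exp$ is a \emph{single} axiom $\fa x{\ex y{(y=2^x)}}$, so no ``fixed finite depth of nested exponentials'' falls out of compactness. That every $\Pi_1$ consequence of $\ind\Delta_0+\exp$ already follows, over $\ind\Delta_0$, from ``the $n$-times-iterated exponential of $x$ exists'' for some fixed $n\in\IN$ is itself a lemma requiring proof (e.g.\ by a compactness argument with a new constant $c$ and the cut of elements below some finitely-iterated exponential of $c$). Moreover, once $n\geq3$, closure of your auxiliary cut under $+$ and $\times$ does not make ``the $n$-times-iterated exponential of $x$ exists inside the cut'' inductive: writing $e_n(x)$ for that iterated exponential, one has $e_3(x+1)=e_3(x)^{\log e_3(x)}$, which is not polynomial in $e_3(x)$, so the cut must additionally be shortened to one closed under $x\mapsto x^{\log x}$. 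So the framing is right, but both implications have real gaps, and in the harder one the key difficulty has been misidentified.
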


\begin{proof}
See Corollary~8.7 in Wilkie--Paris~\cite{art:indn_bddarith}
 or Theorem~V.5.26 in H\'ajek--Pudl\'ak~\cite{book:hajek+pudlak}.
\end{proof}

\section{Different forms of induction}\label{sec:forms_of_induction}

In our definition of {\ITP} we have fixed the induction scheme to the ordinary successor
induction. This is by no means the only choice. It is well known that the induction
scheme has many equivalent formulations. In this section we will introduce those alternative
formulations that we treat in this paper, and start to study their relationship, both in the
general and in the quantifier-free setting. Our interest in the quantifier-free setting is motivated
by the fact that some methods for inductive theorem proving restrict themselves to
quantifier-free induction formulas, see, e.g.~\cite{Boyer79Computational}.

\subsection{Different induction schemes}

\begin{defi}
Let $\theta(x,\bar z)$ be an $\Lor$~formula. We define the following induction axioms:
\begin{itemize}
\item The \emph{$<$-induction axiom} $\ind^<_x \theta$ is
\begin{equation*}
\fa{\bar z}{\Bigl(
  \fa y{\bigl(\falt xy{\theta(x, \bar z)\then\theta(y, \bar z)}\bigr)} \then \fa x{\theta(x,\bar z)}
\Bigr)}.
\end{equation*}

\item Let $n\in\IN$.  The \emph{$(n+1)$-step induction axiom} $\ind^\text{$(n+1)$-step}_x \theta$ is
\begin{equation*}
\fa{\bar z}{\Bigl(
  \bigwwedge_{k<n+1}\theta(k,\bar z)
  \wedge\fa x{\bigl(\theta(x, \bar z)\then\theta(x+n+1,\bar z)\bigr)} \then \fa x{\theta(x,\bar z)}
\Bigr)}.
\end{equation*}

\item Let $n\in\IN$.  The \emph{$(n+1)$-induction axiom} $\ind^{n+1}_x \theta$ is
\begin{equation*}
\fa{\bar z}{\Bigl(
  \bigwwedge_{k<n+1}\theta(k, \bar z)
  \wedge\fa x{\Bigl(\bigwwedge_{k<n+1}\theta(x+k, \bar z)\then\theta(x+n+1, \bar z)\Bigr)} \then \fa x{\theta(x,\bar z)}
\Bigr)}.
\end{equation*}

\item The \emph{polynomial induction axiom} $\ind^{\text{\p}}_x \theta$ is
\begin{equation*}
\fa{\bar z}{\Bigl(
  \theta(0,\bar z) \wedge\fa x{\bigl(\theta(x,\bar z)\then\theta(2x, \bar z)\wedge\theta(2x+1,\bar z)\bigr)} \then \fa x{\theta(x,\bar z)}
\Bigr)}.
\end{equation*}
\end{itemize}
\end{defi}
Each of these schemes naturally induces a notion of inductiveness.
\begin{defi}
Let $\phi(x)$ be an $\Lor$~formula with precisely one free variable~$x$.
\begin{itemize}
\item $\phi(x)$ is \defm{$<$-inductive} if
  \begin{math}
   \PAminus\proves\fa y{\bigl(\falt xy{\phi(x)\then\phi(y)}\bigr)}
  \end{math}.
\item Let $n\in\IN$.
 We say $\phi(x)$ is \defm{$(n+1)$-step inductive} if
 \begin{equation*}
  \PAminus\proves\bigwwedge_{k<n+1}\phi(k)
   \wedge\fa x{\bigl(\phi(x)\then\phi(x+n+1)\bigr)}.
 \end{equation*}
\item Let $n\in\IN$.
 We say $\phi(x)$ is \defm{$(n+1)$-inductive} if
 \begin{equation*}
  \PAminus\proves\bigwwedge_{k<n+1}\phi(k)
   \wedge\fa x{\Bigl(\bigwwedge_{k<n+1}\phi(x+k)\then\phi(x+n+1)\Bigr)}.
 \end{equation*}
\item $\phi(x)$ is \defm{polynomially inductive},
  or simply \defm{\p-inductive},
 if
 \begin{equation*}
  \PAminus\proves\phi(0)
   \wedge\fa x{\bigl(\phi(x)\then\phi(2x)\wedge\phi(2x+1)\bigr)}.
 \end{equation*}
\end{itemize}
\end{defi}

The first of the above formulations, $<$-induction, is also known as strong induction and is commonly used
in mathematics. The consideration of $(n+1)$-step induction is motivated by its use
in~\cite{Bundy05Rippling}. The $k$-induction scheme has become popular in computer-aided
verification, see, e.g.,~\cite{Sheeran00Checking,Donaldson11Automatic}. Polynomial induction
has been introduced by Buss in~\cite{book:bussPhD} for the study of weak arithmetical theories
and their relation to computational complexity classes.
In this paper, we restrict ourselves to the base-2 polynomial induction scheme
 in order to keep the exposition sufficiently simple;
 see Remark~\ref{rmk:bases} for some extra information about other bases.
\begin{prop}\label{prop:altPA}
The following are equivalent for every $n\in\IN$:
\begin{enumerate}[leftmargin=8mm] \tfaeenum
\item $\PA = \PAminus + \{\ind_x\theta:
 \text{$\theta(x,\bar z)$ is an $\Lor$~formula}
\}$;\label{item:altPA/PA}

\item $\PAminus + \{\ind^<_x\theta:
 \text{$\theta(x,\bar z)$ is an $\Lor$~formula}
\}$;\label{item:altPA/'}

\item $\PAminus + \{\ind^\text{\rm$(n+1)$-step}_x\theta:
 \text{$\theta(x,\bar z)$ is an $\Lor$~formula}
\}$;\label{item:altPA/step}

\item $\PAminus + \{\ind^{n+1}_x\theta:
 \text{$\theta(x,\bar z)$ is an $\Lor$~formula}
\}$;\label{item:altPA/n}

\item $\PAminus + \{\ind^\text{\p}_x\theta:
 \text{$\theta(x,\bar z)$ is an $\Lor$~formula}
\}$.\label{item:altPA/p}
\end{enumerate}
\end{prop}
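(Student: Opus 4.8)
The plan is to show that each of the schemes \partref{item:altPA/'}--\partref{item:altPA/p} is deductively equivalent, over $\PAminus$, to the ordinary successor induction scheme \partref{item:altPA/PA}. Since each theory in the list is $\PAminus$ together with the corresponding scheme, such an equivalence of schemes immediately yields equality of the theories. For every alternative scheme I would prove two inclusions: that $\PA$ proves each alternative axiom $\ind^{*}_x\theta$, and that the alternative scheme proves each ordinary axiom $\ind_x\theta$. Both directions follow one pattern: to derive a given induction axiom from a scheme, apply the scheme to a suitably chosen \emph{trick formula} and verify its induction hypotheses using only elementary facts about $\PAminus$ (Lemma~\ref{lem:PA-}), such as $\PAminus\proves\fa y{(y>0\then\ex{y'}{y'+1=y})}$ and $\PAminus\proves\fa{x,y}{(y<x+1\nsc y\leq x)}$. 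Throughout, $n$ is a fixed standard natural number, so every $\bigwwedge_{k<n+1}$ is a genuine finite conjunction and ``iterating the successor step $n+1$ times'' is a fixed finite schematic manipulation inside $\PAminus$.

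For $<$-induction, to obtain the axiom from $\PA$ I would apply ordinary induction to $\psi(y)\equiv\falt xy{\theta(x,\bar z)}$, which is inductive precisely because of the $<$-induction premise; conversely, to recover $\ind_x\theta$ I apply $\ind^<_x\theta$ directly to $\theta$ and derive its premise $\fa y{(\falt xy{\theta(x,\bar z)}\then\theta(y,\bar z))}$ from $\theta(0,\bar z)$ and the successor step, splitting on whether $y=0$ or $y$ is a successor. For $(n+1)$-step and $(n+1)$-induction, the axioms follow from $\PA$ by applying ordinary induction to the trick formula $\psi(x)\equiv\bigwwedge_{k<n+1}\theta(x+k,\bar z)$; the reverse directions are routine, since from the ordinary hypotheses one recovers the base cases $\theta(0,\bar z),\dots,\theta(n,\bar z)$ and the long step (for $(n+1)$-induction, the step $\bigwwedge_{k<n+1}\theta(x+k,\bar z)\then\theta(x+n+1,\bar z)$ needs only the $k=n$ conjunct together with one successor step).

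I expect the polynomial case to be the main obstacle, because the successor structure of $\ind_x\theta$ and the doubling structure of $\ind^{\text{\p}}_x\theta$ do not visibly align: the naive choice $\forall y\leq x\,\theta(y,\bar z)$ for deriving ordinary induction from the polynomial scheme fails, as one cannot propagate $\theta$ from $x$ up to $2x+1$ without a further induction. The resolution is the relativised trick formula
\[
  \chi(x)\quad\equiv\quad\fa y{\bigl(\theta(y,\bar z)\then\theta(y+x,\bar z)\bigr)}.
\]
Here $\chi(0)$ is trivial, $\chi(x)\then\chi(2x)$ holds by applying $\chi(x)$ twice (using $y+2x=(y+x)+x$ in $\PAminus$), and $\chi(x)\then\chi(2x+1)$ follows by combining $\chi(2x)$ with a single successor step; polynomial induction then gives $\fa x{\chi(x)}$, whence $\fa x{\theta(x,\bar z)}$ via the ordinary base $\theta(0,\bar z)$. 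For the forward direction I would use that $\PA$ proves every number is even or odd, since $\ex y{(x=2y\vee x=2y+1)}$ is provably inductive over $\PAminus$; granting the polynomial hypotheses, ordinary induction on $\forall y\leq x\,\theta(y,\bar z)$ then goes through, because the successor $x+1$ equals $2b$ or $2b+1$ for some $b\leq x$, so $\theta(b,\bar z)$ is already available and the doubling step yields $\theta(x+1,\bar z)$.
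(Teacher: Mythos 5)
Your proposal is correct, and it takes a genuinely different route from the paper, which in fact never writes out a proof of Proposition~\ref{prop:altPA}: the paper declares it a straightforward exercise and instead proves the quantifier-free counterpart, Theorem~\ref{thm:IOpen}, whose argument is a cycle $\itemref{item:IOpen/IOpen}\Then\itemref{item:IOpen/'}\Then\itemref{item:IOpen/step}\Then\itemref{item:IOpen/n}\Then\itemref{item:IOpen/IOpen}$ driven by the least-number principle (pick a minimal counterexample and contradict the step hypotheses), with $\itemref{item:IOpen/IOpen}\Then\itemref{item:IOpen/'}$ outsourced to Shepherdson and with polynomial induction absent, since by Theorem~\ref{thm.johannsen} it is strictly weaker in the open-formula setting. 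Your proof is instead hub-and-spoke: each scheme is shown equivalent to successor induction over $\PAminus$ by applying it to a quantified trick formula --- $\falt xy{\theta(x,\bar z)}$ for $<$-induction, $\bigwwedge_{k<n+1}\theta(x+k,\bar z)$ for the $(n+1)$-step and $(n+1)$-induction schemes, and $\chi(x)\equiv\fa y{\bigl(\theta(y,\bar z)\then\theta(y+x,\bar z)\bigr)}$ together with the $\PA$-provable even/odd dichotomy for polynomial induction. The freedom to use quantified trick formulas is exactly what the quantifier-free setting forbids, which is why your argument does not specialize to Theorem~\ref{thm:IOpen}, consistent with the paper's remark that $\itemref{item:altPA/PA}\Then\itemref{item:altPA/'}$ and $\itemref{item:altPA/p}\Then\itemref{item:altPA/PA}$ are precisely the implications that fail to carry over; the paper's least-number cycle, by contrast, survives in the open setting. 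What your write-up buys is a complete, self-contained treatment of the polynomial case, which the paper only gestures at via H\'ajek--Pudl\'ak: your $\chi$, whose truth set is closed under addition and successor and hence under $x\mapsto2x$ and $x\mapsto2x+1$, is the right device, and you correctly flag that the naive choice $\fale yx{\theta(y,\bar z)}$ fails because $\theta$ cannot be propagated from $x$ up to $2x+1$ without a further induction. Your remaining verifications --- the case split $y=0$ versus $y$ a successor, the bound $b\leq x$ in the even/odd split, and the reverse directions needing only finitely many successor steps since $n$ is standard --- are all sound over $\PAminus$ using Lemma~\ref{lem:PA-}.
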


The proof of Proposition~\ref{prop:altPA} is a straightforward exercise.
So instead of showing it here, we show its quantifier-free counterpart,
Theorem~\ref{thm:IOpen}.
Everything carries over \emph{except} the arguments
 for $\itemref{item:altPA/PA}\Then\itemref{item:altPA/'}$ and
     $\itemref{item:altPA/p}\Then\itemref{item:altPA/PA}$.
The former of these implications was shown to remain true
 by a more substantial argument in Shepherdson~\cite{art:sheph/model}.
First we gather a few statements easily provable in $\PAminus$ that will
be useful at several occasions.

\begin{lem}\label{lem:PA-}
$\PAminus$ proves
\begin{enumerate}\partenum
\item $\fa x{\fa y{\fa z{(x+z<y+z\then x<y)}}}$,\label{part:PA-/<+cancel}
\item $\fa x{\fa y{\fa z{(x\times z<y\times z\then x<y)}}}$,\label{part:PA-/<xcancel}
\item $\fa x{\fa y{(x<y\then x+1\leq y)}}$,\label{part:PA-/discrete}
\item $\fa x{\bigl(x\leq n\then\bigvvee_{k\leq n}x=k\bigr)}$
  for every $n\in\IN$,\label{part:PA-/ee}
\item $\fa x{(x<x+1)}$, and\label{part:PA-/x<x+1}
\item $\fa x{(x\not=0\then\ex y{(x=y+1)})}$.\label{part:PA-/pred}
\end{enumerate}
\end{lem}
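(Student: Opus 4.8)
The plan is to derive all six facts directly from the axioms P1--P16, using trichotomy (P11), irreflexivity (P10) and transitivity (P9) for the order-theoretic parts, and reserving the only non-routine ingredient, an external induction on the numeral, for part~(d). I would dispatch the two ``existence'' facts first. For part~(e), $x<x+1$ is immediate from the defining axiom P14 for~$<$ by taking the witness $z=0$ and using $x+0=x$ (axiom~P6). Part~(f) is symmetric: from $x\not=0$ together with nonnegativity (axiom~P16), trichotomy gives $0<x$, and feeding this into P14 produces a $z$ with $(0+z)+1=x$; since $0+z=z$ by commutativity and~P6, this yields $x=z+1$.

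For the cancellation facts (a) and (b) I would argue by trichotomy on $x$ and~$y$. If $x=y$ then $x+z=y+z$ (respectively $x\times z=y\times z$), contradicting the hypothesis by irreflexivity. If $y<x$, then the strict monotonicity axiom P12 gives $y+z<x+z$, and transitivity with the hypothesis again contradicts irreflexivity; for~(b) one first notes $z\not=0$, since $z=0$ would force $x\times z=0=y\times z$ by the axiom $x\times 0 = 0$, so that P13 applies to give $y\times z<x\times z$. In both cases the only remaining possibility is $x<y$. The discreteness fact~(c) is where a small case split is needed: from $x<y$ and P14 I obtain $w$ with $(x+w)+1=y$, and by nonnegativity $w\geq 0$. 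If $w=0$ then $x+1=y$; if $0<w$ then P12 (adding~$x$, and using commutativity with $x+0=x$) gives $x<x+w$, and a second application of P12 (adding~$1$) gives $x+1<(x+w)+1=y$. Either way $x+1\leq y$.

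The main work is part~(d), which I would prove by external induction on $n\in\IN$, producing a separate $\PAminus$-proof for each numeral. The base case $n=0$ asserts $x\leq 0\then x=0$: since $x\geq 0$ by P16, trichotomy excludes $x<0$, so $x\leq 0$ forces $x=0$. For the induction step, assuming the claim for~$n$, it suffices to establish $x\leq n+1\then x\leq n\vee x=n+1$, after which the induction hypothesis completes the disjunction over $\bigvvee_{k\leq n+1}x=k$. The crux is the implication $x<n+1\then x\leq n$, which I would obtain by contraposition from part~(c): if $n<x$ then $n+1\leq x$, and by trichotomy together with irreflexivity this is incompatible with $x<n+1$. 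Here I would also record the purely definitional fact that the numeral $n+1$ is provably equal to the term (numeral~$n$)${}+1$, so that parts (c) and the preceding arguments apply to it directly.

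I do not anticipate a genuine obstacle, since each part is a first-order consequence of the finite axiomatization; the only points demanding care are the meta-level induction in~(d) and the fact that one has available only the \emph{strict} monotonicity axioms P12 and P13, which is why the $w=0$ and $z=0$ cases must be separated out in (c) and~(b). Everything else is a direct trichotomy argument.
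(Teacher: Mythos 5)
Your proposal is correct: each of the six parts does follow from the axioms P1--P16 by exactly the arguments you sketch, and the two places that genuinely need care --- the external induction on the numeral in (d) and the separation of the $w=0$ and $z=0$ cases in (c) and (b), forced by the strictness of P12/P13 --- are handled properly. The route, however, differs from the paper's in an organizational sense: the paper only proves (e) and (f) directly (and your proofs of those two coincide with its one-liners via P6, P14, P16), while for (a)--(d) it simply cites Kaye's book \emph{Models of Peano Arithmetic} (page~18, Proposition~2.1, and Lemma~2.6). So what you have written is a self-contained replacement for those citations: your trichotomy-plus-irreflexivity arguments for (a)--(c) and the meta-level induction for (d) are the standard proofs that the cited references contain. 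What the paper's approach buys is brevity and deference to a canonical source; what yours buys is that the lemma becomes verifiable entirely from the axiom list P1--P16 given in the paper, which also makes explicit the one point the citation hides, namely that (d) is not a single $\PAminus$-theorem but a schema requiring a separate proof for each numeral $n$, obtained by induction in the metatheory. One small remark: your proof of (c) avoids the discreteness axiom P15 altogether, extracting discreteness instead from the ``$+1$'' built into P14; this is slightly more economical than the textbook argument and is worth noting, since it shows P15 is not needed for this part.
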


\begin{proof}
For \partref{part:PA-/<+cancel} and~\partref{part:PA-/<xcancel},
 see the top of page~18 in Kaye~\cite{book:modelPA}.
For \partref{part:PA-/discrete} and~\partref{part:PA-/ee},
 see Proposition~2.1 and Lemma~2.6 in Kaye~\cite{book:modelPA} respectively.

For \partref{part:PA-/x<x+1},
 we see that $x+1=(x+0)+1$ by axiom~\ref{ax:PA-/+0}.
So we are done by axiom~\ref{ax:PA-/<}.

For \partref{part:PA-/pred}, work over~$\PAminus$.
If $x\not=0$, then axiom~\axref{ax:PA-/nneg} implies $x>0$,
 and so we get the~$y$ we want by~\axref{ax:PA-/<}.
\end{proof}

\begin{thm}[mostly Shepherdson]\label{thm:IOpen}
The following are equivalent for all $n\in\IN$:
\begin{enumerate}[leftmargin=8mm] \tfaeenum
\item $\IOpen = \PAminus + \{\ind_x\theta:
 \text{$\theta(x,\bar z)$ is a quantifier-free $\Lor$~formula}
\}$;\label{item:IOpen/IOpen}

\item $\PAminus + \{\ind^<_x\theta:
 \text{$\theta(x,\bar z)$ is a quantifier-free $\Lor$~formula}
\}$;\label{item:IOpen/'}

\item $\PAminus + \{\ind^\text{\rm$(n+1)$-step}_x\theta:
 \text{$\theta(x,\bar z)$ is a quantifier-free $\Lor$~formula}
\}$;\label{item:IOpen/step}

\item $\PAminus + \{\ind^{n+1}_x\theta:
 \text{$\theta(x,\bar z)$ is a quantifier-free $\Lor$~formula}
\}$.\label{item:IOpen/n}
\end{enumerate}
\end{thm}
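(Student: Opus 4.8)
The plan is to show that each of the three schemes in \itemref{item:IOpen/'}, \itemref{item:IOpen/step} and \itemref{item:IOpen/n} is interderivable, over $\PAminus$ and instance by instance for quantifier-free $\theta$, with ordinary open induction \itemref{item:IOpen/IOpen}; since all four theories extend $\PAminus$, this makes them coincide. The whole argument follows the routine proof of Proposition~\ref{prop:altPA}, so the only thing that really needs checking is that every formula to which I apply an induction axiom is again quantifier-free. This succeeds everywhere except for a single implication.

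For the $(n+1)$-step and $(n+1)$-induction schemes I would use the single open auxiliary formula $\psi(x):=\bigwwedge_{k<n+1}\theta(x+k)$. To obtain $\itemref{item:IOpen/IOpen}\Then\itemref{item:IOpen/step}$ and $\itemref{item:IOpen/IOpen}\Then\itemref{item:IOpen/n}$ I assume the premises of the relevant axiom and run ordinary induction on $\psi$: the base $\psi(0)$ is the first premise $\bigwwedge_{k<n+1}\theta(k)$, and the induction step $\psi(x)\then\psi(x+1)$ is secured by the second premise together with the $\PAminus$-provable associativity rewritings of $(x+1)+k$ as $x+(k+1)$. For $\itemref{item:IOpen/step}\Then\itemref{item:IOpen/IOpen}$ I note that the premises $\theta(0)\wedge\fa x{(\theta(x)\then\theta(x+1))}$ of ordinary induction entail, by finitely many iterations, both $\bigwwedge_{k<n+1}\theta(k)$ and $\fa x{(\theta(x)\then\theta(x+n+1))}$, which are exactly the premises of $\ind^\text{$(n+1)$-step}_x\theta$. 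Finally $\itemref{item:IOpen/n}\Then\itemref{item:IOpen/step}$ is immediate, since $\theta(x)$ occurs among the conjuncts of $\bigwwedge_{k<n+1}\theta(x+k)$, so the premises of the step axiom already imply those of the $(n+1)$-induction axiom; composing these implications yields the coincidence of \itemref{item:IOpen/IOpen}, \itemref{item:IOpen/step} and \itemref{item:IOpen/n}. Crucially, none of these steps leaves the quantifier-free fragment.

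For $<$-induction the easy direction is $\itemref{item:IOpen/'}\Then\itemref{item:IOpen/IOpen}$. From $\theta(0)$ and $\fa x{(\theta(x)\then\theta(x+1))}$ I verify the single premise $\fa y{(\falt xy{\theta(x)}\then\theta(y))}$ of $\ind^<_x\theta$ by cases on $y$: if $y=0$ it is just $\theta(0)$, and if $y\neq0$ then Lemma~\ref{lem:PA-}\partref{part:PA-/pred} writes $y=y'+1$, Lemma~\ref{lem:PA-}\partref{part:PA-/x<x+1} gives $y'<y$, the bounded hypothesis yields $\theta(y')$, and the step turns this into $\theta(y)$; then $\ind^<_x\theta$ delivers $\fa x{\theta(x)}$. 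This stays in the open fragment because the bounded quantifier sits in the fixed shape of the $<$-induction axiom and not inside $\theta$.

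The one genuine obstacle is the converse $\itemref{item:IOpen/IOpen}\Then\itemref{item:IOpen/'}$. The standard proof of this implication applies ordinary induction to $\theta^*(y):=\fale xy{\theta(x)}$, whose validity at every $y$ gives $\fa x{\theta(x)}$; but $\theta^*$ carries a bounded quantifier and so is \emph{not} quantifier-free, which means $\IOpen$ does not supply the instance $\ind_y\theta^*$ and the argument breaks down. This is exactly the step of Proposition~\ref{prop:altPA} that fails to transfer to the open setting. The implication is nonetheless true: it is the more substantial result of Shepherdson~\cite{art:sheph/model}, which I would invoke rather than reprove. Its proof is necessarily different, not proceeding through any bounded auxiliary formula, and I expect this to be the main difficulty of the theorem, the remaining implications being routine manipulations in $\PAminus$.
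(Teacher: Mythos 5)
Your proposal is correct, and it isolates exactly the same crux as the paper: the implication \itemref{item:IOpen/IOpen}$\,\Then\,$\itemref{item:IOpen/'} is deferred to Shepherdson~\cite{art:sheph/model}, for precisely the reason you give (the textbook auxiliary formula $\fale{x'}x{\theta(x')}$ leaves the quantifier-free fragment). The elementary part, however, is decomposed differently. The paper closes the single cycle \itemref{item:IOpen/IOpen}$\,\Then\,$\itemref{item:IOpen/'}$\,\Then\,$\itemref{item:IOpen/step}$\,\Then\,$\itemref{item:IOpen/n}$\,\Then\,$\itemref{item:IOpen/IOpen}: its step \itemref{item:IOpen/'}$\,\Then\,$\itemref{item:IOpen/step} is a least-counterexample argument (a minimal $x_0$ with $\neg\theta(x_0,\bar z)$ either lies below $n+1$, refuting the base premise, or has the form $w_0+n+1$ with $\theta(w_0,\bar z)$, refuting the step premise), the step \itemref{item:IOpen/step}$\,\Then\,$\itemref{item:IOpen/n} is declared clear, and \itemref{item:IOpen/n}$\,\Then\,$\itemref{item:IOpen/IOpen} is your iteration argument. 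You, by contrast, never use the $<$-scheme to derive the other schemes: you prove \itemref{item:IOpen/'}$\,\Then\,$\itemref{item:IOpen/IOpen} directly by predecessor case analysis, and you tie \itemref{item:IOpen/IOpen}, \itemref{item:IOpen/step}, \itemref{item:IOpen/n} together through the window conjunction $\psi(x)=\bigwwedge_{k<n+1}\theta(x+k,\bar z)$. This costs you six implications where the paper uses four, but it buys a point of care that the paper glosses over: as you correctly note, the instance-wise trivial implication is that the $(n+1)$-induction axiom implies the $(n+1)$-step axiom, i.e., the theory in \itemref{item:IOpen/n} proves the one in \itemref{item:IOpen/step}; what the paper's cycle actually needs at its ``clear'' step is the converse containment, that the theory in \itemref{item:IOpen/step} proves each $(n+1)$-induction axiom, and that containment is established exactly by your $\psi$-trick (apply step induction to $\psi$) or, as in your write-up, by composing \itemref{item:IOpen/step}$\,\Then\,$\itemref{item:IOpen/IOpen} with \itemref{item:IOpen/IOpen}$\,\Then\,$\itemref{item:IOpen/n}. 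What the paper's route buys in exchange is economy and a working illustration of the least number principle, which anticipates the minimization arguments reappearing in Section~\ref{sec:order}.
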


\begin{proof}
We refer the reader to the original Shepherdson paper~\cite{art:sheph/model}
 for a proof of $\itemref{item:IOpen/IOpen}\Then\itemref{item:IOpen/'}$.
We do not include a proof here
 because it would be too distracting for the present paper
  to set up all the algebraic materials for the argument.
Nevertheless, some of the ideas can be found in Section~\ref{sec:order}.

Consider $\itemref{item:IOpen/'}\Then\itemref{item:IOpen/step}$.
Work over~\itemref{item:IOpen/'}.
Fix $\bar z$ and let $\theta(x,\bar z)$ be a quantifier-free $\Lor$~formula
 such that $\ex x{\neg\theta(x,\bar z)}$.
Use~\itemref{item:IOpen/'} to find $x_0=(\min x)(\neg\theta(x,\bar z))$.
If $x_0<n+1$, then $\bigvvee_{k<n+1}\neg\theta(k,\bar z)$
 by Lemma~\ref{lem:PA-}\partref{part:PA-/ee}.
So suppose $x_0\geq n+1$.
Then $x_0>n$
 by (the contrapositive of) Lemma~\ref{lem:PA-}\partref{part:PA-/discrete}.
Apply axiom~\axref{ax:PA-/<} to find $w_0$
 such that $x_0=n+w_0+1>w_0$.
Then $\theta(w_0,\bar z)$ holds by the minimality of~$x_0$,
 but $\neg\theta(w_0+n+1,\bar z)$, as required.

The implication $\itemref{item:IOpen/step}\Then\itemref{item:IOpen/n}$
 is clear.

Consider $\itemref{item:IOpen/n}\Then\itemref{item:IOpen/IOpen}$.
Work over~\itemref{item:IOpen/n}.
Fix $\bar z$ and let $\theta(x,\bar z)$ be a quantifier-free $\Lor$~formula
 such that
 \begin{math}
  \theta(0,\bar z)
  \wedge\fa x{\bigl(\theta(x,\bar z)\then\theta(x+1,\bar z)\bigr)}
 \end{math}.
Then
 \begin{equation*}
  \bigwwedge_{k<n+1}\theta(k,\bar z)
  \wedge\fa x{\Bigl(
   \bigwwedge_{k<n+1}\theta(x,\bar z)\then\theta(x+n+1,\bar z)
  \Bigr)},
 \end{equation*} and so we are done by~\itemref{item:IOpen/n}.
\end{proof}

The following small lemma will be handy when proving the quantifier-free analogue of
 $\itemref{item:altPA/PA}\Then\itemref{item:altPA/p}$ of Proposition~\ref{prop:altPA}.

\begin{lem}\label{lem:Div}
\begin{enumerate}\partenum
\item $\IOpen\proves\fa{x,d}{\ex{q,r}{(d\not=0\then x=qd+r\wedge r<d)}}$, but
\item $\PAminus\nproves\fa x{\ex y{(x=2y\vee x=2y+1)}}$.
\end{enumerate}
\end{lem}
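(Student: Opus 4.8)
The plan is to treat the two parts by completely different means: part~\partref{part:PA-/<+cancel}— the Euclidean division property — is a provability result that I would obtain from (open) induction, while part~\partref{part:PA-/<xcancel} is an unprovability result that I would obtain by exhibiting a model of $\PAminus$ in which parity fails.

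For part~\partref{part:PA-/<+cancel}, the first thing to notice is that I cannot simply induct on the displayed statement, since the formula $\ex{q,r}{(x=qd+r\wedge r<d)}$ is not quantifier-free and $\IOpen$ only supplies induction for open formulas. The trick is to push the existential quantifier into an atomic formula. Fix $x,d$ with $d\not=0$; then $d\geq1$ by discreteness (axioms~\axref{ax:PA-/nneg} and~\axref{ax:PA-/discrete}). Consider the quantifier-free formula $\chi(y)\equiv x<y\times d$. First I would check that $\chi(x+1)$ holds, using $(x+1)\times d\geq(x+1)\times1=x+1>x$, which follows from $d\geq1$, monotonicity of multiplication (axiom~P13), and Lemma~\ref{lem:PA-}\partref{part:PA-/x<x+1}. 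By Theorem~\ref{thm:IOpen}, $\IOpen$ proves $<$-induction for quantifier-free formulas, and hence, by contraposition, the least number principle for quantifier-free formulas; applying it to $\chi$ yields a least $q_1$ with $x<q_1\times d$. Since $\chi(0)$ fails ($x\geq0=0\times d$), we have $q_1\not=0$, so $q_1=q+1$ for some $q$ by Lemma~\ref{lem:PA-}\partref{part:PA-/pred}. Minimality of $q_1$ gives $q\times d\leq x$, whence $x=q\times d+r$ for some $r$, and from $x<(q+1)\times d=q\times d+d$ (distributivity) I would cancel $q\times d$ via Lemma~\ref{lem:PA-}\partref{part:PA-/<+cancel} to conclude $r<d$, as required.

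For part~\partref{part:PA-/<xcancel}, I would produce a discretely ordered ring whose non-negative part contains an element that is neither even nor odd. The natural choice is the polynomial ring $\ZZ[t]$ ordered so that $t$ exceeds every integer (equivalently, $p>0$ iff the leading coefficient of $p$ is positive). This is discretely ordered, with least positive element~$1$, so by the very definition of $\PAminus$ as the theory of non-negative parts of discretely ordered rings, the non-negative part $M$ of $\ZZ[t]$ is a model of $\PAminus$. In $M$ the element $t$ satisfies neither $t=2y$ nor $t=2y+1$ for any $y$, because neither $t/2$ nor $(t-1)/2$ lies in $\ZZ[t]$; hence $M\models\neg\ex y{(t=2y\vee t=2y+1)}$ and so $M\not\models\fa x{\ex y{(x=2y\vee x=2y+1)}}$, giving $\PAminus\nproves\fa x{\ex y{(x=2y\vee x=2y+1)}}$.

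The hard part is conceptual rather than computational, and it lives entirely in part~\partref{part:PA-/<+cancel}: open induction is too weak to be applied directly to the $\Pi_2$ division statement, so the whole point is to relocate the quantifier into the atomic formula $x<y\times d$ and invoke the least number principle there, after which the extraction of $q$ and $r$ is routine $\PAminus$ bookkeeping. Part~\partref{part:PA-/<xcancel} is then a short model construction, once one recalls that the non-negative part of any discretely ordered ring already models $\PAminus$.
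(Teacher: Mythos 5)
Your proof is correct, and your part (b) coincides with the paper's own argument: the paper likewise takes the non-negative part $\ZZ[X]^+$ of the polynomial ring $\ZZ[X]$, ordered by sign of the leading coefficient, checks it is (the non-negative part of) a discretely ordered ring and hence a model of $\PAminus$, and observes that $X$ is neither even nor odd there. For part (a), however, you take a genuinely different and logically heavier route. You derive the least number principle for quantifier-free formulas from Theorem~\ref{thm:IOpen} and apply it to the formula $x<y\times d$; but the implication (i)$\Rightarrow$(ii) of that theorem is precisely Shepherdson's nontrivial result, which the paper explicitly declines to prove because of the algebraic machinery it requires. The paper's own proof of part (a) needs no such import: working in $\IOpen$, it verifies $0d\leq x<(x+1)d$ and applies the ordinary successor-induction axiom to the quantifier-free formula $qd\leq x$ (in the variable $q$, with $x,d$ as parameters); since the base case holds while the universal conclusion fails at $q=x+1$, the induction step must fail at some $q$, giving $qd\leq x<(q+1)d=qd+d$, whence $r=x-qd<d$. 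Both arguments reduce division to a quantifier-free induction about multiples of $d$, and your extraction of $q$ and $r$ from the least $q_1=q+1$ with $x<q_1d$ is sound $\PAminus$ reasoning; what the paper's ``failure of the induction step'' formulation buys is a proof self-contained in the defining scheme of $\IOpen$, keeping Lemma~\ref{lem:Div} independent of Shepherdson's theorem, whereas your version buys only the somewhat more familiar phrasing via a least multiple of $d$ exceeding $x$.
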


\begin{proof}
For the provability part, we follow Kaye~\cite[page~5]{art:IDio}.
Work over $\IOpen$.
Take any $x,d$ with $d\not=0$.
Using the axioms of $\PAminus$
  and Lemma~\ref{lem:PA-}\partref{part:PA-/pred},
 one can routinely verify that $0d\leq x<(x+1)d$.
Apply induction on~$q$ for the atomic formula $qd\leq x$ to
 find $q$ which satisfies $qd\leq x<(q+1)d=qd+d$.
Then setting $r=x-qd<qd+d-qd=d$ gives us what we want.

For the unprovability part,
 consider the set $\ZZ[X]^+$ of all elements of the polynomial ring~$\ZZ[X]$
  that either are zero or have positive leading coefficients.
It is naturally a model of~$\PAminus$,
 as the reader can verify~\cite[Section~2.1]{book:modelPA}.
Clearly,
 \begin{equation*}
  \ZZ[X]^+\models\neg\ex y{(X=2y\vee X=2y+1)}. 
  \tag*{\qEd}
 \end{equation*}
 \def\popQED{}
\end{proof}

\begin{prop}
$\IOpen$ proves
 \begin{equation*}
  \fa{\bar z}{\bigl(
   \theta(0,\bar z)\wedge\fa x{\bigl(
    \theta(x,\bar z)\then\theta(2x,\bar z)\wedge\theta(2x+1,\bar z)
   \bigr)}
   \then\fa x{\theta(x,\bar z)}
  \bigr)}
 \end{equation*}
 for every quantifier-free $\Lor$~formula $\theta(x,\bar z)$.
\end{prop}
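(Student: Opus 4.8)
The statement is exactly the quantifier-free analogue of the implication $\itemref{item:altPA/PA}\Then\itemref{item:altPA/p}$ from Proposition~\ref{prop:altPA}: it asserts that $\IOpen$ derives polynomial induction for quantifier-free $\theta$. My plan is to argue by contradiction, combining the least number principle for quantifier-free formulas with division by~$2$. Both of these tools are available over $\IOpen$ but fail over $\PAminus$ alone, and this is precisely where the hypothesis $\IOpen$ (rather than $\PAminus$) enters.

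Concretely, I would work over $\IOpen$, fix $\bar z$, and assume the two hypotheses $\theta(0,\bar z)$ and $\fa x{\bigl(\theta(x,\bar z)\then\theta(2x,\bar z)\wedge\theta(2x+1,\bar z)\bigr)}$. Suppose toward a contradiction that $\ex x{\neg\theta(x,\bar z)}$. Since $\theta$ is quantifier-free, so is $\neg\theta(x,\bar z)$; hence, exactly as in the proof of Theorem~\ref{thm:IOpen} where a least counterexample $(\min x)(\neg\theta(x,\bar z))$ is formed, the $<$-induction available by Theorem~\ref{thm:IOpen} lets me choose the least $a$ with $\neg\theta(a,\bar z)$. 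By the first hypothesis $a\not=0$, so $a>0$ by axiom~\axref{ax:PA-/nneg}.

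Next I would produce a binary predecessor of $a$. Applying the provability part of Lemma~\ref{lem:Div} with $d=2$ (note $2\not=0$) gives $q,r$ with $a=2q+r$ and $r<2$; by Lemma~\ref{lem:PA-}\partref{part:PA-/ee} this forces $r=0$ or $r=1$. In either case a short computation in $\PAminus$ shows $q<a$: when $a=2q$ one has $q>0$ and hence $q<q+q=a$ by axiom~\axref{ax:PA-/<+}, while when $a=2q+1$ one has $q\leq 2q<2q+1=a$ using Lemma~\ref{lem:PA-}\partref{part:PA-/x<x+1}. By the minimality of $a$ it follows that $\theta(q,\bar z)$ holds, so the step hypothesis yields $\theta(2q,\bar z)\wedge\theta(2q+1,\bar z)$, and therefore $\theta(a,\bar z)$ in both cases, contradicting $\neg\theta(a,\bar z)$. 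Hence $\fa x{\theta(x,\bar z)}$, as required.

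The only genuinely non-trivial point is the availability of division by~$2$: the statement $\fa x{\ex y{(x=2y\vee x=2y+1)}}$ is not provable in $\PAminus$ (this is the unprovability part of Lemma~\ref{lem:Div}), so one cannot even speak of a binary predecessor in $\PAminus$, and it is exactly here that open induction is indispensable. Everything else, namely forming the least counterexample and checking $q<a$ and $r\in\{0,1\}$, is routine arithmetic over $\IOpen$ once Theorem~\ref{thm:IOpen} and Lemma~\ref{lem:Div} are in hand.
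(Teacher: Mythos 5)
Your proof is correct and follows essentially the same route as the paper's: both take a least counterexample via the quantifier-free least number principle from Theorem~\ref{thm:IOpen}, use Lemma~\ref{lem:Div} to write it as $2q$ or $2q+1$, and contradict minimality via the step hypothesis (the paper phrases this contrapositively rather than by contradiction, but that is immaterial). Your additional verifications that $r\in\{0,1\}$ and $q<a$ are fine and simply spell out details the paper compresses into citations of Lemma~\ref{lem:PA-}.
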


\begin{proof}
Work over $\IOpen$.
Let $\theta(x,\bar z)$ be a quantifier-free $\Lor$~formula and
 $\bar z$ be parameters
 such that
 \begin{math}
  \ex x{\neg\theta(x,\bar z)}
 \end{math}.
By~Proposition~\ref{thm:IOpen}\partref{item:IOpen/'},
 this has a least witness, say~$x_0$.
If $x_0=0$, then $\neg\theta(0,\bar z)$.
So suppose $x_0\not=0$.
Using Lemma~\ref{lem:Div},
 find $y_0$ such that $x_0=2y_0$ or $x_0=2y_0+1$.
Since $x_0\not=0$, we know $y_0<x_0$
 by Lemma~\ref{lem:PA-}\partref{part:PA-/pred} and axiom~\axref{ax:PA-/<}.
So $\theta(y_0,\bar z)$ by the minimality of~$x_0$.
We are thus done
 because $\neg\theta(2y_0,\bar z)\vee\neg\theta(2y_0+1,\bar z)$.
\end{proof}

Following the proof of Theorem~V.4.6
  in H\'ajek--Pudl\'ak~\cite{book:hajek+pudlak},
 one can prove $\IOpen$ from $\PAminus$ plus the polynomial induction scheme
  for formulas of the form $\falt yt{\eta(x,y,\bar z)}$
   where $\eta(x,y,\bar z)$ is a quantifier-free $\Lor$~formula and
         $t$~is an $\Lor$~term not involving~$y$.
The use of this extra bounded quantifier cannot be eliminated.

\begin{thm}[Johannsen]\label{thm.johannsen}
\begin{math}
\PAminus+\{\ind^{\text{\p}}_x\theta:
 \text{$\theta(x,\bar z)$ is a quantifier-free $\Lor$~formula}
\}
\end{math} does not prove
\begin{math}
\fa x{\ex y{(x=3y\vee x=3y+1\vee x=3y+2)}}
\end{math}.
Consequently, it is strictly weaker than~$\IOpen$.
\end{thm}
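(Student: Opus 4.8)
The plan is to establish the non-provability by building a model, and to read off the ``Consequently'' clause from results already available. For the latter, the Proposition immediately preceding shows $\IOpen\proves\ind^{\text{\p}}_x\theta$ for every quantifier-free $\theta$, so the theory $T$ displayed in the statement satisfies $T\subseteq\IOpen$; and taking the divisor to be $3$ in the first part of Lemma~\ref{lem:Div} gives $\IOpen\proves\fa x{\ex y{(x=3y\vee x=3y+1\vee x=3y+2)}}$ (using Lemma~\ref{lem:PA-} to rewrite $r<3$ as $r\in\{0,1,2\}$). Hence it suffices to produce a model $M\models T$ containing an element $a$ for which no $y$ satisfies $a=3y\vee a=3y+1\vee a=3y+2$: this simultaneously shows $T\nproves$ the sentence and, with the inclusion above, that $T$ is strictly weaker than $\IOpen$.

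Before constructing $M$ it is worth recording a constraint that guides the choice. I claim $T\proves\fa x{\ex y{(x=2y\vee x=2y+1)}}$: if some $a\neq0$ in a model of $T$ had no such $y$, then the quantifier-free formula $x\neq a$ would be a counterexample to $\ind^{\text{\p}}$, since it holds at $0$, its step $x\neq a\then 2x\neq a\wedge2x+1\neq a$ holds because the conclusion is true for \emph{every} $x$ (nothing is mapped to $a$ by $x\mapsto2x$ or $x\mapsto2x+1$), yet $\fa x{x\neq a}$ fails. Thus $M$ must admit halving, which already rules out $\ZZ[X]^+$ (indeed $x\neq X$ refutes $\ind^{\text{\p}}$ there); the failure of division by $3$ must therefore be engineered in the coefficients rather than through a missing quotient of the ``too big'' kind. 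Accordingly I would take $M=D^+$, the non-negative part of the discretely ordered domain $D$ of generalized polynomials $\sum_i a_i\,s^{\gamma_i}$ with exponents $\gamma_i$ in a divisible group such as $\mathbb{Q}_{\geq0}$ and coefficients $a_i\in\ZZ[1/2]$, subject to the constant coefficient lying in $\ZZ$, ordered so that $s=s^1$ is infinite and signs are read off leading coefficients. The integer constant term makes $1$ the least positive element, so $D^+\models\PAminus$; halving succeeds because $\ZZ[1/2]$ is closed under division by $2$; but $s=3y+r$ would force the coefficient $1/3\notin\ZZ[1/2]$, so $a:=s$ witnesses the failure of divisibility by $3$. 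The exponent group is taken divisible in order to disarm the cheap counterexamples: in a non-divisible group the open set $\{x:x^3\leq s\}$ would equal the proper, doubling-closed set $\{x:\text{leading exponent}<1/3\}$, whereas over $\mathbb{Q}$ the bound $s^{1/3}$ lies in $D$ and doubling it escapes the set.

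The hard part, and the real content of the theorem, is to verify $M\models T$: that no quantifier-free $\Lor$-formula $\theta(x,\bar z)$ with parameters in $M$ defines a proper subset containing $0$ that is closed under $x\mapsto2x$ and $x\mapsto2x+1$. The natural route is to pass to the real closure of the fraction field of $D$, where the set defined by $\theta$ is a finite union of intervals, and to argue that closure under doubling together with the set containing $0$ forces it to be all of $M$. The delicacy is that $M$ is deliberately \emph{not} an integer part of its fraction field---this is exactly what lets division by $3$ fail---so the endpoints of these intervals are in general not approximated within a bounded distance by elements of $M$, and the straightforward ``least counterexample, then halve'' descent collapses, in precisely the way it would for the externally-defined chain $s,\ s/2,\ s/4,\dots\in M$. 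I therefore expect the verification to require a valuation-theoretic analysis: for each possible interval endpoint one must control the leading monomials of the nearby elements of $M$ and track how doubling moves them, replacing the order-theoretic descent that suffices over stronger base theories. This verification is the step I expect to be the main obstacle.
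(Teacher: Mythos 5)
Your surrounding reductions are correct: the proposition immediately preceding the theorem gives that the displayed theory $T$ is contained in $\IOpen$; Lemma~\ref{lem:Div} together with Lemma~\ref{lem:PA-}\partref{part:PA-/ee} gives $\IOpen\proves\fa x{\ex y{(x=3y\vee x=3y+1\vee x=3y+2)}}$; and your observation that every model of $T$ must satisfy $\fa x{\ex y{(x=2y\vee x=2y+1)}}$ (via the parametric instance $\theta(x,z)\equiv x\not=z$, whose \p-induction axiom fails at any element that is neither even nor odd) is right and correctly rules out $\ZZ[X]^+$. The proposed structure --- generalized polynomials with exponents in $\mathbb{Q}_{\geq0}$, coefficients in $\ZZ[1/2]$, and integer constant term --- is indeed the non-negative part of a discretely ordered ring, so it models $\PAminus$; it admits halving; and the element $s$ is congruent to none of $0,1,2$ modulo $3$, since a quotient would need the coefficient $1/3\notin\ZZ[1/2]$.

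The genuine gap is the one you flag yourself: you never verify that this model satisfies $\ind^{\text{\p}}_x\theta$ for every quantifier-free $\Lor$~formula $\theta$ with parameters, and that verification is not a residual technicality but the entire mathematical content of the theorem. Everything you do establish (the inclusion $T\subseteq\IOpen$, the failure of division by $3$, the necessity of halving) is bookkeeping around that missing core; a model of $\PAminus$ with halving but without division by $3$ proves nothing until one shows it satisfies the scheme, and your proposal ends exactly where that argument begins. Note that the paper does not supply this argument either: its proof of the theorem is a citation of Theorem~2 of Johannsen~\cite{inproc:weakness-sbdd-pind}, so the intended routes are either to quote that result or to reconstruct its proof in full. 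Your plan of attack for the reconstruction (pass to the real closure of the fraction field, where the definable set is a finite union of intervals, and control endpoints valuation-theoretically --- here one would exploit that the value group is $\mathbb{Q}$, so every endpoint has a rational leading exponent, and then trap a coinitial piece of the halving chain of an omitted element inside a single interval to reach a contradiction) is viable in outline, but since none of it is carried out, the non-provability claim remains unproven as submitted.
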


\begin{proof}
This follows from Theorem~2 in Johannsen~\cite{inproc:weakness-sbdd-pind}.
\end{proof}

\subsection{Variations of induction scheme and proof shape}

All results of Section~\ref{sec:proof_shapes},
   with the possible exception of Theorem~\ref{thm:D0-n-uni},
 actually remain true when the induction scheme is replaced
  by another scheme in Proposition~\ref{prop:altPA}.
The proofs are straightforward modifications of what we presented there, and
 hence are left to the reader.
In this section we abstract one part that may be of independent interest.
This stems from the observation that, in a sense,
 Proposition~\ref{prop:PA-merge} is the only property of~$\PA$
  one needs in establishing Theorem~\ref{thm:Th(PA)=indn}.

\begin{defi}
Let $\lang$ be a language.
Denote by \defm{$\lang(\newX)$} the language obtained from~$\lang$
 by adding one new unary predicate symbol~$\newX$.
Let $S$ be a sentence in~$\lang(\newX)$.
If $\phi(x,\bar z)$ is an $\lang$~formula,
 then define \defm{$S\phi$} to be
  the universal closure of the $\lang$~formula
   obtained from~$S$ by replacing
    each occurrence of $\newX(\dots)$ by~$\phi(\dots,\bar z)$.
The \defm{scheme determined by~$S$}, denoted \defm{$S\lang$},
 is defined to be
 \begin{math}
  \{S\phi:\phi\in\lang\}
 \end{math}.
The scheme $S\lang$ is \defm{mergeable} over an $\lang$~theory~$B$
 if for all $\lang$~formulas $\psi_0,\psi_1$,
  there is an $\lang$~formula $\phi$ such that
  \begin{math}
   B+S\phi\proves S\psi_0\wedge S\psi_1
  \end{math}.
\end{defi}

Proposition~\ref{prop:PA-merge} demonstrates the mergeability of
 the successor induction scheme over~$\PAminus$.
A very similar proof shows that
 all other schemes in Proposition~\ref{prop:altPA} are also mergeable over~$\PAminus$.
Another example is the comprehension scheme in second-order arithmetic,
 which is mergeable over rather weak base theories.

\begin{thm}\label{thm:merge}
Fix a language~$\lang$.
Let $S\lang$ be a scheme and $B$~be an $\lang$~theory
 such that $B\proves S\top$.
The following are equivalent.
\begin{enumerate}\tfaeenum
\item $S\lang$ is mergeable over~$B$.\label{cond:merge/merge}
\item For every $\lang$~sentence~$\sigma$ provable from $B+S\lang$,
  there exists an $\lang$~formula~$\phi(x,\bar z)$
   such that $B\proves S\phi\nsc\sigma$.\label{cond:merge/rev}
\end{enumerate}
\end{thm}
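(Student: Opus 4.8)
The plan is to read this as the abstract skeleton of Theorem~\ref{thm:Th(PA)=indn}, isolating where mergeability and the hypothesis $B\proves S\top$ are used. The direction $\itemref{cond:merge/rev}\Then\itemref{cond:merge/merge}$ I would dispatch first, as it is essentially immediate. Given $\lang$~formulas $\psi_0,\psi_1$, observe that $S\psi_0\wedge S\psi_1$ is an $\lang$~sentence provable from $B+S\lang$, since both conjuncts lie in $S\lang$. Applying \itemref{cond:merge/rev} with $\sigma:=S\psi_0\wedge S\psi_1$ produces an $\lang$~formula $\phi$ with $B\proves S\phi\nsc(S\psi_0\wedge S\psi_1)$; reading off the left-to-right half of this equivalence gives $B+S\phi\proves S\psi_0\wedge S\psi_1$, which is exactly mergeability.

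For the substantive direction $\itemref{cond:merge/merge}\Then\itemref{cond:merge/rev}$, I would first reduce an arbitrary $\sigma$ provable from $B+S\lang$ to a single scheme instance. By compactness there are finitely many $\lang$~formulas $\psi_1,\dots,\psi_n$ with $B+\{S\psi_1,\dots,S\psi_n\}\proves\sigma$. Since mergeability is stated only for two formulas, I would iterate it (merge $\psi_1,\psi_2$ into some $\chi_1$, then $\chi_1$ with $\psi_3$, and so on, the degenerate case $n=0$ being handled by $\phi:=\top$ using $B\proves S\top$), obtaining a single $\lang$~formula $\psi$ with $B+S\psi\proves S\psi_1\wedge\dots\wedge S\psi_n$, and hence $B\proves S\psi\then\sigma$.

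It then remains, exactly as in Theorem~\ref{thm:Th(PA)=indn}, to upgrade this one-directional implication to an equivalence by setting
\[
  \phi(x,\bar z)\qequals\neg\sigma\then\psi(x,\bar z).
\]
I would verify $B\proves S\phi\nsc\sigma$ by cases on $\sigma$, working inside $B$. If $\sigma$ holds, then $\phi(t,\bar z)$ is provably equivalent to $\top$ at every argument, so substituting equivalents into $S$ gives $S\phi\nsc S\top$; as $B\proves S\top$, this yields $\sigma\then S\phi$. If $\neg\sigma$ holds, then $\phi(t,\bar z)$ is provably equivalent to $\psi(t,\bar z)$, so $S\phi\nsc S\psi$; combining this with the contrapositive $\neg\sigma\then\neg S\psi$ of $B\proves S\psi\then\sigma$ gives $\neg\sigma\then\neg S\phi$, i.e.\ $S\phi\then\sigma$. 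The two cases together deliver the equivalence.

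The step I expect to require the most care is the substitution of provable equivalents inside $S$, and specifically the passage through the universal closure over the parameters $\bar z$. Because $\sigma$ is a \emph{closed} sentence, the equivalences $\phi(t,\bar z)\nsc\top$ (under $\sigma$) and $\phi(t,\bar z)\nsc\psi(t,\bar z)$ (under $\neg\sigma$) hold uniformly in all quantified and parameter variables of $S$, so a routine induction on the structure of $S$ propagates them to $S\phi\nsc S\top$ and $S\phi\nsc S\psi$ respectively, with the closures matching since $S\top$ is already closed. This is precisely where the standing hypothesis $B\proves S\top$ becomes essential: it replaces the fact, used silently in Theorem~\ref{thm:Th(PA)=indn}, that the trivial successor-induction axiom is logically valid.
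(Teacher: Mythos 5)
Your proposal is correct and takes essentially the same route as the paper: your argument for $\itemref{cond:merge/rev}\Then\itemref{cond:merge/merge}$ is exactly the paper's (instantiate $\sigma=S\psi_0\wedge S\psi_1$ and read off one direction of the equivalence), and your argument for $\itemref{cond:merge/merge}\Then\itemref{cond:merge/rev}$ is precisely the imitation of the proof of Theorem~\ref{thm:Th(PA)=indn} that the paper leaves to the reader, with compactness plus iterated merging supplying $\psi$, the definition $\phi=\neg\sigma\then\psi$, and the case analysis. You also correctly identify the role of the hypothesis $B\proves S\top$ as the abstract stand-in for the trivial inductiveness of $x=x$ used silently in the concrete proof.
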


\begin{proof}
For $\condref{cond:merge/merge}\Then\condref{cond:merge/rev}$,
 imitate the proof of Theorem~\ref{thm:Th(PA)=indn}.
Conversely, suppose \condref{cond:merge/rev}~holds.
Pick arbitrary $\lang$~formulas $\psi_0$ and $\psi_1$.
Define $\sigma=S\psi_0\wedge S\psi_1$.
Then $B+S\lang\proves\sigma$ trivially.
By \condref{cond:merge/rev}, we get an $\lang$~formula $\phi$
 such that $B\proves S\phi\nsc\sigma$ and
  hence $B+S\phi\proves S\psi_0\wedge S\psi_1$, as required.
\end{proof}

\begin{cor}
Let $n\in\IN$ and let $\sigma$ be an $\Lor$~sentence.
The following are equivalent.
\begin{enumerate}[leftmargin=8mm] \tfaeenum
\item $\PA\proves\sigma$.
\item There is a $<$-inductive formula $\phi(x)$
  such that $\PAminus\proves\fa x{\phi(x)}\nsc\sigma$.
\item There is an $(n+1)$-step inductive formula $\phi(x)$
  such that $\PAminus\proves\fa x{\phi(x)}\nsc\sigma$.
\item There is an $(n+1)$-inductive formula $\phi(x)$
  such that $\PAminus\proves\fa x{\phi(x)}\nsc\sigma$.
\item There is a \p-inductive formula $\phi(x)$
  such that $\PAminus\proves\fa x{\phi(x)}\nsc\sigma$. \qed
\end{enumerate}
\end{cor}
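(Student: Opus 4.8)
The plan is to view the corollary as the amalgamation, over the four alternative schemes, of the scheme-analogue of Theorem~\ref{thm:Th(PA)=indn}, routed through the abstract Theorem~\ref{thm:merge}. Fix $n\in\IN$. Each scheme occurring in Proposition~\ref{prop:altPA} is of the form $S\Lor$ for the evident $\Lor(\newX)$-sentence $S$ obtained by writing the induction predicate as the fresh unary symbol $\newX$; for example, polynomial induction is determined by $S=\newX(0)\wedge\fa x{(\newX(x)\then\newX(2x)\wedge\newX(2x+1))}\then\fa x{\newX(x)}$, and the three others are written analogously from their defining axioms. For each such $S$ we have $\PAminus\proves S\top$, since substituting the trivially true predicate $x=x$ for $\newX$ makes the antecedent and the conclusion $\fa x{(x=x)}$ provable. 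Moreover, in each case the corresponding notion of inductiveness is precisely the $\PAminus$-provability of the antecedent of $S$ with $\newX$ replaced by the formula; hence whenever $\phi(x)$ is $\mathcal S$-inductive we have $\PAminus\proves S\phi\nsc\fa x{\phi(x)}$.

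For the direction from any of conditions (ii)--(v) to (i) I would argue uniformly. Suppose $\phi(x)$ is $\mathcal S$-inductive and $\PAminus\proves\fa x{\phi(x)}\nsc\sigma$. Since $\PAminus$ proves the antecedent of the instance $S\phi$, modus ponens gives $\PAminus+S\phi\proves\fa x{\phi(x)}$, hence $\PAminus+S\phi\proves\sigma$. As $S\phi$ belongs to the scheme and $\PAminus+S\Lor=\PA$ by Proposition~\ref{prop:altPA}, we conclude $\PA\proves\sigma$.

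For the direction from (i) to each of (ii)--(v) I would first record the scheme-analogue of Proposition~\ref{prop:PA-merge}: every one of the four schemes is mergeable over $\PAminus$. This is the ``very similar proof'' flagged after that proposition, and it reduces to checking that $\mathcal S$-inductiveness is preserved under conjunction, which holds because conjunction commutes with the substitutions and bounded conjunctions $\bigwwedge_{k<n+1}$ appearing in each $S$. With mergeability and $\PAminus\proves S\top$ available, Theorem~\ref{thm:merge}, applied with $B=\PAminus$ and $\PAminus+S\Lor=\PA$, yields for the given $\sigma$ a formula $\phi_0(x,\bar z)$ with $\PAminus\proves S\phi_0\nsc\sigma$. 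Finally I would apply the scheme-analogue of Lemma~\ref{lem:ax=fma}: putting $\phi(x)=\fa{\bar z}{(\mathrm{ant}_S(\phi_0,\bar z)\then\phi_0(x,\bar z))}$, where $\mathrm{ant}_S$ is the antecedent of the scheme, one verifies that $\phi(x)$ is a single-variable $\mathcal S$-inductive formula and that $\PAminus\proves S\phi_0\nsc\fa x{\phi(x)}$. Chaining the two equivalences gives $\PAminus\proves\fa x{\phi(x)}\nsc\sigma$, as required.

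The main obstacle is the per-scheme verification of these two ingredients for the schemes whose antecedents are not as transparent as the successor one, namely $(n+1)$-induction and polynomial induction. There one must confirm that the witness $\phi(x)=\fa{\bar z}{(\mathrm{ant}_S(\phi_0,\bar z)\then\phi_0(x,\bar z))}$ provably satisfies the relevant inductiveness antecedent over $\PAminus$ --- for polynomial induction, that $\phi(0)$ and $\fa x{(\phi(x)\then\phi(2x)\wedge\phi(2x+1))}$ follow --- and that $S\phi_0$ is $\PAminus$-equivalent to $\fa x{\phi(x)}$. These are the same manipulations as in the proofs of Lemma~\ref{lem:ax=fma} and Proposition~\ref{prop:PA-merge}, with the successor antecedent replaced by the scheme's antecedent, and I expect only bookkeeping rather than a genuine difficulty.
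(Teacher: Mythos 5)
Your proposal is correct and takes essentially the same route the paper intends for this corollary: Proposition~\ref{prop:altPA} to identify $\PA$ with $\PAminus$ plus each scheme, mergeability of each scheme over $\PAminus$ fed into Theorem~\ref{thm:merge}, and the scheme-analogue of Lemma~\ref{lem:ax=fma} to turn the resulting instance $S\phi_0$ (with $\PAminus\proves S\phi_0\nsc\sigma$) into $\fa x{\phi(x)}$ for a formula $\phi(x)$ that is inductive in the relevant sense. The per-scheme verifications you defer are indeed the routine bookkeeping the paper itself leaves to the reader.
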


\subsection{Walther's method for comparing induction schemes}\label{sec:walther}

In the context of inductive theorem proving, Walther~\cite[Section~7]{inproc:computindax} proposed a method
 of comparing induction axioms.
Let us formulate his method as follows.
If $B$~is a finite set of natural numbers and
   $S$~is a finite set of $\Lor$~terms,
 then define \defm{$\PA(B,S)$} to be the $\Lor$~theory axiomatized by~$\PAminus$
  and the following axioms for all $\Lor$~formulas~$\theta(x,\bar z)$:
   \begin{equation*}
    \fa{\bar z}{\Bigl(
     \bigwwedge_{k\in B}\theta(k,\bar z)
     \wedge\bigwwedge_{t\in S}\fa x{\bigl(
      \theta(x,\bar z)\then\theta(t(x,\bar z),\bar z)
     \bigr)}
     \then\fa x{\theta(x,\bar z)}
    \Bigr)}.
   \end{equation*}
In this terminology,
 the $(n+1)$-step induction scheme
  in Proposition~\ref{prop:altPA}\itemref{item:altPA/step} is essentially
  \begin{equation*}
   \PA(\{k\in\IN:k<n+1\},\{x+n+1\}).
  \end{equation*}
(Some choices of $B$ and~$S$
 may give rise to a scheme $\PA(B,S)$ that is not true in~$\IN$,
  but this does not concern us here.)
Walther observed that if $B\subseteq B'$ and $S\subseteq S'$,
 then $\PA(B,S)\proves\PA(B',S')$.

In this section we make the observation (not stated in~\cite{inproc:computindax}) that
this method for comparing induction schemes is incomplete
 in the sense that the converse implication is not true.
To see this, take $m,n\in\IN$ such that $m<n$.
Then
 \begin{align*}
  &\PA(\{k\in\IN:k<n+1\},\{x+m+1\})\\
  &\equiv\PA(\{k\in\IN:k<m+1\},\{x+m+1\})\\
  &\equiv\PA(\{k\in\IN:k<n+1\},\{x+n+1\})
 \end{align*} by Proposition~\ref{prop:altPA}.
Clearly $\{x+m+1\}\not\subseteq\{x+n+1\}$.

Thanks to Theorem~\ref{thm:IOpen},
 the observation in the previous paragraph has an analogue
  in the quantifier-free context.
We formulate this as follows.
If $B$~is a finite set of natural numbers and
   $S$~is a finite set of $\Lor$~terms,
 then define \defm{$\IOpen(B,S)$} to be the $\Lor$~theory axiomatized by~$\PAminus$
  and the following axioms for all quantifier-free $\Lor$~formulas~$\theta(x,\bar z)$:
   \begin{equation*}
    \fa{\bar z}{\Bigl(
     \bigwwedge_{k\in B}\theta(k,\bar z)
     \wedge\bigwwedge_{t\in S}\fa x{\bigl(
      \theta(x,\bar z)\then\theta(t(x,\bar z),\bar z)
     \bigr)}
     \then\fa x{\theta(x,\bar z)}
    \Bigr)}.
   \end{equation*}
Clearly,
 if $B\subseteq B'$ and $S\subseteq S'$,
  then $\IOpen(B,S)\proves\IOpen(B',S')$.
The converse is not true
 because whenever $m,n\in\IN$ such that $m<n$,
  we have
  \begin{align*}
   &\IOpen(\{k\in\IN:k<n+1\},\{x+m+1\})\\
   &\equiv\IOpen(\{k\in\IN:k<m+1\},\{x+m+1\})\\
   &\equiv\IOpen(\{k\in\IN:k<n+1\},\{x+n+1\})
  \end{align*} by Theorem~\ref{thm:IOpen},
  but $\{x+m+1\}\not\subseteq\{x+n+1\}$.

\section{Non-closure properties of cuts}\label{sec:non-closure_of_cuts}

In this short technical section we will establish an auxiliary result on
 cuts that will be used in Section~\ref{sec:comparing_solutions} and
 Section~\ref{sec:comparing_notions} for comparing solutions to {\ITP }
 and notions of inductiveness respectively.
Note that the meaning of the word ``cut'' here
 is different from that in Section~\ref{sec.nec_non_an}.
In the study of weak theories of arithmetic,
 these cuts are frequently used in interpretations.
They also serve as notions of smallness in many arguments.
In the following,
 we borrow some terminology from Visser~\cite{art:corto-basso}.
\begin{defi}
A \defm{cut} is an inductive formula~$\phi(x)$ such that
 \begin{equation*}
  \PAminus\proves\fa{x,y}{\bigl(x<y\wedge\phi(y)\then\phi(x)\bigr)}.
 \end{equation*}
An \defm{a-cut} is a cut~$\phi(x)$ such that
 \begin{equation*}
  \PAminus\proves\fa x{\bigl(\phi(x)\then\phi(x+x)\bigr)}.
 \end{equation*}
An \defm{am-cut} is an a-cut~$\phi(x)$ such that
 \begin{equation*}
  \PAminus\proves\fa x{\bigl(\phi(x)\then\phi(x\times x)\bigr)}.
 \end{equation*}
\end{defi}

Inductive formulas, cuts, a-cuts, and am-cuts are all different notions.
This fact seems to be well known,
 but we can find no proof of this in the literature.
So we include one here.
The proof assumes familiarity
 with some parts of the H\'ajek--Pudl\'ak book~\cite{book:hajek+pudlak}.

\begin{lem}\label{lem:+nx}
There are $\Lor$~formulas $\phi(x)$ and~$\delta(x)$ such that
\begin{enumerate}\propenum
\item $\phi(x)$ is an a-cut but not an am-cut;\label{cond:+nx/cut}
\item $\PAminus\proves\exi x{\delta(x)}
       \wedge\fa x{\bigl(\delta(x)\then\phi(x)\bigr)}$; and\label{cond:+nx/c-in-phi}
\item $\PAminus\proves\neg\fa x{\bigl(\phi(x)\then\phi(x\times x)\bigr)}
       \then\fa x{\bigl(\delta(x)\then\phi(x)\wedge\neg\phi(x\times x)\bigr)}$.\label{cond:+nx/c2>phi}
\end{enumerate}
\end{lem}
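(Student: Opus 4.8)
The plan is to produce a single additively closed cut $\phi$ whose closure under squaring can break down, together with a formula $\delta$ naming the exact point of break-down, and then to exhibit one model of $\PAminus$ witnessing the failure. The first move is to recast the property ``not an am-cut'' in a usable form. Given any a-cut $\phi(x)$, put $K(y)$ to be $\phi(y\times y)$. Since $\PAminus\proves\fa y{(y\le y\times y)}$ and an a-cut is downward closed and closed under doubling (hence, by trichotomy, under arbitrary sums), a routine check over $\PAminus$ shows that $K$ is again a cut, that $\PAminus\proves\fa y{(K(y)\then\phi(y))}$, and that $\phi$ is an am-cut if and only if $\PAminus\proves\fa y{(\phi(y)\then K(y))}$, i.e.\ exactly when $K$ and $\phi$ coincide. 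Thus $\phi$ fails to be an am-cut precisely when some model of $\PAminus$ separates the cuts $K\subseteq\phi$, and in such a model the failure set $\{x:\phi(x)\wedge\neg\phi(x\times x)\}=\phi\setminus K$ is, by the same downward-closure computations, an end segment of $\phi$ that is closed under doubling.

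For $\phi$ itself I would start from a definable proper cut $I$ of $\PAminus$ and pass to its additive shortening $\phi(x):=\fa u{(I(u)\then I(u+x))}$, which one verifies over $\PAminus$ to be an a-cut contained in $I$; this discharges requirement~\partref{cond:+nx/cut} apart from the non-am-cut part, so that only the \emph{failure} of squaring-closure remains to be arranged, by a suitable choice of $I$ and of a model. I would then let $\delta(x)$ name the lower endpoint of the failure segment, with a default of $0$ in the am-cut case, i.e.\ let $\delta(x)$ be
\[
  \bigl(\phi(x)\wedge\neg\phi(x\times x)\wedge\fa y{(y<x\then\neg(\phi(y)\wedge\neg\phi(y\times y)))}\bigr)\vee\bigl(x=0\wedge\fa y{(\phi(y)\then\phi(y\times y))}\bigr).
\]

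Granting that this does define a unique element $c$, conditions~\partref{cond:+nx/c-in-phi} and~\partref{cond:+nx/c2>phi} come out almost by construction. Everything below $c$ satisfies $K$, hence $\phi$, and in the failure case $\phi\neq K$, so the endpoint $c$ must itself lie in $\phi$ (otherwise $\phi\subseteq\{x:x<c\}\subseteq K$, forcing $\phi=K$); thus $c\in\phi$ while $\neg\phi(c\times c)$, which is exactly what~\partref{cond:+nx/c2>phi} asserts once the hypothesis $\neg\fa x{(\phi(x)\then\phi(x\times x))}$ is in force. In the am-cut case the second disjunct pins $c=0$, and $0\in\phi$ trivially, giving~\partref{cond:+nx/c-in-phi}.

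The crux, and the step I expect to be the main obstacle, is the unique-existence clause $\PAminus\proves\exi x{\delta(x)}$. In the am-cut case the default value $0$ settles it, but in a failure model the first disjunct asks for the \emph{least} $x$ with $\phi(x)\wedge\neg\phi(x\times x)$, and $\PAminus$ does not prove the least-number principle, so this least element need not exist in an arbitrary model. I therefore have to engineer $I$ so that $\PAminus$ itself proves that the failure set, when nonempty, has a least element — for instance by arranging that the boundary of $\{y:\phi(y\times y)\}$ is forced by the algebraic structure of $I$ rather than by an unprovable minimization. Note that this endpoint $c$ is necessarily nonstandard, since the standard numbers all lie in $\phi$ and are closed there under squaring; hence whatever model witnesses the failure must carry a \emph{definable} nonstandard element. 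This already rules out the polynomial semiring $\ZZ[X]^+$, all of whose parameter-free definable elements are standard (the shifts $X\mapsto X+1$ being automorphisms), so the separating model has to be chosen with care. This is the point at which I expect to rely on the model-theoretic machinery for weak arithmetic in H\'ajek--Pudl\'ak.
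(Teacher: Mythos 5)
You have the right general shape — a definable nonstandard element is indeed required (your argument ruling out $\ZZ[X]^+$ via the automorphism $X\mapsto X+1$ is correct), the additive shortening does provably yield an a-cut, and the default value $0$ for $\delta$ in the closed case is the right move — but the step you yourself flag as the crux is a genuine gap, and the repair you propose points in the wrong direction. You want $\delta$ to name the \emph{least} element of the failure set $\{x:\phi(x)\wedge\neg\phi(x\times x)\}$, and so you need $\PAminus$ to prove that this set, when nonempty, has a minimum; you then hope to ``engineer $I$'' so that this instance of minimization becomes provable. Nothing in the sketch produces such an $I$, and this requirement is not only unproved but unnecessary: in the paper's proof the $\delta$-element is \emph{not} a least point of failure (the failure set need not have one). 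It is an element fixed \emph{in advance} by an independent bounded formula $\delta_0$ — concretely, in a model $M\models\PA+\neg\Con(\PA)$, the least code of a $\PA$-proof of contradiction, which is nonstandard and $\Delta_0$-definable.

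The idea your proposal is missing is relativization by a single sentence, which is what converts facts true in one constructed model into theorems of $\PAminus$. Let $\phi_0(x)$ be $\ex u{(2^x=u)}$ and let $\sigma$ be the one $\Lor$~sentence asserting the whole configuration: $\phi_0(0)$, downward closure and closure under doubling of~$\phi_0$, $\exi x{\delta_0(x)}$, and that the unique $\delta_0$-element lies in $\phi_0$ while its square does not. Then set
\begin{equation*}
 \phi(x)\qequals\sigma\then\phi_0(x),
 \qquad
 \delta(x)\qequals(\sigma\then\delta_0(x))\wedge(\neg\sigma\then x=0).
\end{equation*}
Every model of $\PAminus$ satisfying $\neg\sigma$ makes $\phi$ hold of everything and $\delta$ define~$0$, so the closure clauses of~\condref{cond:+nx/cut}, condition~\condref{cond:+nx/c-in-phi}, and condition~\condref{cond:+nx/c2>phi} (whose hypothesis is then false) are trivial; every model satisfying $\sigma$ has the required facts, including unique existence of the $\delta$-element, as literal conjuncts of~$\sigma$. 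All mathematical content is thereby pushed into exhibiting one model of~$\sigma$ — the half you defer to ``H\'ajek--Pudl\'ak machinery'': take $a$ nonstandard and boundedly definable in $M\models\PA$, form the initial segment $I=\{x\in M:\text{$x<2^{na}$ for some $n\in\IN$}\}$, note $I\models\ind\Delta_0$, that $\phi_0$ defines in~$I$ the additively closed cut $\log I=\{x:\text{$x<na$ for some $n\in\IN$}\}$ containing $a$ but not $a^2$, and that $\delta_0$ still defines $a$ in~$I$ by absoluteness of bounded formulas under taking initial segments. Without the relativization device, conditions~\condref{cond:+nx/c-in-phi} and~\condref{cond:+nx/c2>phi} are out of reach no matter how you choose $I$; with it, the provable minimization you were seeking is not needed at all.
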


\begin{proof}
Let $M\models\PA$ which
 contains a nonstandard element~$a$ that is definable by a bounded formula.
For instance, if $M\models\neg\Con(\PA)$,
 then we can take $a$ to be
  the least (code of a) proof of contradiction from~$\PA$ in~$M$.
Fix a bounded formula $\delta_0(x)$ such that
 \begin{math}
  M\models\exi x{\delta_0(x)}\wedge\delta_0(a)
 \end{math}.
Notice
 \begin{equation*}
  I=2^{\IN a}=\{x\in M:\text{$x<2^{na}$ for some $n\in\IN$}\}
 \end{equation*} is an initial segment of~$M$,
 and $2^{ma}\times 2^{na}=2^{(m+n)a}\in I$ whenever $m,n\in\IN$.
So $I$ is itself an $\Lor$~structure.
Since $\delta_0(x)$ is a bounded formula,
 it is straightforward~\cite[Remark~IV.1.18]{book:hajek+pudlak} to check that
  $a$~is the unique element which satisfies $\delta_0(x)$ in~$I$.
We know $I\models\ind\Delta_0$
 because $\ind\Delta_0$ is preserved under
  taking initial segments~\cite[Remark~IV.1.21]{book:hajek+pudlak}.
Thus \begin{equation*}
 \log I=\{x\in I:I\models\ex u(2^x=u)\}
\end{equation*} is an initial segment of~$I$ that is closed under~$+$
 because
 \begin{align*}
  \ind\Delta_0
  &\proves
   \fa{x,y}{\bigl(
    \ex u{(2^x=u)}\wedge y\leq x\then\exle vu{(2^y=v)}
   \bigr)}\\
  &\phantom{{}\proves{}}
   \wedge\fa{u,v,x,y}{(
    2^x=u\wedge2^y=v\then2^{x+y}=uv
   )},
 \end{align*}
 as verified in H\'ajek--Pudl\'ak~\cite[Lemma~V.3.8(iii)
                                        and page~302]{book:hajek+pudlak}.
We know $\log I$ contains~$a$ but not~$a^2$ because
 \begin{equation*}
  \log I=\{x\in I:\text{$x<na$ for some $n\in\IN$}\}
 \end{equation*} and $a\not\in\IN$.
In particular, it is not closed under~$\times$.
The whole situation in~$I$ can be captured by the sentence
 \begin{equation*}
  \sigma\qequals\begin{aligned}[t]
   &\phi_0(0)\wedge\fa x{(\phi_0(x)\then\phi_0(x+x))}
    \wedge\fa{x,y}{(x<y\wedge\phi_0(y)\then\phi_0(x))}\\
   &\wedge\exi x{\delta_0(x)}
    \wedge\ex x{(\delta_0(x)\wedge\phi_0(x)\wedge\neg\phi_0(x\times x))},
  \end{aligned}
 \end{equation*}
 where $\phi_0(x)$ denotes the formula $\ex u(2^x=u)$.
Clearly,
 \begin{align*}
    \phi(x)&\qequals\sigma\then\phi_0(x)\qand\\
  \delta(x)&\qequals(\sigma\then\delta_0(x))\wedge(\neg\sigma\then x=0)
 \end{align*} have the properties we want.
\end{proof}

\begin{cor}\label{cor:ncl}\partenum \leavevmode
\begin{enumerate}[leftmargin=7mm]
\item There is an inductive formula that is not a cut.\label{part:ncl/ncut}
\item There is a cut that is not an a-cut.\label{part:ncl/n+}
\item There is an a-cut that is not an am-cut.\label{part:ncl/+nx}
\end{enumerate}
\end{cor}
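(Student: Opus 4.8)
The plan is to derive all three separations directly from Lemma~\ref{lem:+nx}, which already packages the hard model-theoretic work into a single pair of formulas $\phi(x)$ and $\delta(x)$. The strategy is to exhibit concrete witnesses and, for each claim, produce a model of~$\PAminus$ in which the defining implication of the stronger notion fails while the weaker notion is (provably) satisfied.

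For part~\itemref{part:ncl/+nx} the witness is immediate: by Lemma~\ref{lem:+nx}\partref{cond:+nx/cut}, the formula $\phi(x)$ is an a-cut but not an am-cut, so there is nothing further to prove. For part~\itemref{part:ncl/n+} I would like a cut that is not an a-cut, i.e.\ one that is closed downwards but not under~$x\mapsto x+x$. The natural move is to run the construction of Lemma~\ref{lem:+nx} one level lower: instead of separating $\log I$ (closed under~$+$ but not~$\times$) from its square, separate a cut closed only under successor from its double. Concretely I would take $M\models\PA$ with a bounded-definable nonstandard $a$, let $J=\IN a=\{x\in M: x<na \text{ for some } n\in\IN\}$, observe that $J$ is a cut closed under~$+$ inside the relevant segment but that $a+a$ need not lie below any $na$ unless we drop to the sub-cut $\{x: x<a\}$; the cleanest version is to use $\log I$ itself, which is a genuine cut (downward closed, inductive) yet \emph{not} an a-cut, since $a\in\log I$ but $a+a$ need not be—here one checks $\log I$ is closed under~$+$ only up to its own bound. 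Mirroring the $\sigma$-relativisation trick of Lemma~\ref{lem:+nx}, I would encode ``$\log I$ is a cut but not an a-cut'' into a single sentence $\sigma'$ and set the witness to be $\sigma'\then(\text{cut formula})$, guaranteeing inductiveness and downward closure provably in~$\PAminus$ while failing $\fa x(\phi(x)\then\phi(x+x))$ in the intended model.

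For part~\itemref{part:ncl/ncut} I need an inductive formula that is not a cut, i.e.\ not downward closed. This is the easiest direction and does not even require the apparatus of Lemma~\ref{lem:+nx}: any inductive formula whose defining set is not an initial segment suffices. The cleanest witness is to take a bounded-definable nonstandard $a$ as before and let $\phi(x)$ be relativised to assert membership in a set that contains large elements but omits some smaller one—for instance $\phi(x) \;\equiv\; \sigma''\then(x\neq a)$ where $\sigma''$ captures that $a$ is the unique element satisfying the relevant bounded formula; this is inductive (both $0\neq a$ and the successor step go through since $a$ is nonstandard and isolated) yet violates downward closure at the pair $(a,a+1)$.

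The main obstacle will be part~\itemref{part:ncl/n+}: producing a formula that is provably (over~$\PAminus$) a cut but demonstrably not an a-cut requires the same delicate interplay between a bounded-definable element and closure under~$+$ that powers Lemma~\ref{lem:+nx}, and one must verify that the downward-closure clause survives the $\sigma'$-relativisation while the doubling clause genuinely fails in the witnessing model~$I$. Parts \itemref{part:ncl/ncut} and~\itemref{part:ncl/+nx} are routine once the right witnesses are named; I expect the bulk of the writing to consist of citing Lemma~\ref{lem:+nx} for~\itemref{part:ncl/+nx} and adapting its proof, one closure level down, for~\itemref{part:ncl/n+}.
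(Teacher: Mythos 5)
Your witness for part~\itemref{part:ncl/+nx} is exactly the paper's: take $\phi(x)$ from Lemma~\ref{lem:+nx}\condref{cond:+nx/cut}. The other two witnesses, however, contain genuine errors. For part~\itemref{part:ncl/ncut}, the formula $\sigma''\then x\neq a$ is \emph{not} inductive: by Lemma~\ref{lem:PA-}\partref{part:PA-/pred}, in any model of $\PAminus$ in which $\sigma''$ holds, the element $a$ (being nonzero) has a predecessor $b$ with $b+1=a$, and then your formula holds at $b$ but fails at $b+1$. So the induction step is refutable rather than provable over $\PAminus$ --- deleting a single nonzero point can never yield an inductive formula, no matter how "isolated" the point is. What is needed instead is to adjoin to the cut $\phi(x)$ a whole \emph{final segment} above the gap: the paper takes $\phi(x)\vee\ex c{(\delta(c)\wedge x\geq c^2)}$, which is provably inductive because both disjuncts are closed under successor, yet fails downward closure at $c^2$ in any model where $\neg\phi(c^2)$.

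For part~\itemref{part:ncl/n+}, both concrete sets you name are closed under addition and hence under doubling, so neither separates cuts from a-cuts: $J=\IN a$ satisfies $x<ma\wedge y<na\then x+y<(m+n)a$, and $\log I$ is closed under $+$ --- this is proved explicitly inside Lemma~\ref{lem:+nx} (it is precisely why $\phi(x)$ is an a-cut), so your claim that $a\in\log I$ but $a+a$ need not be contradicts the lemma you are invoking. The underlying difficulty is that the natural cut not closed under doubling, namely $\{x:x<a+n\text{ for some }n\in\IN\}$, is not first-order definable, so one cannot simply "run the construction one level lower." The paper's solution is a translation trick: shift the already-definable cut $\phi$ upward by $c^2$, i.e., take
\begin{equation*}
 \ex{c,z}{(\phi(z)\wedge\delta(c)\wedge x\leq c^2+z)}.
\end{equation*}
This is downward closed by its syntactic form, inductive because $\phi$ is and $\delta$ provably has a witness (Lemma~\ref{lem:+nx}\condref{cond:+nx/c-in-phi}), and it contains $c^2$ (take $z=0$); but in a model of $\neg\fa x{(\phi(x)\then\phi(x\times x))}$ it omits $c^2+c^2$, since $c^2+z\geq 2c^2$ forces $z\geq c^2$, hence $\phi(c^2)$ by downward closure of $\phi$, contradicting Lemma~\ref{lem:+nx}\condref{cond:+nx/c2>phi}. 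This translation idea is the missing ingredient in your sketch.
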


\begin{proof}
Let $\phi(x)$ and $\delta(x)$ be as given by Lemma~\ref{lem:+nx}.
\begin{enumerate}\partenum
\item Take
\begin{math}
 \phi(x)\vee
 \ex c{(\delta(c)\wedge x\geq c^2)}
\end{math}.

\item Take
 \begin{math}
  \ex{c,z}{(\phi(z)\wedge\delta(c)\wedge x\leq c^2+z)}
 \end{math}.

\item Take $\phi(x)$. \qedhere
\end{enumerate}
\end{proof}

Solovay's method of shortening cuts~\cite[Theorem~III.3.5]{book:hajek+pudlak}
 provides a general way of producing cuts with various closure properties.
However, a technique for producing cuts with specific non-closure properties
 in large quantities does not seem available at present.
In particular, it is not clear whether
 there is an inverse of Solovay's method.
\begin{qu}\label{qn:lengthen-am}
Let $\phi(x)$ be a cut.
Assuming
 \begin{math}
  \PAminus\nproves\fa x{\phi(x)}
 \end{math},
 can one always find a cut $\psi(x)$
  such that $\PAminus$ proves
  \begin{equation*}
   \fa x{\bigl(\phi(x)\then\psi(x)\bigr)}
   \qand\fa x{\phi(x)}\nsc\fa x{\psi(x)}
  \end{equation*} and $\psi(x)$ is \emph{not} an am-cut?
\end{qu}

\section{Comparing solutions}\label{sec:comparing_solutions}

Given a $\PA$-theorem $\sigma$, our formulation of {\ITP} considers the set
of solutions, the search space, to be all inductive $\varphi(x)$
such that $\PAminus \proves \fa x{\phi(x)} \then \sigma$. In automated theorem proving,
restrictions of the search space (usually such that completeness is retained)
play an important role. In this short section, based on the results of Section~\ref{sec:non-closure_of_cuts},
we make a comment on comparisons of two solutions $\varphi_0(x)$ and $\varphi_1(x)$.

There are (at least) two ways of
 comparing two induction axioms $\fa x{\phi_0(x)}$ and $\fa x{\phi_1(x)}$,
  where $\phi_0(x)$ and $\phi_1(x)$ are inductive formulas.
The first way is to say
 $\fa x{\phi_0(x)}$ is stronger than $\fa x{\phi_1(x)}$
 when $\PAminus\proves\fa x{\phi_0(x)}\then\fa x{\phi_1(x)}$.
The second way is to say
 $\fa x{\phi_0(x)}$ is stronger than $\fa x{\phi_1(x)}$
 when $\PAminus\proves\fa x{(\phi_0(x)\then\phi_1(x))}$.
Clearly, if an induction axiom is stronger than another one
  in the second sense,
 then it is also stronger in the first sense.
The converse, however, is not true.

\begin{prop}\label{prop:weaker-indn}
There are cuts $\phi_0(x),\phi_1(x)$ such that $\PAminus$ proves
\begin{equation*}
 \fa x{\phi_0(x)}\nsc\fa x{\phi_1(x)}
 \qand\fa x{(\phi_0(x)\then\phi_1(x))}
\end{equation*}
but
\begin{math}
 \PAminus\nproves\fa x{(\phi_1(x)\then\phi_0(x))}
\end{math}.
\end{prop}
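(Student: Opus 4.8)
The plan is to reuse, as $\phi_0$, the a-cut that fails to be an am-cut from the previous section, and to build $\phi_1$ as a ``square-saturation'' of $\phi_0$: a formula that is provably a cut and provably contains $\phi_0$ pointwise, but is provably wider than $\phi_0$ precisely at the level of multiplication, where $\phi_0$ is not closed. This is exactly the additive-versus-multiplicative gap that Lemma~\ref{lem:+nx} was designed to supply.

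Concretely, I would take $\phi_0(x)$ to be the formula $\phi(x)$ furnished by Lemma~\ref{lem:+nx}\itemref{cond:+nx/cut}, so that $\phi_0$ is an a-cut but $\PAminus\nproves\fa x{(\phi_0(x)\then\phi_0(x\times x))}$, and set
\[
 \phi_1(x)\qequals\ex z{(\phi_0(z)\wedge x\leq z\times z)}.
\]
The first task is to verify that $\phi_1$ is a cut. Downward closure is immediate: a witness $z$ for $\phi_1(y)$ together with $x<y$ gives $x\leq z\times z$, hence $\phi_1(x)$. For inductiveness, $\phi_1(0)$ follows from $\phi_0(0)$; for the step, take a witness $z$ for $\phi_1(x)$ and use that $\phi_0$ is an a-cut (and inductive) to obtain $\phi_0(z+z+1)$, after which a short distributive-law computation shows $x+1\leq(z+z+1)\times(z+z+1)$, giving $\phi_1(x+1)$.

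Next I would establish the two positive claims. The pointwise implication $\PAminus\proves\fa x{(\phi_0(x)\then\phi_1(x))}$ follows by instantiating $z:=x$ and using the easy fact $\PAminus\proves x\leq x\times x$. For the equivalence $\PAminus\proves\fa x{\phi_0(x)}\nsc\fa x{\phi_1(x)}$, the forward direction is then immediate. For the converse, assuming $\fa x{\phi_1(x)}$ I would first show $\phi_0$ is unbounded and then invoke downward closure: given $w$, instantiate $\phi_1$ at $w\times w$ to get $z$ with $\phi_0(z)$ and $w\times w\leq z\times z$, deduce $w\leq z$ from the contrapositive of the monotonicity fact $z<w\then z\times z<w\times w$ (routine from the ordered-ring axioms of $\PAminus$ and Lemma~\ref{lem:PA-}), and conclude $\phi_0(w)$ by downward closure.

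The heart of the matter is the failure $\PAminus\nproves\fa x{(\phi_1(x)\then\phi_0(x))}$, and here I would invoke exactly the non-closure encoded in $\phi_0$. Since $\phi_0$ is not an am-cut, there is a model $M\models\PAminus$ and an element $a$ with $M\models\phi_0(a)\wedge\neg\phi_0(a\times a)$. Taking $z:=a$ witnesses $M\models\phi_1(a\times a)$, whereas $M\models\neg\phi_0(a\times a)$, so $\fa x{(\phi_1(x)\then\phi_0(x))}$ fails in $M$. I expect the main obstacle to be bookkeeping rather than depth: the delicate points are the $\PAminus$-level verification that $\phi_1$ is genuinely inductive and the monotonicity of squaring, both of which must be handled carefully over the weak base theory but neither of which is conceptually hard. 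The conceptual content is simply that the discrepancy between the two notions of ``stronger'' is the discrepancy between additive and multiplicative closure of cuts, which is precisely what Section~\ref{sec:non-closure_of_cuts} provides.
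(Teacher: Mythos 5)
Your proposal is correct, and it runs on exactly the same engine as the paper's proof --- the a-cut-that-is-not-an-am-cut $\phi$ supplied by Lemma~\ref{lem:+nx} --- but it executes the construction in the mirror-image direction. The paper takes $\phi_0(x)=\phi(x\times x)$ and $\phi_1(x)=\phi(x)$: the pointwise-stronger cut is obtained by \emph{pulling $\phi$ back along squaring}, a pure substitution. With that choice, $\fa x{\phi_1(x)}\then\fa x{\phi_0(x)}$ is trivial, the pointwise implication $\fa x{(\phi_0(x)\then\phi_1(x))}$ (and hence the converse sentence-level implication) follows from $x\leq x\times x$ and downward closure of $\phi$, and the unprovability of $\fa x{(\phi_1(x)\then\phi_0(x))}$ is \emph{literally} the statement that $\phi$ is not an am-cut, with no further unpacking. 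You instead keep $\phi_0=\phi$ and \emph{push $\phi$ forward along squaring}, taking $\phi_1(x)=\ex z{(\phi_0(z)\wedge x\leq z\times z)}$. This works, but costs slightly more: your $\phi_1$ introduces an existential quantifier, the sentence-level equivalence needs provable monotonicity of squaring in $\PAminus$, and the failure of the pointwise converse requires passing to a model witnessing the am-cut failure rather than quoting it directly. The one verification burden that is genuinely shared is cut-hood of the new formula: your $z\mapsto z+z+1$ witness trick uses the a-cut property of $\phi$ in essentially the same way the paper must use it to check that $\phi(x\times x)$ is a cut. All of your steps do go through over $\PAminus$ as described, so the proof is sound. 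One small remark: in the backward direction of the equivalence you do not need to ``first show $\phi_0$ is unbounded'' --- the instantiation of $\phi_1$ at $w\times w$, followed by cancellation and downward closure, which you then give, is already the whole argument.
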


\begin{proof}
Let $\phi(x)$ be as given by Lemma~\ref{lem:+nx}.
We show that $\phi_0(x)=\phi(x\times x)$ and $\phi_1(x)=\phi(x)$
 have the properties we want.

As $\phi(x)$~is an a-cut,
 it is not hard to check that $\phi_0(x)$~is a cut.
Next, notice
 $\PAminus\proves\fa x{\phi(x)}\then\fa x{\phi(x\times x)}$ trivially.
Thus $\PAminus\proves\fa x{\phi_1(x)}\then\fa x{\phi_0(x)}$.
Notice also that \begin{math}
 \PAminus\proves\fa x{\bigl(\phi(x\times x)\then\phi(x)\bigr)}
\end{math} because $\phi(x)$ is a cut.
So \begin{math}
 \PAminus\proves\fa x{(\phi_0(x)\then\phi_1(x))}
\end{math}.
Finally, we know $\PAminus\nproves\fa x{(\phi_1(x)\then\phi_0(x))}$
 since $\phi(x)$ is not an am-cut.
\end{proof}

We can view Proposition~\ref{prop:weaker-indn} from another angle.
One may expect that the weakest induction axiom
 which can prove an induction axiom~$\ind\phi_0$ is $\ind\phi_0$ itself.
The proposition above says that this is not the case
 \emph{if} we adopt the second way of comparing induction axioms:
  $\ind\phi_0$~may be provable from another induction axiom~$\ind\phi_1$
   where $\phi_1(x)$ is a strictly weaker than~$\phi_0(x)$
    over~$\PAminus$ as formulas.
One may question whether there is always a weakest induction axiom
 for proving a given theorem of~$\PA$ in this sense.
We do not know the answer.

\begin{qu}\label{qn:lengthen-cut}
Let $\phi_0(x)$ be a cut.
Assuming
 \begin{math}
  \PAminus\nproves\fa x{\phi_0(x)}
 \end{math},
 can one always find a cut $\phi_1(x)$
  such that $\PAminus$ proves
  \begin{equation*}
   \fa x{\phi_0(x)}\nsc\fa x{\phi_1(x)}
   \qand\fa x{(\phi_0(x)\then\phi_1(x))}
  \end{equation*}
  but
  \begin{math}
   \PAminus\nproves\fa x{(\phi_1(x)\then\phi_0(x))}
  \end{math}?
\end{qu}

As one can show by imitating our proof of Proposition~\ref{prop:weaker-indn},
 a positive answer to Question~\ref{qn:lengthen-am}
  implies a positive answer to Question~\ref{qn:lengthen-cut}.

\section{Comparing notions of inductiveness}\label{sec:comparing_notions}

The question of how
an inductive theorem prover should choose the induction rule
to be applied to its current goal has received a great deal of attention in the literature on inductive theorem proving.
Pioneering work on this question has been done in the context of the ACL2~system and its predecessor NQTHM through
the introduction of the recursion analysis technique~\cite{Boyer79Computational}.
This technique, which suggests an induction rule based on the type of recursion present in the goal, is specific to goals
involving primitive recursive functions; see also Bundy et al.~\cite{Bundy89Rational} and Stevens~\cite{Stevens88Rational}.
However, the choice of the induction rule also plays an important role in other contexts; see the discussion on
``induction revision'' in the book~\cite{Bundy05Rippling} by Bundy et al., for example.

The interest in this question comes from the tension between the choice of the induction rule and the choice of the induction formula when proving a goal: the more
flexibility we have in choosing the induction rule, the less flexibility we need in choosing the induction formula. In
the very extreme case, one can fix an induction formula, e.g., the goal, and search for an induction rule with respect to which
this formula is inductive. Thus one can dispose of the difficult task of finding a non-analytic induction formula
and simply search for a suitable induction rule. In this section we will carry out a comparison of the formulations
of induction introduced in Section~\ref{sec:forms_of_induction} from this point of view: we fix a
formula $\varphi(x)$ and ask with respect to which induction schemes it is inductive.
Let $X$ be any notion of inductiveness defined in Section~\ref{sec:forms_of_induction}. Then we can state
the computational problem
\cprob{$\ITP^X$}
{A sentence $\sigma$ provable in $\PA$}
{An $X$-inductive formula $\varphi(x)$ s.t.\ $\PAminus \proves \fa x{\phi(x)} \then \sigma$}
By Proposition~\ref{prop:altPA} all these problems define total relations. By
Theorem~\ref{thm:IOpen}, the quantifier-free versions of $\ITP^X$ are equivalent
for all $X$ except possibly polynomial induction.

In this section, we investigate implications between these notions of inductiveness.
As we will see, some notions turn out to be incomparable.
For instance, not all $2$-step inductive formulas are $3$-step inductive, and
 not all $3$-step inductive formulas are $2$-step inductive.
The following proposition lists all the implications we can establish.
\begin{prop}\label{prop:indrel}
Let $m,n\in\IN$.
\begin{enumerate}\partenum
\item The following are equivalent for a formula:
 it is inductive;
 it is $1$-step inductive; and
 it is $1$-inductive.\label{part:indrel/1}
\item If $m+1$ divides $n+1$,
 then all $(m+1)$-step inductive formulas are $(n+1)$-step inductive.
\item If $m\leq n$,
 then all $(m+1)$-inductive formulas are $(n+1)$-inductive.\label{part:indrel/ind}
\item All $(n+1)$-step inductive formulas are $(n+1)$-inductive.\label{part:indrel/step>n}
\item\label{prop:indrel:n+1_implies_<} If a formula is $(n+1)$-inductive
   for some $n\in\IN$,
  then it is $<$-inductive.
\item A formula~$\phi(x)$ is $<$-inductive
  if and only if $\fale{x'}x{\phi(x')}$ is inductive.\label{part:indrel/ind'=wind}
\end{enumerate}
\end{prop}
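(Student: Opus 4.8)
The plan is to verify every item by reasoning inside $\PAminus$ and unfolding the relevant definition, using throughout the elementary facts gathered in Lemma~\ref{lem:PA-}; this is exactly where the choice of $\PAminus$ (rather than $Q$) as base theory matters. Items~\partref{part:indrel/1} and~\partref{part:indrel/step>n} are pure bookkeeping. For~\partref{part:indrel/1} one sets $n=0$ in the two schemes and uses $x+0=x$ (axiom~\axref{ax:PA-/+0}) to see that both ``$1$-step inductive'' and ``$1$-inductive'' unfold to $\PAminus\proves\phi(0)$ together with $\PAminus\proves\fa x{(\phi(x)\then\phi(x+1))}$, which is inductiveness. For~\partref{part:indrel/step>n}, the base clause of $(n+1)$-inductiveness coincides with that of $(n+1)$-step inductiveness, while its step clause is logically weaker: the antecedent $\bigwwedge_{k<n+1}\phi(x+k)$ contains the conjunct $\phi(x+0)=\phi(x)$, from which the step hypothesis $\phi(x)\then\phi(x+n+1)$ of $(n+1)$-step inductiveness yields $\phi(x+n+1)$.

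For the two monotonicity items I would iterate the step a fixed (standard) number of times and then fill in the extra base cases. In the divisibility item, writing $n+1=d(m+1)$, the implication $\phi(x)\then\phi(x+n+1)$ is obtained by chaining the hypothesis $\fa x{(\phi(x)\then\phi(x+m+1))}$ exactly $d$ times, the intermediate terms being reconciled by associativity and commutativity of $+$; the required conjuncts $\phi(0),\dots,\phi(n)$ are produced by splitting each standard $k<n+1$ as $k=q(m+1)+r$ with $r\leq m$ and applying the step $q$ times to $\phi(r)$. For item~\partref{part:indrel/ind}, the $(n+1)$-inductive step follows from a single instance of the $(m+1)$-inductive step: instantiating the latter at the point $x+(n-m)$, its antecedent $\phi(x+n-m),\dots,\phi(x+n)$ lists conjuncts already available (since $n-m\geq0$) and its conclusion is $\phi(x+n+1)$; the longer base $\phi(0),\dots,\phi(n)$ is again obtained by finitely many instances of the $(m+1)$-inductive step starting from $\phi(0),\dots,\phi(m)$.

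Item~\partref{prop:indrel:n+1_implies_<} is the first requiring a real case split. Working over $\PAminus$, I fix $y$, assume $\falt{x'}y{\phi(x')}$, and distinguish whether $y<n+1$. If so, the $\PAminus$-theorem $x<y+1\nsc x\leq y$ (cf.\ Lemma~\ref{lem:PA-}) gives $y\leq n$, hence $y=k$ for some standard $k\leq n$ by Lemma~\ref{lem:PA-}\partref{part:PA-/ee}, and $\phi(y)$ follows from the base clause of $(n+1)$-inductiveness. Otherwise $y\geq n+1$, so $n<y$, and axiom~\axref{ax:PA-/<} produces $w$ with $y=w+(n+1)$; each of $\phi(w),\dots,\phi(w+n)$ then holds by assumption, since $w+k\leq w+n<w+n+1=y$ by Lemma~\ref{lem:PA-}\partref{part:PA-/x<x+1}, and the $(n+1)$-inductive step instantiated at $w$ delivers $\phi(w+n+1)=\phi(y)$.

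The delicate item, which I expect to be the main obstacle, is~\partref{part:indrel/ind'=wind}; set $\chi(x):=\fale{x'}x{\phi(x')}$. The forward direction is routine: $\chi(0)$ is equivalent over $\PAminus$ to $\phi(0)$ because $x'\leq0$ forces $x'=0$ by axiom~\axref{ax:PA-/nneg}, and $\phi(0)$ is $<$-inductiveness at $y=0$; the step $\chi(x)\then\chi(x+1)$ reduces, via $x'<x+1\nsc x'\leq x$, to deriving $\phi(x+1)$ from $\chi(x)$, which is exactly $<$-inductiveness at $y=x+1$. The converse is where care is needed: from the inductiveness of $\chi$ one must \emph{not} infer $\fa x{\chi(x)}$ (this is unavailable over $\PAminus$), but only use the provable $\PAminus\proves\chi(0)$ and $\PAminus\proves\fa x{(\chi(x)\then\chi(x+1))}$, the latter giving $\PAminus\proves\fa x{(\chi(x)\then\phi(x+1))}$. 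To prove progressiveness, I fix $y$, assume $\falt{x'}y{\phi(x')}$, and split on $y=0$ (where $\phi(0)$ comes from $\chi(0)$) versus $y\neq0$, where Lemma~\ref{lem:PA-}\partref{part:PA-/pred} gives $y=w+1$, the assumption becomes precisely $\chi(w)$, and hence $\phi(w+1)=\phi(y)$ follows. The recurring difficulty in all of this is simply to ensure that every passage between $<$, $\leq$, predecessors and the representation $y=w+(n+1)$ is a genuine $\PAminus$-theorem, which is what Lemma~\ref{lem:PA-} secures and what fails over the weaker theory $Q$.
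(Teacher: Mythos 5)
Your proof is correct and follows essentially the same route as the paper: parts (a)--(d) by direct unfolding and finitely many standard iterations of the step clause (the paper leaves these as easy exercises), part (e) by the case split $y\leq n$ versus $y>n$ with axiom~\axref{ax:PA-/<} producing the decomposition $y=w+(n+1)$, and part (f) by the predecessor argument via Lemma~\ref{lem:PA-}\partref{part:PA-/pred}, using the inductive clauses of $\fale{x'}x{\phi(x')}$ pointwise rather than illegitimately concluding $\fa x{\fale{x'}x{\phi(x')}}$ over $\PAminus$ --- exactly the subtlety the paper's proof also navigates.
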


\begin{proof}
Parts \partref{part:indrel/1}--\partref{part:indrel/step>n}
 are easy exercises.

\begin{enumerate}\partenum\addtocounter{enumi}{4}
\item Suppose $\phi(x)$ is $(n+1)$-inductive.
Work over~$\PAminus$.
Let $y_0$ be such that $\falt x{y_0}{\phi(x)}$.
If $y_0\leq n$, then $\phi(y_0)$ holds
 by Lemma~\ref{lem:PA-}\partref{part:PA-/ee}
  because $\phi(x)$ is $(n+1)$-inductive.
So suppose $y_0>n$.
Use axiom~\axref{ax:PA-/<} to find $x_0$
 such that $y_0=x_0+n+1$.
Then $\bigwwedge_{k<n+1}\phi(x_0+k)$
 because $x_0+k<y_0$ for each $k<n+1$ by~\axref{ax:PA-/<} again.
As $\phi(x)$ is $(n+1)$-inductive and $y_0=x_0+n+1$,
 this implies $\phi(y_0)$.

\item
First, assume $\fale{x'}x{\phi(x')}$ is inductive.
Work over~$\PAminus$.
Let $y_0$ be such that $\falt x{y_0}{\phi(x)}$.
If $y_0=0$, then $\phi(y_0)$ by our assumption.
So suppose $y_0\not=0$.
Apply Lemma~\ref{lem:PA-}\partref{part:PA-/pred}
 to find $y_0'$ such that $y_0=y_0'+1$.
Notice $\fale x{y_0'}{\phi(x)}$ by Lemma~\ref{lem:PA-}\partref{part:PA-/x<x+1}.
Our assumption then implies $\fale x{y_0}{\phi(x)}$,
 which, in particular, tells us that $\phi(y_0)$.

Conversely, suppose $\phi(x)$ is $<$-inductive.
Work over $\PAminus$ again.
With the help of~\axref{ax:PA-/nneg},
                 \axref{ax:PA-/<trans}~and~\axref{ax:PA-/<irref},
 we have $\phi(0)$ trivially by $<$-inductiveness.
So $\fale{x'}0{\phi(x')}$.
Let $x_0$ be such that $\fale{x'}{x_0}{\phi(x')}$.
Then Lemma~\ref{lem:PA-}\partref{part:PA-/discrete}
                    and~\partref{part:PA-/<+cancel}
  imply $\falt{x'}{x_0+1}{\phi(x')}$ and
 so $\phi(x_0+1)$ by $<$-inductiveness.
This shows $\fale{x'}{x_0+1}{\phi(x')}$, as required. \qedhere
\end{enumerate}
\end{proof}

We now proceed to show non-implications (based on the non-closure properties
 of cuts established in Section~\ref{sec:non-closure_of_cuts}).
In particular, we show that beyond Proposition~\ref{prop:indrel} there is no further implication
 between ``$<$-inductive'', the ``$(n+1)$-step inductive''s, and
  the ``$(n+1)$-inductive''s.
\begin{prop}\label{prop:indrel-strict}
Let $m,n\in\IN$.
\begin{enumerate}\partenum
\item If $(m+1)$ does not divide $(n+1)$,
 then there is an $(m+1)$-step inductive formula
  that is not $(n+1)$-step inductive.
\item If $m>n$, then\label{part:indrel-strict/ind}
 there is an $(m+1)$-inductive formula that is not $(n+1)$-inductive.
\item If $n\geq1$, then
 there is an $(n+1)$-inductive formula that is not $(n+1)$-step inductive.
\item\label{prop:indrel-strict:n+1_implies_<} There is a $<$-inductive formula
  that is not $(n+1)$-inductive.
\end{enumerate}
\end{prop}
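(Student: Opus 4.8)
The plan is to prove all four non-implications by a single uniform device: the cut supplied by Lemma~\ref{lem:+nx}, the very gadget already used in Corollary~\ref{cor:ncl} and Proposition~\ref{prop:weaker-indn}. Write $C(x)$ for the formula $\phi(x)$ of that lemma and $\delta(x)$ for its companion, and recall that $\PAminus$ proves that $C$ is a cut, that $\delta$ defines a \emph{unique} element $a$ (so $\delta(a)$) lying inside $C$, and that $C$ is not an am-cut, so that there is a model $M\models\PAminus$ in which $\neg C(a\times a)$. Every separating formula will have the shape ``$C$ with an extra piece hanging just above the point $w=a\times a$''. In each part I would (i) verify the \emph{positive} inductiveness as a theorem of $\PAminus$, purely syntactically — crucially using that $\PAminus$ proves $\neg C$ closed under predecessor (the contrapositive of the inductiveness of $C$ together with Lemma~\ref{lem:PA-}), that $\delta$ has a unique witness, and elementary order arithmetic — and (ii) refute the stronger inductiveness inside $M$, where $w$ sits strictly above $C$.

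For parts (b) and (d) I would use the \emph{same} formula, namely $C$ with a block of $n+1$ consecutive points glued on just below $w$:
\[
 \phi(x)\;\equiv\;C(x)\;\lor\;\exists c\Bigl(\delta(c)\wedge\bigvee_{j=1}^{n+1}(x+j=c\times c)\Bigr).
\]
Its false set is $\neg C$ minus the block $B=\{w-1,\dots,w-(n+1)\}$, so there is a gap of length exactly $n+1$ below $w$. In $M$ the $(n+1)$-inductive step fails at $x=w-(n+1)$: the predecessors $w-1,\dots,w-(n+1)$ all satisfy $\phi$ while $\phi(w)$ is false. For the positive side, $(m+1)$-inductiveness (part (b), where $m>n$) is provable because $m+1$ consecutive values cannot all land in a block of size $n+1$: assuming the length-$(m+1)$ hypothesis, if $\neg C(x+m+1)$ then predecessor-closure forces $x,\dots,x+m$ into $B$, impossible since $m+1>n+1$, so $C(x+m+1)$ and hence $\phi(x+m+1)$. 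And $<$-inductiveness (part (d)) is provable because every false point has a smaller one, the only delicate case being $y=w$, where the witness $w-(n+2)$ lies just below the block and is reached by propagating $\neg C(y)$ downwards. For part (c) (with $n\geq1$, the case $n=0$ being excluded since the notions coincide by Proposition~\ref{prop:indrel}) I would instead glue on the \emph{single} point $w-(n+1)$, i.e.\ replace the disjunction by $x+(n+1)=c\times c$. Then $(n+1)$-step induction fails in $M$ at $x=w-(n+1)$, since $\phi(w-(n+1))$ holds but $\phi(w)$ does not, whereas $(n+1)$-inductiveness is provable: among the $n+1\geq2$ predecessors of any false point at most one can be the unique special point, so not all of them can be true.

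For part (a) (with $m+1\nmid n+1$) I would glue on not a block but the whole $+(m+1)$-orbit of $w$:
\[
 \phi(x)\;\equiv\;C(x)\;\lor\;\exists c\,\exists y\bigl(\delta(c)\wedge x=c\times c+(m+1)\cdot y\bigr).
\]
Both disjuncts are visibly closed under adding $m+1$, so $\phi$ is $(m+1)$-step inductive over $\PAminus$. In $M$, however, $\phi(w)$ holds (take $y=0$) while $\phi(w+n+1)$ fails: $w+n+1$ is above $C$, and if it were $c\times c+(m+1)y$ then, by uniqueness of the $\delta$-witness, $n+1=(m+1)y$, contradicting $m+1\nmid n+1$. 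Hence $(n+1)$-step induction is not provable.

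The main obstacle, in every part, is step (i): establishing the positive inductiveness as a theorem of $\PAminus$ rather than a fact about one model. The arguments must never assume $\neg C(w)$, which is not provable; instead they proceed by contradiction from the hypothesis of the induction step, turning the \emph{uniqueness} of the $\delta$-witness and the finiteness of the glued-on piece into the required counting bound. Handling the single ``edge'' point where the gap meets $C$, and checking that the degenerate models (where $C$ is everywhere true and $\phi$ is trivially so) cause no trouble, are where care is needed; the number-theoretic hypotheses ($m+1\nmid n+1$ for (a), $m>n$ for (b), $n\geq1$ for (c)) enter exactly at these edge computations.
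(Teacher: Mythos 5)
Your proposal is correct, and it rests on the same foundation as the paper's own proof: the gadget of Lemma~\ref{lem:+nx}, a model $M\models\PAminus+\neg\fa x{(\phi(x)\then\phi(x\times x))}+\delta(c)$ for all the refutations, and separating formulas of the form ``the cut $\phi$ together with finitely many extra points near $c^2$''. Where you genuinely differ is the placement of those extra points, and this changes the mechanics of the positive halves. The paper glues the block $\{c^2,\dots,c^2+n\}$ (for parts (b) and (d)) or the single point $c^2$ (for part (c)) \emph{at and above} $c^2$, and for part (a) uses the multiples of $m+1$; you instead glue the block $\{c^2-(n+1),\dots,c^2-1\}$, the single point $c^2-(n+1)$, and the $+(m+1)$-orbit of $c^2$, respectively. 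Because your glued points lie strictly \emph{below} $c^2$, your provable-inductiveness arguments run uniformly by contradiction: propagate $\neg\phi$ downward from the hypothesized failure point (predecessor-closure, i.e., the contrapositive of the inductiveness of $\phi$), then use uniqueness of the $\delta$-witness plus an endpoint computation to show that too many consecutive points would have to lie in the glued set. In particular you avoid the case distinction on whether $\fa x{(\phi(x)\then\phi(x\times x))}$ holds, which the paper's proofs of (b), (c) and (d) require precisely because the status of points at and above $c^2$ depends on that hypothesis, whereas the status of \emph{your} points is forced by the hypothesized failure itself. The price is that your counting claims must be realized as explicit $\PAminus$-derivations: ``$m+1$ consecutive points cannot all lie in a block of size $n+1$'' reduces to comparing the two endpoints ($x+j_1=c^2=x+m+j_2$ with $j_1\leq n+1$, $j_2\geq 1$ forces $m\leq n$); ``at most one point satisfies $x+(n+1)=c^2$'' is additive cancellation; and in part (a) the step from $(m+1)y=n+1$ to a contradiction with $m+1\nmid n+1$ needs the (unstated but routine) observation that such a $y$ is bounded by $n+1$ and hence standard by Lemma~\ref{lem:PA-}\partref{part:PA-/ee}. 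All of these are routine, so your sketch is sound; it is essentially a cleaner re-implementation of the paper's construction rather than a different method.
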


\begin{proof}
Let $\phi(x)$ and~$\delta(x)$ be $\Lor$~formulas
 given by Lemma~\ref{lem:+nx}.
Recall $\phi(x)$ is not an am-cut.
Take $c\in M\models\PAminus+\neg\fa x{(\phi(x)\then\phi(x^2))}+\delta(c)$.
\begin{enumerate}\partenum
\item
Assume all $(m+1)$-step inductive formulas are $(n+1)$-step inductive.
Consider the formula \begin{equation*}
  \chi(x) \qequals \phi(x)\vee\ex y{(x=(m+1)y)}.
 \end{equation*}

We first show that $\chi(x)$ is $(m+1)$-step inductive.
By our assumption on $m$ and~$n$,
 this will imply $\chi(x)$ is $(n+1)$-step inductive.
Work over $\PAminus$.
We know $\bigwwedge_{k<m+1}\chi(k)$ because $\phi(x)$ is inductive.
Take $x_0$ satisfying $\chi(x_0)$.
If $y_0$ is such that $x_0=(m+1)y_0$,
 then $x_0+m+1=(m+1)(y_0+1)$ by~\axref{ax:PA-/x1} and~\axref{ax:PA-/distrib},
  and thus $\chi(x_0+m+1)$.
So suppose $\neg\ex y{(x_0=(m+1)y)}$.
Then $\phi(x_0)$ holds by the definition of the formula~$\chi(x)$.
As $\phi(x)$ is an inductive formula,
 this implies $\phi(x_0+m+1)$ and hence $\chi(x_0+m+1)$.

Now look at our model~$M$.
Let $a=(m+1)(n+1)c^2$, so that $M\models\chi(a)$.
Notice $M\models\neg\phi(c^2)$
 by condition~\condref{cond:+nx/c2>phi} in Lemma~\ref{lem:+nx}.
So $M\models\neg\phi(a+n+1)$
 because $\phi(x)$ is a cut and
  $a+n+1=c^2+c^2(mn+m+n)+n+1>c^2$ by axiom~\axref{ax:PA-/<}.
However, the previous paragraph tells us that $M\models\chi(a+n+1)$.
So $M\models\ex y{(a+n+1=(m+1)y)}$ by the definition of~$\chi(x)$.

Find $b\in M$ such that $a+n+1=(m+1)b$.
By axiom~\axref{ax:PA-/<}, we know
 \begin{math}
  (m+1)b=a+n+1>a=(m+1)(n+1)c^2
 \end{math}.
So Lemma~\ref{lem:PA-}\partref{part:PA-/<xcancel}
 implies $b>(n+1)c^2$.
Apply axiom~\axref{ax:PA-/<} to find $z\in M$
 such that $(n+1)c^2+z+1=b$.
Then
 \begin{equation*}
  a+(m+1)(z+1)=(m+1)(n+1)c^2+(m+1)(z+1)=(m+1)b=a+n+1
 \end{equation*} and
 thus $(m+1)(z+1)=n+1$
  by axioms~\axref{ax:PA-/<linear} and~\axref{ax:PA-/<+}.
Now $n+1=z+m(z+1)+1$.
Hence $n+1>z$ by axiom~\axref{ax:PA-/<}.
Lemma~\ref{lem:PA-}\partref{part:PA-/ee} then tells us $z\in\IN$.
We can thus conclude that $(m+1)$ divides $(n+1)$,
 which is what we want.

\item
We show that the formula \begin{equation*}
  \rho(x) \qequals \phi(x)\vee\fa c{\Bigl(
   \delta(c)\then \bigvvee_{k<n+1}x=c^2+k
  \Bigr)}
 \end{equation*} is an example we want.

Suppose $m>n$.
We show that $\rho(x)$ is $m$-inductive.
Work over $\PAminus$.
If $\fa x{(\phi(x)\then\phi(x^2))}$,
 then $\fa x{(\rho(x)\nsc\phi(x))}$
   by condition~\condref{cond:+nx/c-in-phi} in Lemma~\ref{lem:+nx}, and
  so we are done by Proposition~\ref{prop:indrel}\partref{part:indrel/1}
                                             and~\partref{part:indrel/ind}.
So suppose $\neg\fa x{(\phi(x)\then\phi(x^2))}$.
Since $\phi(x)$ is inductive, we know $\bigwwedge_{k<m+1}\rho(k)$.
Let $c,x_0$ be such that $\delta(c)\wedge\bigwwedge_{k<m+1}\rho(x_0+k)$.
Take $k<n+1$.
We know $c^2+k+m\geq c^2+m>c^2+n$
  by axiom~\axref{ax:PA-/<+}, and
 $\neg\phi(c^2+k+m)$ since $\phi(x)$ is a cut and $\neg\phi(c^2)$.
These together imply $\neg\rho(c^2+k+m)$.
So $x_0\not=c^2+k$ because $\rho(x_0+m)$ holds.
As the choice of $k<n+1$ is arbitrary,
 we know $\phi(x_0)$ by the definition of~$\rho(x)$.
It follows that $\phi(x_0+m+1)$ because $\phi(x)$ is inductive.
Hence $\rho(x_0+m+1)$.

We now show that $\rho(x)$ is not $(n+1)$-inductive using our model~$M$.
The definition of~$\rho(x)$ tells us
 $M\models\bigwwedge_{k<n+1}\rho(c^2+k)$.
On the one hand, notice $M\models\neg\phi(c^2)$
 by condition~\condref{cond:+nx/c2>phi} in Lemma~\ref{lem:+nx}.
Also $c^2+n+1>c^2$ by axiom~\axref{ax:PA-/<}.
Hence $\neg\phi(c^2+n+1)$ since $\phi(x)$ is a cut.
On the other hand, notice $c^2+n+1>c^2+k$ for any $k<n+1$
 by axiom~\axref{ax:PA-/<+}.
These together say that $M\models\neg\rho(c^2+n+1)$.

\item
We claim that the formula \begin{equation*}
  \rho_0(x) \qequals \phi(x)\vee\fa c{(\delta(c)\then x=c^2)}
 \end{equation*}
 has the desired properties.

To show that $\rho_0(x)$ is $(n+1)$-inductive, let us work over $\PAminus$.
As in the previous part,
 we assume $\neg\fa x{(\phi(x)\then\phi(x^2))}$.
Since $\phi(x)$ is inductive, we know $\bigwwedge_{k<n+1}\rho_0(k)$.
Let $c,x_0$ be such that $\delta(c)\wedge\bigwwedge_{k<n+1}\rho_0(x_0+k)$.
As $n\geq1$, we know $\rho_0(x_0)\wedge\rho_0(x_0+1)$ in particular.
Notice $\neg\phi(c^2+1)$
 because $\phi(x)$ is a cut and $\neg\phi(c^2)$
  by condition~\condref{cond:+nx/c2>phi} in Lemma~\ref{lem:+nx}.
Also $c^2+1\not=c^2$ by Lemma~\ref{lem:PA-}\partref{part:PA-/x<x+1}.
Thus $\neg\rho_0(c^2+1)$ holds, from which one deduces $x_0\not=c^2$.
This implies $\phi(x_0)$ by the definition of~$\rho_0(x)$.
Since $\phi(x)$ is inductive, we conclude that $\phi(x_0+n+1)$.

To show that $\rho_0(x)$ is not $(n+1)$-step inductive,
 we use~$M$ as in the proof of~\partref{part:indrel-strict/ind}.
Notice $M\models\rho_0(c^2)$ by the definition of~$\rho_0(x)$.
On the one hand, we know $M\models\neg\phi(c^2)$
 by condition~\condref{cond:+nx/c2>phi} in Lemma~\ref{lem:+nx}.
Also $c^2+n+1>c^2$ by axiom~\axref{ax:PA-/<}.
Hence $\neg\phi(c^2+n+1)$ since $\phi(x)$ is a cut.
On the other hand, we know $c^2+n+1>c^2$ by~\axref{ax:PA-/<}.
These together say that $M\models\neg\rho_0(c^2+n+1)$.

\item
It suffices to prove that the formula~$\rho(x)$ defined
  in the proof of~\partref{part:indrel-strict/ind}
 is $<$-inductive.

Work over~$\PAminus$.
Let $c,y_0$ be such that $\delta(c)$ and $\falt x{y_0}{\rho(x)}$.
We will show $\rho(y_0)$.
If $y_0=0$, then $\phi(y_0)$ because $\phi(x)$ is inductive,
 and so $\rho(y_0)$ holds.
Therefore, assume $y_0\not=0$.
Apply Lemma~\ref{lem:PA-}\partref{part:PA-/pred}
 to find $x_0$ such that $y_0=x_0+1$.
Notice $y_0>x_0$ by Lemma~\ref{lem:PA-}\partref{part:PA-/x<x+1}.
So, by the choice of~$y_0$, we know $\rho(x_0)$.

Consider the case when $\neg\fa x{(\phi(x)\then\phi(x^2))}$.
Then $\neg\phi(c^2)$
 by condition~\condref{cond:+nx/c2>phi} in Lemma~\ref{lem:+nx}.
This implies $c^2\not=0$ because $\phi(x)$ is inductive.
Use Lemma~\ref{lem:PA-}\partref{part:PA-/pred}
 to find $w$ such that $c^2=w+1$.
Then $\neg\phi(w)$ since $\phi(x)$ is inductive and $\neg\phi(w+1)$ holds.
Also, Lemma~\ref{lem:PA-}\partref{part:PA-/x<x+1} implies
 $w<c^2\leq c^2+k$ for each $k<n+1$.
We thus know $\neg\rho(w)$ by the definition of~$\rho(x)$.
Our assumption on~$y_0$ then implies $x_0<x_0+1=y_0\leq w<w+1=c^2$.
Thus $x_0\not=c^2+k$ for any $k<n+1$
  by axiom~\axref{ax:PA-/<trans}, and
 so $\phi(x_0)$ must hold.

If $\fa x{(\phi(x)\then\phi(x^2))}$,
 then $\bigwwedge_{k<n+1}\phi(c^2+k)$
  by conditions~\condref{cond:+nx/cut} and~\condref{cond:+nx/c-in-phi}
   in Lemma~\ref{lem:+nx}.
So $\phi(x_0)$ in either case because $\rho(x_0)$~holds.
Since $\phi(x)$ is inductive,
 we deduce $\phi(x_0+1)$ and hence $\rho(y_0)$, as required. \qedhere
\end{enumerate}
\end{proof}

Buss's notion of polynomial induction
 does not fit well into the picture.
\begin{prop}\label{prop:p-ind-compare}
\begin{enumerate}\partenum
\item There is a formula that is $(n+1)$-step inductive for every $n\in\IN$
  but is not \p-inductive.
\item For every $n\in\IN$,
 there is a \p-inductive formula that is not $(n+1)$-inductive.
\end{enumerate}
\end{prop}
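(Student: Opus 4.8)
The two parts are of very different character, so I would handle them separately; part~(a) is essentially a corollary of the non-closure results already in hand, while part~(b) is where the real work lies.

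\textbf{Part (a).} There is a one-line reduction to Section~\ref{sec:non-closure_of_cuts}. Corollary~\ref{cor:ncl}\partref{part:ncl/n+} supplies a cut $\phi(x)$ that is not an a-cut. Being a cut, $\phi$ is in particular inductive, hence $1$-step inductive by Proposition~\ref{prop:indrel}\partref{part:indrel/1}; and since $1$ divides $n+1$ for every $n$, the divisibility clause of Proposition~\ref{prop:indrel} promotes this to $(n+1)$-step inductiveness for all $n\in\IN$. On the other hand, $\p$-inductiveness of $\phi$ would entail in particular $\PAminus\proves\fa x{(\phi(x)\then\phi(2x))}$; as $\PAminus\proves 2x=x+x$ (via axioms~\ref{ax:PA-/distrib} and~\ref{ax:PA-/x1}), this is exactly the a-cut condition, which fails by the choice of $\phi$. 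So the same $\phi$ witnesses part~(a).

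\textbf{Part (b): locating the model.} Here I fix $n$ and look for a formula provably $\p$-inductive over $\PAminus$ that fails to be $(n+1)$-inductive in some $M\models\PAminus$. First I would pin down where such an $M$ can live. Once $\phi$ is provably $\p$-inductive, each standard numeral satisfies $\PAminus\proves\phi(k)$, so the base $\bigwwedge_{k<n+1}\phi(k)$ always holds and only the induction step can be defeated: I need $a\in M$ with $\phi(a),\dots,\phi(a+n)$ true but $\phi(a+n+1)$ false. Such an $M$ cannot model $\PA$, since there polynomial induction applied to $\phi$ would already force $\fa x{\phi(x)}$, killing every gap. In fact $M$ must refute the totality of halving: if every element were $2y$ or $2y+1$, then the halving-chain $x\mapsto\lfloor x/2\rfloor$ of any witness would descend to $0$, and since $\{x:\phi(x)\}$ is closed under $x\mapsto 2x,2x+1$ this would force $\phi$ to hold everywhere. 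The natural arena is therefore the ring $\ZZ[X]^+$ of Lemma~\ref{lem:Div}, in which $X$ is neither $2y$ nor $2y+1$.

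\textbf{Part (b): the formula and the obstacle.} The design of $\phi$ is dual: provable $\p$-inductiveness says precisely that the complement $\{x:\neg\phi(x)\}$ is closed under the partial predecessor map and omits $0$, while an $(n+1)$-gap asks this complement to be so sparse that somewhere it misses $n+1$ consecutive points just before a point it contains. As a bare set the witness is easy: a single ``branch'' $\{\,2^{j}e+2^{j}-1 : j\geq 0\,\}$ hanging below an un-halvable $e$ is predecessor-closed, omits $0$, and—being exponentially sparse—leaves runs far longer than $n+1$ (take $a+n+1=2^{k}e+2^{k}-1$ for large nonstandard $k$). The hard part, and the crux of the whole proposition, is \emph{definability}: the formula must have a single free variable, yet in $\ZZ[X]^+$ the un-halvable elements are closed under $x\mapsto x{+}1$, so any parameter-free formula that refers to \emph{all} un-halvable $c$—the only way I see to name a branch uniformly—drags the whole coset into the complement and leaves runs of length only~$1$, whereas dropping the lowest level of the branch destroys predecessor-closure. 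I therefore expect to break the deadlock as in Lemma~\ref{lem:+nx}: use a formula $\delta$ that $\PAminus$-provably names a unique element, a sentence $\sigma$ that makes that element nonstandard and un-halvable, and the conditional formula $\sigma\then\phi_{0}(x)$, where $\phi_{0}$ says ``$x$ is not a descendant of the $\delta$-element''; this is provably $\p$-inductive over $\PAminus$ and reproduces the gap in every model of $\sigma$. The delicate point—what I expect to cost the most care—is that a $\PAminus$-provably-unique $\delta$ forces its element to be definable, hence standard (and so halvable) in homogeneous models like $\ZZ[X]^+$; so the witnessing model must be a \emph{non-$\PA$, non-initial-segment} model of $\PAminus$ in which a nonstandard un-halvable element is definable, and producing such a model, rather than writing down the formula, is the main obstacle.
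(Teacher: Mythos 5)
Your part~(a) is correct and is essentially the paper's own proof: the cut from Corollary~\ref{cor:ncl}\partref{part:ncl/n+} that is not an a-cut is $(n+1)$-step inductive for all $n$, and \p-inductiveness would make it an a-cut since $\PAminus\proves 2x=x+x$. Part~(b), however, contains a genuine gap --- the very point you flag as ``the main obstacle''. Provable \p-inductiveness of your formula (``$x$ is not a descendant of the $\delta$-element'', however you condition it on a sentence $\sigma$) says that in every model of $\sigma$ the descendant tree of the $\delta$-element is closed under halving and omits $0$; closure under halving fails at the root $e$ itself unless $e$ is neither even nor odd. Since standard elements are provably halvable, what you need is a model of $\PAminus$ with a \emph{parameter-free definable element that is neither even nor odd}. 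By Proposition~\ref{prop:p+nind'=defelt}, the existence of such a model is \emph{equivalent} to the existence of a \p-inductive formula that is not $<$-inductive, and the paper explicitly states that this is not known. So your route does not prove part~(b); it reduces it to an open problem (and your correct observation that homogeneous models like $\ZZ[X]^+$ only define standard elements is exactly the reason this problem is hard).

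The paper's proof of~(b) avoids the need for an un-halvable root altogether, and the trick is worth internalizing. Take $\phi$ and $\delta$ from Lemma~\ref{lem:+nx} and let $\chi(x)$ be ``$\phi(x)$, \emph{or} $x$ is reachable from some $c^2+k$ with $k<n+1$ and $\delta(c)$ by a strictly increasing chain of steps $y\mapsto 2y$, $y\mapsto 2y+1$'' (the chain coded by Je\v r\'abek's $\PAminus$-sequences; note that your ``descendant'' predicate silently needs this same coding). Because $\phi$ is an a-cut and inductive, $\PAminus\proves\fa x{\bigl(\phi(x)\then\phi(2x)\wedge\phi(2x+1)\bigr)}$, i.e.\ $\neg\phi$ is \emph{already} closed under halving; so the complement of $\chi$ never has to terminate at an un-halvable element --- a halving chain descending from a counterexample simply stays inside $\neg\phi$ forever, which is perfectly consistent. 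The descendant set of the seeds is closed under doubling, so $\chi$ is provably \p-inductive. Failure of $(n+1)$-inductiveness is then witnessed in the model of Lemma~\ref{lem:+nx}: there $\neg\phi(c^2)$ with $c$ nonstandard, the seeds give $\bigwwedge_{k<n+1}\chi(c^2+k)$, while $\chi(c^2+n+1)$ fails because $c^2+n+1$ lies above the cut and every non-trivial descendant of a seed is at least $2c^2>c^2+n+1$; so the induction step breaks at $x=c^2$. In short: replace ``complement $=$ tree rooted at an un-halvable element'' by ``formula $=$ a-cut $\vee$ doubling-tree of definable seeds''; the a-cut absorbs the halving-closure obligation that your design tried to discharge with un-halvability.
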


\begin{proof}
\begin{enumerate}\partenum
\item
By Corollary~\ref{cor:ncl}\partref{part:ncl/n+},
 there is a cut that is not an a-cut.
Such a cut is $(n+1)$-step inductive for every $n\in\IN$,
 but it is not \p-inductive.

\item
Let $\phi(x),\delta(x)$ be $\Lor$~formulas
 given by Lemma~\ref{lem:+nx}.
Define $\chi(x)$ to be the formula
 \begin{equation*}
  \phi(x)\vee\ex c{\ex s{\ex\ell{\left(\begin{aligned}
   &\delta(c)\wedge\bigvvee_{k<n+1}(s)_0=c^2+k\wedge(s)_\ell=x\\
   &\wedge\falt i\ell{\bigl((s)_{i+1}=2(s)_i\vee(s)_{i+1}=2(s)_i+1\bigr)}\\
   &\wedge\fale{i,j}\ell{\bigl(i<j\then(s)_i<(s)_j\bigr)}
  \end{aligned}\right)}}}.
 \end{equation*}
Here $(s)_i=x$ is the $\Lor$~formula expressing
 ``the $i$th element in the sequence coded by~$s$ is~$x$'' over~$\PAminus$
  given in Je\v r\'abek~\cite{art:jerabek/PAminus}.
It is then straightforward to see that $\chi(x)$ is \p-inductive.
However, the formula~$\chi(x)$ is not $(n+1)$-inductive
 because $\PAminus\proves\ex c{\bigl(
           \delta(c)\wedge\bigwwedge_{k<n+1}\chi(c^2+k)
          \bigr)}$ and
         $\PAminus\nproves\ex c{(\delta(c)\wedge\chi(c^2+n+1))}$. \qedhere
\end{enumerate}
\end{proof}

We do not know
 whether all \p-inductive formulas are $<$-inductive.
What we have is only a translation of this question
 into a model-theoretic language.
\begin{prop}\label{prop:p+nind'=defelt}
The following are equivalent.
\begin{enumerate}\tfaeenum
\item There is a \p-inductive formula that is not $<$-inductive.\label{cond:p+nind'=defelt/ceg}
\item In some model of~$\PAminus$,\label{cond:p+nind'=defelt/defelt}
  there is a parameter-free definable element
   that is neither even nor odd.
\end{enumerate}
\end{prop}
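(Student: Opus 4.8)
The plan is to prove the two implications separately, exploiting that \p-inductiveness is, by definition, a statement about provability in~$\PAminus$ (hence about \emph{all} models), whereas the failure of $<$-inductiveness and the exotic definable element both live in a \emph{single} model.

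For $\partref{cond:p+nind'=defelt/ceg}\Then\partref{cond:p+nind'=defelt/defelt}$, suppose $\phi(x)$ is \p-inductive but not $<$-inductive. Then there are a model $M\models\PAminus$ and $b\in M$ with $M\models\falt{x}{b}{\phi(x)}\wedge\neg\phi(b)$. I would single out $b$ by the least-counterexample formula
\[
 \beta(x)\qequals\neg\phi(x)\wedge\falt{x'}{x}{\phi(x')}.
\]
First I would check that $\beta$ is provably functional, $\PAminus\proves\fa{x,x'}{(\beta(x)\wedge\beta(x')\then x=x')}$: if $x<x'$ both satisfied~$\beta$, then $\phi(x)$ would follow from~$\beta(x')$, contradicting $\neg\phi(x)$. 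Hence $M\models\exi{x}{\beta(x)}$, so $b$ is parameter-free definable. Then \p-inductiveness shows $b$ is neither even nor odd: if $b=2y$ then either $y=0$, forcing $b=0$ and so $\neg\phi(0)$ against \p-inductiveness, or $y<b$, giving $\phi(y)$ and hence $\phi(2y)=\phi(b)$ against $\neg\phi(b)$; the case $b=2y+1$ is immediate since $y<2y+1=b$ always, so $\phi(y)$ and thus $\phi(b)$. This yields $\partref{cond:p+nind'=defelt/defelt}$.

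For $\partref{cond:p+nind'=defelt/defelt}\Then\partref{cond:p+nind'=defelt/ceg}$, let $\delta(x)$ define, in some $M\models\PAminus$, a unique element~$a$ that is neither even nor odd. Using Je\v r\'abek's sequence coding over~$\PAminus$~\cite{art:jerabek/PAminus}, and abbreviating the doubling condition $\falt{i}{\ell}{((s)_{i+1}=2(s)_i\vee(s)_{i+1}=2(s)_i+1)}$ by $D(s,\ell)$, I would set
\[
 \phi(x)\qequals\ex{s,\ell}{\bigl((s)_0=0\wedge(s)_\ell=x\wedge D(s,\ell)\bigr)}\vee\ex{s,\ell}{\bigl(\fa{c}{(\delta(c)\then(s)_0<c)}\wedge(s)_\ell=x\wedge D(s,\ell)\bigr)}.
\]
The first disjunct says $x$ is reachable from~$0$ by a coded doubling path; the second says $x$ is reachable from a root lying below every $\delta$-witness. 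I would verify that $\phi$ is \p-inductive \emph{provably} in~$\PAminus$: $\phi(0)$ holds via the length-$0$ path of the first disjunct, and each disjunct is closed under $x\mapsto2x$ and $x\mapsto2x+1$ since any witnessing path extends by one doubling step without changing its root. Then, working in~$M$, I would show $a$ is a least counterexample to $<$-induction: every $x<a$ satisfies~$\phi$ through the second disjunct (the length-$0$ path with root~$x$, as $a$ is the only $\delta$-witness), whereas $\neg\phi(a)$, because a positive-length doubling path ending at~$a$ would make~$a$ even or odd, and a length-$0$ path would force $a=0$ (first disjunct) or $a<a$ (second). Thus $M\models\falt{x}{a}{\phi(x)}\wedge\neg\phi(a)$, so $\phi$ is not $<$-inductive.

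The main obstacle is exactly this tension: \p-inductiveness must be provable (over all models), yet the failure at~$a$ is model-specific. The naive guess $\phi(x)\equiv\fa{c}{(\delta(c)\then x\neq c)}$ fails, since its \p-inductiveness would require $\PAminus$ to prove the $\delta$-witness neither even nor odd, which is false in general; and the second disjunct alone breaks $\phi(0)$ in models where $\delta$ has small witnesses. Splitting $\phi$ into a robustly \p-inductive base (reachability from~$0$) together with the $\delta$-guarded part is what reconciles the two demands, and handling this interaction correctly is the crux of the argument.
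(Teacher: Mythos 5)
Your proof is correct and follows essentially the same route as the paper's: for (i)$\Rightarrow$(ii) the parameter-free definable least counterexample, which \p-inductiveness forbids from being even or odd (the paper phrases this direction contrapositively), and for (ii)$\Rightarrow$(i) reachability along coded doubling paths, relying on the same sequence-coding facts over~$\PAminus$ from Je\v{r}\'abek. The one genuine difference is your unconditional first disjunct rooting paths at~$0$: the paper instead guards every root by $\phi_0(x)=\bigl(\exi y{\delta(y)}\then\ex y{(\delta(y)\wedge x<y)}\bigr)$, and its assertion that $\PAminus\proves\phi(0)$ then tacitly requires $\delta$ to be normalized (say, by conjoining ``$x$ is neither even nor odd'') so that a unique witness is provably nonzero in every model --- exactly the subtlety you identify and avoid, so your variant is, if anything, slightly more careful than the paper's.
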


\begin{proof}
First, suppose \condref{cond:p+nind'=defelt/defelt}~fails.
Let $\phi(x)$ be any $\Lor$~formula that is not $<$-inductive.
Find \begin{math}
 c\in M\models\PAminus+
  \falt xc{\phi(x)}\wedge\neg\phi(c)
\end{math}.
If $c=0$, then $\PAminus\nproves\phi(0)$ and we are done.
So suppose $c\not=0$.
Notice $c=(\min x)(\neg\phi(x))$.
It is, therefore, a parameter-free definable element of~$M$.
The failure of~\condref{cond:p+nind'=defelt/defelt} then gives us $d\in M$
 such that $c=2d$ or $c=2d+1$.
By Lemma~\ref{lem:PA-}\partref{part:PA-/pred} and axiom~\axref{ax:PA-/<},
 we know $d<c$.
Hence $M\models\phi(d)$ by the minimality of~$c$.
This shows $\phi(x)$ is not \p-inductive.

Conversely, suppose~\condref{cond:p+nind'=defelt/defelt}~holds.
Let $c\in M\models\PAminus$
 in which $c$~is a non-even non-odd parameter-free definable element.
Find an $\Lor$~formula $\delta(x)$ that defines~$c$ in~$M$.
Define $\phi_0(x)$ to be
 \begin{equation*}
  \exi y{\delta(y)}\then\ex y{(\delta(y)\wedge x<y)},
 \end{equation*} and
 let $\phi(x)$ be
 \begin{equation*}
  \ex s{\ex\ell{\bigl(
   \phi_0((s)_0)\wedge(s)_\ell=x
   \wedge\falt i\ell{\bigl(
    (s)_{i+1}=2(s)_i\vee(s)_{i+1}=2(s)_i+1
   \bigr)}
  \bigr)}},
 \end{equation*}
 where $(s)_i=x$ is the $\Lor$~formula expressing
  ``the $i$th element in the sequence coded by~$s$ is~$x$''
   in our proof of Proposition~\ref{prop:p-ind-compare}.
Then $\phi(x)$ is \p-inductive by construction.
Notice $M\models\falt xc{\phi_0(x)}$.
So $M\models\falt xc{\phi(x)}$
 because $\PAminus\proves\fa x{(\phi_0(x)\then\phi(x))}$.
However, every element $x\in M$ satisfying~$\phi(x)$ must
 lie strictly below~$c$
 or be either even or odd.
Hence $M\models\neg\phi(c)$ by our choice of~$c$.
We thus know $\phi(x)$ is not $<$-inductive.
\end{proof}

\begin{rem}\label{rmk:bases}
All results in this paper about base-2 polynomial induction
 generalize to other bases.
Let us say an $\Lor$~formula~$\phi(x)$ is \defm{$(n+2)$-\p-inductive},
  where $n\in\IN$,
 if \begin{equation*}
  \PAminus\proves\phi(0)\wedge\fa x{\Bigl(
   \phi(x)\then\bigwwedge_{k<n+2}\phi\bigl((n+2)x+k\bigr)
  \Bigr)}.
 \end{equation*}
In this terminology,
 the \p-inductive formulas are precisely the $2$-\p-inductive formulas.
Let $m,n\in\IN$.
Then $n+2$ being a power of $m+2$ is a necessary and sufficient condition
 for every $(m+2)$-\p-inductive formula to be $(n+2)$-\p-inductive.
We omit the proof here.
\end{rem}

\section{Removing the order from the language}\label{sec:order}

Another aspect of our formulation of {\ITP} is that we search for an inductive formula in the same
language as the input sentence. In this section we consider
a restriction of the language of the inductive formula.
Since most of our arguments about~$\IOpen$
 depend on the presence (or, in fact, the quantifier-free definability)
  of the order~$<$ in~$\Lor$,
 it is natural to ask whether $\IOpen$ really becomes strictly weaker
  when the induction scheme is restricted
   to formulas in which $<$~does not appear.
It turns out that
 the induction scheme for quantifier-free formulas in this restricted language
  is already provable in the base theory~$\PAminus$.
According to Shepherdson~\cite[page~27]{art:sheph/rule}, this fact was first observed by M.~D. Gladstone.
A stronger result was established by Shepherdson~\cite[Theorem~2]{art:sheph/rule},
 who proved the same scheme from a weak subtheory of~$\PAminus$.
If one starts from~$\PAminus$,
 then one can actually prove the least number principle
   (also known as the scheme of strong induction,
    cf.~Proposition~\ref{prop:altPA}\itemref{item:altPA/'} and
        Theorem~\ref{thm:IOpen}\itemref{item:IOpen/'})
  for quantifier-free formulas in the restricted language.
This was pointed out by Richard Kaye in a conversation in April~2016.
With Kaye's permission, we include his proof here.
Many ideas in the proof were already present
 in Shepherdson's contribution~\cite{art:sheph/model}
  to Theorem~\ref{thm:IOpen}.

\begin{defi}
Denote the language $\{0,1,{+},{\times}\}$ for rings by \defm{$\Lr$}.
Define
 \begin{equation*}
  \IOpen(\Lr)=\PAminus+\{\ind_x\theta:
   \text{$\theta(x,\bar z)$ is a quantifier-free $\Lr$~formula}
  \}.
 \end{equation*}
Let \defm{$\LOpen(\Lr)$} be the theory axiomatized by~$\PAminus$ and
 the scheme
 \begin{equation*}
   \fa{\bar z}{\bigl(
    \fa y{\bigl(\falt xy{\theta(x,\bar z)\then\theta(y,\bar z)}\bigr)}
    \then\fa x{\theta(x,\bar z)}
   \bigr)},
  \end{equation*} where $\theta$ ranges over quantifier-free $\Lr$~formulas.
\end{defi}

The following elementary algebraic fact plays a key role in Kaye's proof.
\begin{lem}\label{lem:fin0}
Let $F$ be a field.
For every polynomial~$p(x,\bar a)$ in a single variable~$x$
  with coefficients $\bar a\in F$,
 the set
 \begin{math}
  \{x\in F:p(x,\bar a)=0\}
 \end{math}
 is either finite or equal to~$F$. \qed
\end{lem}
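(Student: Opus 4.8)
The plan is to split into two cases according to whether $p(x,\bar a)$ is the zero polynomial. If all the coefficients in $\bar a$ vanish, then $p(x,\bar a)=0$ for every $x\in F$, so the root set equals~$F$ and we land in the second alternative. Otherwise $p(x,\bar a)$ is a nonzero polynomial of some degree $d$, and I would show that it then has at most $d$ roots in~$F$, placing us in the first (finite) alternative.

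The heart of the argument is the factor theorem: if $b\in F$ is a root of a nonzero polynomial $q(x)$, then $x-b$ divides $q(x)$ in $F[x]$. This follows from dividing $q(x)$ by the monic linear polynomial $x-b$, which is available because $F$ is a field (indeed one only needs that $x-b$ is monic, so its leading coefficient is invertible); the remainder is a constant equal to $q(b)=0$. Writing $q(x)=(x-b)\,r(x)$ with $\deg r=\deg q-1$, I would then argue by induction on the degree $d$ of $p(x,\bar a)$. The base case $d=0$ is immediate, since a nonzero constant has no roots. For the inductive step, if $p$ has no root at all we are done; otherwise pick a root $b$, factor $p(x,\bar a)=(x-b)\,r(x)$, and note that any root $c\neq b$ of $p$ satisfies $(c-b)\,r(c)=0$, whence $r(c)=0$ because $F$, being a field, is an integral domain. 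Thus every root of $p$ other than $b$ is a root of $r$, and by the induction hypothesis $r$ has at most $d-1$ roots, so $p$ has at most $d$ roots.

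There is no genuine obstacle here: the only ingredients are the availability of polynomial division (for which being a field is more than sufficient) and the absence of zero divisors in $F$. The one point that deserves a little care is ensuring that the degree really drops when one factors out $x-b$, so that the induction is well-founded; this is guaranteed precisely because $x-b$ has invertible leading coefficient, forcing the quotient to have degree exactly one less than $p$. This is of course a classical fact, and I would expect it to be stated without a detailed proof in the paper.
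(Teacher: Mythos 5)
Your proof is correct: the dichotomy between the zero polynomial (root set all of~$F$) and a nonzero polynomial of degree~$d$ (at most~$d$ roots, via the factor theorem, polynomial division by the monic $x-b$, and the absence of zero divisors) is exactly the classical fact asserted. The paper gives no proof at all --- the lemma is stated as an elementary algebraic fact with the end-of-proof mark attached directly to its statement --- so your argument is precisely the standard one it leaves implicit, and as you anticipated, no more detail was expected there.
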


\begin{thm}[Kaye]\label{thm:PAminus-IOpen_r}
$\PAminus\proves\LOpen(\Lr)$.
\end{thm}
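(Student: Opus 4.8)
The plan is to argue semantically: fix an arbitrary $M\models\PAminus$ together with a quantifier-free $\Lr$~formula $\theta(x,\bar z)$, and show that $M$ satisfies the corresponding instance of the least number principle. As $M$ is arbitrary, this gives $\PAminus\proves\LOpen(\Lr)$. The decisive consequence of removing $<$ from the language is that every atomic $\Lr$~formula is an equation between $\Lr$~terms, hence a polynomial equation, so I can first pass from $M$ to a field. Every model of $\PAminus$ is the non-negative part of a discretely ordered ring $R$ (its difference ring); being ordered, $R$ is an integral domain, so it has a fraction field $F=\mathrm{Frac}(R)$, and $M$ embeds into $R$ and hence into $F$ by a map preserving $0,1,+,\times$. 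Because $\Lr$ has no relation symbol besides equality, this embedding both preserves and reflects the truth of atomic $\Lr$~formulas; this is exactly the feature that would fail in the presence of $<$.

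Next I analyse solution sets. Fix parameters $\bar a\in M$. Each atomic subformula of $\theta$ has the form $t_1=t_2$, which over $R$ reads $p(x,\bar a)=0$ for a polynomial $p\in R[x]$ whose degree in $x$ is bounded by a standard number $d$ determined by the term. Regarding $p(x,\bar a)$ as an element of $F[x]$, Lemma~\ref{lem:fin0} gives the dichotomy that either $p(x,\bar a)$ is identically zero, so the equation holds for every $x\in F$, or its zero set is finite; in the latter case the standard degree bound shows it has at most $d$ roots in $F$, and hence at most $d$ roots in $M$. Taking $C\subseteq M$ to be the union, over the finitely many atomic subformulas, of these root sets for the polynomials that are not identically zero, I obtain a set of size at most a fixed standard bound $D$ depending only on $\theta$. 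For every $x\in M\setminus C$ the truth value in $M$ of each atomic subformula depends only on whether the corresponding polynomial is identically zero over $F$, and so is independent of $x$; therefore $\theta(x,\bar a)$ is constantly true or constantly false throughout $M\setminus C$.

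It remains to extract the least number principle. Suppose $U=\{x\in M:M\models\neg\theta(x,\bar a)\}$ is nonempty; I must exhibit $y\in U$ with $\theta(x,\bar a)$ for all $x<y$. If $\theta$ is constantly true off $C$, then $U\subseteq C$ is externally finite, so it has a $<^M$-least element $y_0$, and minimality forces $x\notin U$ for every $x<^M y_0$. If $\theta$ is constantly false off $C$, then $M\setminus U\subseteq C$ has at most $D$ elements, so among the $D+1$ pairwise distinct elements $0,1,\dots,D$ of $M$ at least one lies in $U$; let $y_0$ be the least element of $\{0,1,\dots,D\}$ that belongs to $U$. Since $y_0\leq D$, any $x<^M y_0$ satisfies $x\leq D$, whence $x\in\{0,\dots,D\}$ by Lemma~\ref{lem:PA-}\partref{part:PA-/ee}; as $y_0$ is least in $U\cap\{0,\dots,D\}$, such an $x$ is not in $U$, i.e.\ $\theta(x,\bar a)$ holds. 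In either case $y_0$ witnesses the required instance in $M$.

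The main obstacle is not the algebra but the bookkeeping needed to pin down the witnessing least element \emph{inside} $M$: a priori $C$ is finite only in the external sense, and $M$ may be wildly nonstandard, so one cannot minimise over $U$ by appealing to induction (which is precisely what is not yet available). What rescues the argument is that the root bound $D$ is a single standard number fixed by the syntactic shape of $\theta$, confining the least counterexample either to the finite set $C$ or to the standard-indexed segment $\{0,\dots,D\}$, and Lemma~\ref{lem:PA-}\partref{part:PA-/ee} certifies that this segment contains no unexpected elements of $M$. I expect this tethering of external finiteness to a formula-dependent standard bound to be the one delicate point, everything else reducing to the preservation of atomic $\Lr$~formulas under $M\hookrightarrow F$ and the elementary root count underlying Lemma~\ref{lem:fin0}.
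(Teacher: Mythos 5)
Your proof is correct, and it rests on the same two pillars as Kaye's proof in the paper --- the embedding of $M\models\PAminus$ into the fraction field $F$ of its difference ring, and the root dichotomy of Lemma~\ref{lem:fin0} --- but the architecture built on those pillars is genuinely different. The paper first normalizes $\neg\theta$ into a single condition $p(x,\bar a)=0\wedge q(x,\bar a)\not=0$ (disjunctive normal form, then a sum of squares for the equations and a product for the inequations, exploiting that the difference ring is an ordered domain); it then assumes the hypothesis of the least number principle and uses it \emph{twice}: once to build an infinite strictly decreasing sequence of counterexamples, forcing $p$ to be the zero polynomial by Lemma~\ref{lem:fin0}, and once in an external induction along the standard numerals, forcing $q$ to vanish identically, so that no counterexample exists. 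You skip the normalization, treat each atomic subformula separately, and prove an explicit structural dichotomy: every subset of $M$ defined by a quantifier-free $\Lr$~formula either is contained in a set $C$ of externally bounded, standard size $D$ (read off from degree bounds) or contains the complement of such a set; the least number principle then falls out combinatorially in contrapositive form, the least counterexample being located inside the finite set $C$ in one case and among the numerals $0,\dots,D$ (via Lemma~\ref{lem:PA-}\partref{part:PA-/ee}) in the other. Your route buys more: it establishes the stronger minimum principle --- every nonempty quantifier-free $\Lr$-definable subset of $M$ actually has a least element --- and it isolates the finite-or-cofinite structure of such sets, which is arguably the real content of the theorem. The paper's route needs only bare finiteness from Lemma~\ref{lem:fin0} (no explicit root-count bookkeeping, hence no bound $D$), and its sum-of-squares/product collapse keeps just two polynomials in play. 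One small inaccuracy in your motivation, harmless to the proof: you say preservation and reflection of atomic formulas under $M\hookrightarrow F$ is ``exactly the feature that would fail in the presence of $<$''; in fact $F$ can be ordered so that the embedding preserves and reflects $<$ as well --- what fails when $<$ is in the language is the dichotomy itself, since an inequality can define an infinite, co-infinite subset of $M$.
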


\begin{proof}
Let $M\models\PAminus$.
We will show $M\models\LOpen(\Lr)$.
Fix $\bar a\in M$.
Consider a quantifier-free $\Lr$~formula $\theta(x,\bar z)$.
Put $\neg\theta(x,\bar z)$ in disjunctive normal form
 \begin{equation*}
  \bigvvee_{i\leq m} \bigwwedge_{j\leq n} \bigl(
   p_{ij}(x,\bar z)=0\wedge q_{ij}(x,\bar z)\not=0
  \bigr),
 \end{equation*}
 where the $p_{ij}$'s and the $q_{ij}$'s are polynomials over~$\ZZ$.
It suffices to show that for each $i\leq m$,
 if some $x\in M$ satisfies
 \begin{math}
  \bigwwedge_{j\leq n} \bigl(
   p_{ij}(x,\bar a)=0\wedge q_{ij}(x,\bar a)\not=0
  \bigr)
 \end{math}, then there is a least such~$x$.
So for notational convenience,
 let us assume $m=0$, so that $\neg\theta(x,\bar z)$ becomes
  \begin{equation*}
   \bigwwedge_{j\leq n} \bigl(
    p_{0j}(x,\bar z)=0\wedge q_{0j}(x,\bar z)\not=0
   \bigr).
  \end{equation*}
Since $M$~is the non-negative part of a discretely ordered ring,
 we can further simplify the formula~$\neg\theta(x,\bar z)$ to
 \begin{equation*}
  p(x,\bar z)=0\wedge q(x,\bar z)\not=0
 \end{equation*}
 by setting $p(x,\bar z)=\sum_{j\leq n} (p_{0j}(x,\bar z))^2$ and
            $q(x,\bar z)=\prod_{j\leq n} q_{0j}(x,\bar z)$.

Assume
 \begin{math}
  M\models\fa y{\bigl(
   \falt xy{\theta(x,\bar a)\then\theta(y,\bar a)}
  \bigr)}
 \end{math}.
We will construct a strictly decreasing sequence $(c_\ell)_{\ell\in\IN}$
 of elements of~$M$ by recursion as follows.
If $M\models\fa x{\theta(x,\bar a)}$, then we are done already.
So suppose not, and take an arbitrary $c_0\in M\models\neg\theta(c_0,\bar a)$.
Next, suppose $c_\ell$ is found such that $M\models\neg\theta(c_\ell,\bar a)$.
Using our assumption, pick any $c_{\ell+1}<c_\ell$ in~$M$
 with $M\models\neg\theta(c_{\ell+1},\bar a)$ and carry on.

The result of this construction is an infinite sequence $c_0>c_1>c_2>\cdots$
  of elements of~$M$
 such that $p(c_\ell,\bar a)=0$ and $q(c_\ell,\bar a)\not=0$ for all $\ell\in\IN$.
Lemma~\ref{lem:fin0} then implies $p(x,\bar a)$ is the zero polynomial.
Hence, actually
 $M\models\fa x{\bigl(\neg\theta(x,\bar a)\nsc q(x,\bar a)\not=0\bigr)}$.
By an external induction using our assumption in the previous paragraph,
 one sees that $q(k,\bar a)=0$ for all $k\in\IN$.
Applying Lemma~\ref{lem:fin0} again,
 we conclude $M\models\fa x{q(x,\bar a)=0}$, as required.
\end{proof}

The obvious argument~\cite[Lemma~I.2.8]{book:hajek+pudlak}
 shows $\LOpen(\Lr)\proves\IOpen(\Lr)$.
Therefore, Lemma~\ref{lem:Div} and Theorem~\ref{thm:PAminus-IOpen_r}
 tell us that $\IOpen(\Lr)$ is strictly weaker than~$\IOpen$.
In terms of computational problems we obtain that
\cprob{\ITPOpenLR}
{A sentence $\sigma$ provable in $\PA$}
{A quantifier-free $\Lr$ formula $\theta(x,\bar z)$ s.t.\ $\PAminus+\ind_x\theta \proves \sigma$}
is defined on strictly fewer inputs than
\cprob{\ITPOpen}
{A sentence $\sigma$ provable in $\PA$}
{A quantifier-free $\Lor$~formula $\theta(x,\bar z)$ s.t.\ $\PAminus+\ind_x\theta \proves \sigma$}

To recover the strength of~$\IOpen$,
 it suffices to have induction for a single inequality.

 \newpage
\begin{thm}[Shepherdson]\label{thm.IAtomic=IOpen}
\begin{math}
 \PAminus+\{\ind_x \theta:\text{$\theta(x,\bar z)$ is an atomic $\Lor$~formula}\}
 \proves \IOpen
\end{math}.
\end{thm}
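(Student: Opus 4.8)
The plan is to argue model-theoretically. Fix an arbitrary $M\models\PAminus+\{\ind_x\alpha:\text{$\alpha$ is an atomic $\Lor$~formula}\}$ and aim to show $M\models\IOpen$. By Theorem~\ref{thm:IOpen} it is enough to verify, for each quantifier-free $\theta(x,\bar z)$ and each choice of parameters, the $<$-induction instance $\ind^<_x\theta$; equivalently, that every nonempty quantifier-free definable subset of $M$ has a least element. Throughout I would pass to the discretely ordered ring $R=M\cup(-M)$ and its ordered field of fractions, so that Lemma~\ref{lem:fin0} becomes available: a one-variable polynomial over this field either has finitely many zeros or is identically zero.

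The engine of the proof is a \emph{jump principle} extracted from atomic induction read contrapositively. Because $M$ is discrete, $u\leq v\nsc u<v+1$, so every $\leq$-statement between terms is equivalent to an atomic $<$-statement. Hence, whenever an atomic $\alpha(x)$ satisfies $\alpha(0)$ but fails at some point, the axiom $\ind_x\alpha$ forces a boundary: there is an $x$ with $\alpha(x)\wedge\neg\alpha(x+1)$. First I would use this to reprove division with remainder from atomic induction alone: applying the jump principle to $\alpha(q)\equiv(qd<x+1)$ — which holds at $q=0$ and fails at $q=x+1$ — yields $q$ with $qd\leq x<(q+1)d$, the content of the first part of Lemma~\ref{lem:Div}. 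In particular every element of $M$ is even or odd. This is exactly the kind of ``integer part'' that is absent from $\IOpen(\Lr)$, and the jump principle is what manufactures it.

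Next I would run a cell decomposition in the style of Kaye's proof of Theorem~\ref{thm:PAminus-IOpen_r}. Writing $\neg\theta$ in disjunctive normal form and using trichotomy to remove negated atoms, I may assume each disjunct is a conjunction of atoms of the two shapes $r(x,\bar z)=0$ and $s(x,\bar z)>0$, where $r,s$ are polynomials over $\ZZ$. The least witness of $\neg\theta$ is the minimum, over the finitely many nonempty disjuncts, of their least witnesses, so it suffices to find a least witness for a single cell. If the cell contains an equation $r=0$ with $r$ not identically zero, then its witnesses lie among the at most $\deg r$ zeros of $r(x,\bar a)$, a finite set, which has a least element by trichotomy — here the order plays no essential role and the argument is as in the orderless case.

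The remaining, and genuinely harder, case is an inequality cell $\bigwwedge_l s_l(x,\bar a)>0$ in which every equation present is trivial. This is the step with no analogue in Theorem~\ref{thm:PAminus-IOpen_r}, and it is the reason $\IOpen(\Lr)$ is strictly weaker than $\IOpen$: a positivity region can be infinite, so the descending-sequence argument that settles the orderless case no longer yields a contradiction. Here I would again invoke the jump principle: the complement $\{x:s_l(x,\bar a)\leq0\}$ of each factor is atomic-definable by discreteness, so each of the finitely many (at most $\deg s_l$, by Lemma~\ref{lem:fin0} applied in the fraction field) sign transitions of $s_l$ is located at an honest boundary element of $M$. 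The cell is therefore a finite union of discrete intervals whose left endpoints lie among these boundary points together with $0$, and its least witness is the least such endpoint satisfying all the inequalities. I expect the main obstacle to be precisely this simultaneous control of several polynomials' sign patterns over a discretely ordered domain that need not be real closed: turning the finite root data supplied by Lemma~\ref{lem:fin0} into boundary elements that actually exist in $M$, which is exactly what the jump principle — and hence atomic induction — delivers.
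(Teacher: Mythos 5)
Your preparatory steps are sound: the reduction to the least number principle for quantifier-free formulas via Theorem~\ref{thm:IOpen}, the jump principle (the contrapositive of an atomic induction axiom), division with remainder obtained by inducting on the atomic formula $qd<x+1$, the DNF decomposition into cells, and the treatment of cells containing a non-trivial equation are all correct. The gap is in the only hard case, the purely inequational cell, where you assert that ``each of the finitely many sign transitions of $s_l$ is located at an honest boundary element of $M$\,\dots\ which is exactly what the jump principle\,\dots\ delivers.'' The jump principle does not deliver this; it is essentially the theorem itself. The jump principle only says that \emph{some} $m$ satisfies $\alpha(m)\wedge\neg\alpha(m+1)$, with no control over where $m$ lies: an upward jump anchored at a point where $s_l\leq 0$ may overshoot arbitrarily far to the right, and a reflected (downward) jump may land among the negative elements of $R=M\cup(-M)$. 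Consequently, when $s_l$ has several sign changes, the witness produced may bracket a different transition than the one bounding the leftmost component of your cell, and the transition you actually need may a priori occur \emph{across a gap} of $M$: between $u<v$ with $s_l(u)\leq 0<s_l(v)$ there need not be any pair $(m,m+1)$ with $s_l(m)\leq 0<s_l(m+1)$ and $u\leq m<v$. That witnessing is not free is shown by $\ZZ[X]^+\models\PAminus$, where the transition of the atomic formula $x\times x<X+1$ occurs across the gap above $\IN$ and is bracketed by no such pair; atomic induction repairs \emph{this} instance only because $x^2$ is monotone, so the jump witness has nowhere else to land. For a non-monotone $s_l$ nothing in your argument forces the relevant transition to be witnessed, so your concluding claim --- that the least witness of the cell is the least of finitely many candidate endpoints --- is unproved: were the leftmost component of the cell to lack a least element, your candidate set would still be finite and nonempty, and the procedure would simply return a wrong answer. (A smaller slip: Lemma~\ref{lem:fin0} bounds the number of zeros, not of sign changes; since the fraction field need not be real closed, even the finiteness of sign changes requires embedding it in its real closure and invoking the intermediate value property there.)

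What is missing is a mechanism forcing the jump witness to be the boundary you want, and that is where the real work of Shepherdson's theorem lies (the paper itself gives no proof, only the citation to page~83f of Shepherdson, so none of this is routine). One workable repair is an induction on the degree of the polynomial: the roots of the derivative $s_l'$ have lower degree, so by the induction hypothesis they possess integer parts in $M$; by Rolle's theorem in the real closure these delimit regions on which $s_l$ is monotone and which contain at most one root of $s_l$; anchoring the jump principle at the first element of $M$ inside such a region, the witness cannot overshoot, since an up-flip (respectively down-flip) of $s_l$ above the anchor must bracket an up-crossing (respectively down-crossing) root, and above the anchor there is exactly one candidate. With division with remainder as the degree-one base case, this shows that every nonnegative root of every polynomial over $R$ has an integer part in $M$, which is precisely the statement your ``boundary points'' step presupposes and from which your final paragraph does follow. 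Without an argument of this kind, the proposal assumes the crux rather than proving it.
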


\begin{proof}
See page~83f in Shepherdson~\cite{art:sheph/model}.
\end{proof}

\section{Conclusion}

What do we learn from these results for the automation of inductive theorem proving?

In comparison with successor induction and the uniform proof shape {\ITPU}, both the use of $<$-induction
and that of the non-uniform proof shape {\ITP} come at the price of introducing a universal quantifier to the goals.
However, both also come with a significant pay-off:
there are strictly more $<$-inductive formulas than $(n+1)$-inductive formulas for any $n\in\IN$.
Similarly,
the uniform proof shape {\ITPU} is not complete while
the non-uniform proof shape {\ITP} is complete (with respect to $\PA$).
Therefore, assuming the theorem proving environment is able to deal with universal quantifiers in the assumptions of a goal efficiently,
these results clearly indicate to prefer $<$-induction over $(n+1)$-induction for any $n\in\IN$ and the {\ITP} proof
shape over the {\ITPU} proof shape.

As Theorem~\ref{thm:Th(PA)=indn} shows, the non-uniform proof shape {\ITP} can be restricted to the non-uniform equivalence
proof shape {\ITPEq} while preserving completeness, thus strongly reducing the search space. Whether this can be exploited
for practical applications in inductive theorem proving is unclear to the authors. What would be needed is a practically reasonable procedure for generating universal formulas which are $\PAminus$-equivalent
to a given sentence. To the best of the authors' knowledge, no such procedure has been devised for use in inductive theorem proving yet.
On a more theoretical note, Theorem~\ref{thm:Th(PA)=indn} shows that the essential difficulty of inductive theorem proving
does {\em not} lie in finding an inductive formula which is {\em stronger} than the goal (in the sense that it implies the goal over $\PAminus$);
it is sufficient to find an inductive formula as strong as the goal. Instead, the essential difficulty lies in the non-analyticity
of the inductive formula as discussed in Section~\ref{sec.nec_non_an}.

The results of Section~\ref{sec:order} illustrate the importance of the choice of the language for
the inductive formulas: allowing a single predicate symbol defined using a single quantifier can increase
the strength of the theory considerably.

From a broader perspective, we believe that this paper demonstrates the potential of the use
of methods and results from mathematical logic, and in particular, from theories of arithmetic,
in inductive theorem proving. In our opinion, a good illustration
of this potential is provided by Proposition~\ref{prop:indrel-strict}. The construction of theorem-specific
induction rules has a long history in inductive theorem
proving~\cite{Boyer79Computational,Bundy89Rational,Stevens88Rational,Bundy05Rippling}.
This approach is typically justified
empirically by demonstrating, in the context of a certain algorithm or implementation, that the option to switch to another
induction rule allows to solve more goals than before (by that particular algorithm or by a certain implementation within
a certain timeout). Proposition~\ref{prop:indrel-strict} provides a much more solid foundation to this approach:
if there are strictly more formulas which are inductive in sense X than in sense Y
an algorithm for inductive theorem proving can get trapped in a situation where
the induction formula currently under consideration allows a proof using rule X but does not allow a proof using rule Y,
{\em regardless} of which algorithm or implementation is used, since there simply is no proof.

All of the results in this paper are restricted to arithmetical theories in the strict sense, i.e., theories
about the natural numbers. In inductive theorem proving, one typically works in a broader syntactic setting
of many-sorted first-order logic with inductive data types such as lists, trees, etc.
Since many results about theories of arithmetic depend on results about numbers, rings, etc., it is not straightforward
to extend them to this more general syntactic setting. Coding is not an option since it introduces an
amount of syntactic complication that is not realistic in inductive theorem proving.
However, for the continuation of the line of work in this paper we consider
such an extension of the model theory of arithmetic to be of high importance. One prototypical problem along this line
is to classify (some of) the benchmark examples from the TIP suite~\cite{Claessen15TIP} according
to the weakest induction schemes (e.g.,~atomic, quantifier-free) required to prove them.

We believe, as argued in Section~\ref{sec:model}, that the non-analyticity of induction axioms is a key aspect
of inductive theorem proving, so we have considered the set of inductive formulas as the search space
in our model. In the case of theorem proving in pure first-order logic, the search space is quite well understood theoretically
and consequently we have very powerful tools at our disposal for navigating in it such as (most general)
unification, subsumption, reduction orderings, (resolution) refinements, etc.
At present, we do not seem to have a similarly complete theoretical understanding of the set of inductive
formulas and how they relate to a given goal. We believe that further work in this direction is crucial
for advancing the state of the art in inductive theorem proving.

\section{Acknowledgements}
The authors are very grateful to Emil Je\v r\'abek for pointing out
 some results in the literature~\cite{inproc:weakness-sbdd-pind,art:sheph/model,art:sheph/rule}
  that answer several questions posed in an earlier version of this paper.
The authors would like to thank
 Zofia Adamowicz, Alan Bundy, Richard Kaye, and Leszek Ko\l odziejczyk
  for illuminating discussions around the topics of this paper.
The authors would also like to thank the anonymous reviewers
 whose insightful feedback has led to a considerable improvement
  of this paper.

\bibliographystyle{alpha}
\bibliography{induction}

\newcommand{\etalchar}[1]{$^{#1}$}
\begin{thebibliography}{BvHH{\etalchar{+}}89}

\bibitem[AdBO09]{Apt10Verification}
Krzysztof~R. Apt, Frank~S. de~Boer, and Ernst{-}R{\"u}diger Olderog.
\newblock {\em Verification of Sequential and Concurrent Programs}.
\newblock Texts in Computer Science. Springer-Verlag, Dordrecht, third,
  extended edition, 2009.
\newblock Reprinted with corrections, 2010.

\bibitem[BBHI05]{Bundy05Rippling}
Alan Bundy, David Basin, Dieter Hutter, and Andrew Ireland.
\newblock {\em Rippling: Meta-Level Guidance for Mathematical Reasoning},
  volume~56 of {\em Cambridge Tracts in Theoretical Computer Science}.
\newblock Cambridge University Press, Cambridge, 2005.

\bibitem[BGP12]{Brotherston12Generic}
James Brotherston, Nikos Gorogiannis, and Rasmus~L. Petersen.
\newblock A generic cyclic theorem prover.
\newblock In Ranjit Jhala and Atsushi Igarashi, editors, {\em Programming
  Languages and Systems}, volume 7705 of {\em Lecture Notes in Computer
  Science}, pages 350--367, Berlin, 2012. Springer-Verlag.

\bibitem[BM79]{Boyer79Computational}
Robert~S. Boyer and J.~Strother Moore.
\newblock {\em A Computational Logic}.
\newblock {ACM} monograph series. Academic Press, New York, 1979.

\bibitem[BM07]{Bradley07Calculus}
Aaron~R. Bradley and Zohar Manna.
\newblock {\em The Calculus of Computation: Decision Procedures with
  Applications to Verification}.
\newblock Springer-Verlag, Berlin, 2007.

\bibitem[BS11]{Brotherston11Sequent}
James Brotherston and Alex Simpson.
\newblock Sequent calculi for induction and infinite descent.
\newblock {\em Journal of Logic and Computation}, 21(6):1177--1216, December
  2011.

\bibitem[Bus86]{book:bussPhD}
Samuel~R. Buss.
\newblock {\em Bounded Arithmetic}, volume~3 of {\em Studies in Proof Theory,
  Lecture Notes}.
\newblock Bibliopolis, Naples, 1986.
\newblock Revision of the 1985~Ph.D. thesis.

\bibitem[Bus98a]{Buss98Introduction}
Samuel~R. Buss.
\newblock {A}n {I}ntroduction to {P}roof {T}heory.
\newblock In {\em Handbook of Proof Theory\/} \cite{book:hb-pfthy}, pages
  2--78.

\bibitem[Bus98b]{Buss98First}
Samuel~R. Buss.
\newblock {First-Order Proof Theory of Arithmetic}.
\newblock In {\em Handbook of Proof Theory\/} \cite{book:hb-pfthy}, pages
  79--147.

\bibitem[Bus98c]{book:hb-pfthy}
Samuel~R. Buss, editor.
\newblock {\em Handbook of Proof Theory}, volume 137 of {\em Studies in Logic
  and the Foundations of Mathematics}.
\newblock Elsevier, Amsterdam, 1998.

\bibitem[BvHH{\etalchar{+}}89]{Bundy89Rational}
Alan Bundy, Frank van Harmelen, Jane Hesketh, Alan Smaill, and Andrew Stevens.
\newblock A rational reconstruction and extension of recursion analysis.
\newblock In N.~S. Sridharan, editor, {\em Proceedings of the Eleventh
  International Joint Conference on Artificial Intelligence ({IJCAI}-89)},
  volume~1, pages 359--365, San Francisco, 1989. Morgan Kaufmann Publishers.

\bibitem[CJRS13]{Claessen13Automating}
Koen Claessen, Moa Johansson, Dan Ros{\'e}n, and Nicholas Smallbone.
\newblock Automating inductive proofs using theory exploration.
\newblock In Maria~Paola Bonacina, editor, {\em Automated Deduction ---
  CADE-24}, volume 7898 of {\em Lecture Notes in Computer Science}, pages
  392--406, Berlin, 2013. Springer-Verlag.

\bibitem[CJRS15]{Claessen15TIP}
Koen Claessen, Moa Johansson, Dan Ros{\'e}n, and Nicholas Smallbone.
\newblock {TIP}: Tons of inductive problems.
\newblock In Manfred Kerber, Jacques Carette, Cezary Kaliszyk, Florian Rabe,
  and Volker Sorge, editors, {\em Intelligent Computer Mathematics}, volume
  9150 of {\em Lecture Notes in Computer Science}, pages 333--337, Cham, 2015.
  Springer-Verlag.

\bibitem[Com01]{Comon01Inductionless}
Hubert Comon.
\newblock Inductionless induction.
\newblock In Alan Robinson and Andrei Voronkov, editors, {\em Handbook of
  Automated Reasoning}, volume~I, chapter~14, pages 913--962. Elsevier,
  Amsterdam, 2001.

\bibitem[Cru15]{Cruanes15Extending}
Simon Cruanes.
\newblock {\em Extending Superposition with Integer Arithmetic, Structural
  Induction, and Beyond}.
\newblock PhD thesis, {\'{E}}cole Polytechnique, Palaiseau, France, 2015.

\bibitem[DKR11]{Donaldson11Automatic}
Alastair~F. Donaldson, Daniel Kroening, and Philipp R{\"u}mmer.
\newblock {Automatic analysis of DMA races using model checking and
  $k$-induction}.
\newblock {\em Formal Methods in System Design}, 39(1):83--113, 2011.

\bibitem[EHR{\etalchar{+}}16]{Ebner16System}
Gabriel Ebner, Stefan Hetzl, Giselle Reis, Martin Riener, Simon Wolfsteiner,
  and Sebastian Zivota.
\newblock System description: {GAPT} 2.0.
\newblock In Nicola Olivetti and Ashish Tiwari, editors, {\em 8th International
  Joint Conference on Automated Reasoning ({IJCAR})}, volume 9706 of {\em
  Lecture Notes in Computer Science}, pages 293--301. Springer, 2016.

\bibitem[Gen54]{art:combine-indn}
Gerhard Gentzen.
\newblock Zusammenfassung von mehreren vollst{\"a}ndigen {I}nduktionen zu einer
  einzigen.
\newblock {\em Archiv f{\"u}r mathematische Logik und Grundlagenforschung},
  2(1):1--3, January 1954.

\bibitem[HP93]{book:hajek+pudlak}
Petr H{\'a}jek and Pavel Pudl{\'a}k.
\newblock {\em Metamathematics of First-Order Arithmetic}.
\newblock Perspectives in Mathematical Logic. Springer-Verlag, Berlin, 1993.

\bibitem[IB96]{Ireland96Productive}
Andrew Ireland and Alan Bundy.
\newblock {Productive Use of Failure in Inductive Proof}.
\newblock {\em Journal of Automated Reasoning}, 16(1--2):79--111, 1996.

\bibitem[Je{\v r}12]{art:jerabek/PAminus}
Emil Je{\v r}{\'a}bek.
\newblock Sequence coding without induction.
\newblock {\em Mathematical Logic Quarterly}, 58(3):244--248, May 2012.

\bibitem[Joh93]{inproc:weakness-sbdd-pind}
Jan Johannsen.
\newblock On the weakness of sharply bounded polynomial induction.
\newblock In Georg Gottlob, Alexander Leitsch, and Daniele Mundici, editors,
  {\em Computational Logic and Proof Theory}, volume 713 of {\em Lecture Notes
  in Computer Science}, pages 223--230, Berlin, 1993. Springer-Verlag.

\bibitem[Joo15]{inproc:Tjump=prov}
Joost~J. Joosten.
\newblock Turing jumps through provability.
\newblock In Arnold Beckmann, Victor Mitrana, and Mariya Soskova, editors, {\em
  Evolving Computability}, volume 9136 of {\em Lecture Notes in Computer
  Science}, pages 216--225, Cham, 2015. Springer-Verlag.

\bibitem[Kay90]{art:IDio}
Richard Kaye.
\newblock Diophantine induction.
\newblock {\em Annals of Pure and Applied Logic}, 46(1):1--40, January 1990.

\bibitem[Kay91]{book:modelPA}
Richard Kaye.
\newblock {\em Models of Peano Arithmetic}, volume~15 of {\em Oxford Logic
  Guides}.
\newblock Clarendon Press, Oxford, 1991.

\bibitem[KP13]{Kersani13Combining}
Abdelkader Kersani and Nicolas Peltier.
\newblock Combining superposition and induction: {A} practical realization.
\newblock In Pascal Fontaine, Christophe Ringeissen, and Renate~A. Schmidt,
  editors, {\em Frontiers of Combining Systems}, volume 8152 of {\em Lecture
  Notes in Computer Science}, pages 7--22, Berlin, 2013. Springer-Verlag.

\bibitem[PB12]{Paulson12Three}
Lawrence~C. Paulson and Jasmin~Christian Blanchette.
\newblock {Three years of experience with Sledgehammer, a Practical Link
  Between Automatic and Interactive Theorem Provers}.
\newblock In Geoff Sutcliffe, Stephan Schulz, and Eugenia Ternovska, editors,
  {\em The 8th International Workshop on the Implementation of Logics (IWIL),
  2010}, volume~2 of {\em EPiC Series in Computing}, pages 1--11. EasyChair,
  2012.

\bibitem[She64]{art:sheph/model}
John~C. Shepherdson.
\newblock A non-standard model for a free variable fragment of number theory.
\newblock {\em Bulletin de l'Acad{\'e}mie Polonaise des Sciences. S{\'e}rie des
  Sciences Math{\'e}matiques, Astronomiques et Physiques}, XII(2):79--86, 1964.

\bibitem[She67]{art:sheph/rule}
John~C. Shepherdson.
\newblock The rule of induction in the three variable arithmetic based on $+$
  and~$\times$.
\newblock {\em Annales de la Facult{\'e} des Sciences de l'Universit{\'e} de
  Clermont. Math{\'e}matiques}, 35(4):25--31, 1967.

\bibitem[SSS00]{Sheeran00Checking}
Mary Sheeran, Satnam Singh, and Gunnar St{\aa}lmarck.
\newblock {Checking Safety Properties Using Induction and a SAT-Solver}.
\newblock In Warren A.~Hunt Jr. and Steven~D. Johnson, editors, {\em Third
  International Conference on Formal Methods in Computer-Aided Design (FMCAD)},
  volume 1954 of {\em Lecture Notes in Computer Science}, pages 108--125.
  Springer, 2000.

\bibitem[Ste88]{Stevens88Rational}
Andrew Stevens.
\newblock A rational reconstruction of {Boyer} and {Moore}'s technique for
  constructing induction formulas.
\newblock In Yves Kodratoff, editor, {\em {ECAI}~88}, pages 565--570, London,
  1988. Pitman Publishing.

\bibitem[Sut16]{Sutcliffe16CASC}
Geoff Sutcliffe.
\newblock {The CADE ATP System Competition - CASC}.
\newblock {\em AI Magazine}, 37(2):99--101, 2016.

\bibitem[Vis05]{art:FaithFalsity}
Albert Visser.
\newblock Faith {\&} falsity.
\newblock {\em Annals of Pure and Applied Logic}, 131(1--3):103--131, January
  2005.

\bibitem[Vis14]{art:corto-basso}
Albert Visser.
\newblock Peano {Corto} and {Peano} {Basso}: A study of local induction in the
  context of weak theories.
\newblock {\em Mathematical Logic Quarterly}, 60(1--2):92--117, February 2014.

\bibitem[Wal92]{inproc:computindax}
Christoph Walther.
\newblock Computing induction axioms.
\newblock In Andrei Voronkov, editor, {\em Logic Programming and Automated
  Reasoning}, volume 624 of {\em Lecture Notes in Computer Science}, pages
  381--392, Berlin, 1992. Springer-Verlag.

\bibitem[WP87]{art:indn_bddarith}
Alex~J. Wilkie and Jeff~B. Paris.
\newblock On the scheme of induction for bounded arithmetic formulas.
\newblock {\em Annals of Pure and Applied Logic}, 35:261--302, 1987.

\bibitem[WW]{WandXXAutomatic}
Daniel Wand and Christoph Weidenbach.
\newblock Automatic induction inside superposition.
\newblock Available at \url{http://people.mpi-inf.mpg.de/~dwand/datasup/d.pdf}.

\end{thebibliography}

\end{document}